\newtheorem{theorem}{Theorem}[section]
\newtheorem{lemma}[theorem]{Lemma}
\newtheorem{proposition}[theorem]{Proposition}
\newtheorem{corollary}[theorem]{Corollary}
\newtheorem{definition}[theorem]{Definition}
\newtheorem{claim}{Claim}
 \newtheorem{remark}{Remark}
\renewcommand{\Pr}{\mathop{\bf Pr\/}}
\newcommand{\E}{\mathop{\bf E\/}}
\newcommand{\SP}{\mathrm{SP}}
\newcommand{\rev}{\mathrm{Rev}}
\newcommand{\reg}{\mathrm{Reg}}
\newcommand{\nats}{\mathbb N}
\newcommand{\eps}{\epsilon}
\newcommand{\calE}{\mathcal{E}}
\def\wt{\widetilde}
\def\wh{\widehat}
\def\eps{\varepsilon}
\def\tildeb{\smash{\tilde{b}}}
\newcommand{\nf}{\nicefrac}
\title{Randomized Truthful Auctions with Learning Agents}
\author{%
 Gagan Aggarwal \\
  Google Research\\
  \texttt{gagana@google.com} \\
  \And
  Anupam Gupta \\
  New York University, Google Research\\
  \texttt{anupam.g@nyu.edu} \\
  \And
  Andres Perlroth \\
  Google Research\\
  \texttt{perlroth@google.com} \\
  \And
  Grigoris Velegkas\thanks{Part of the work was done while the author was a research intern at Google Research in Mountain View.} \\
  Yale University\\
  \texttt{grigoris.velegkas@yale.edu} \\
  % examples of more authors
  % \And
  % Coauthor \\
  % Affiliation \\
  % Address \\
  % \texttt{email} \\
  % \AND
  % Coauthor \\
  % Affiliation \\
  % Address \\
  % \texttt{email} \\
  % \And
  % Coauthor \\
  % Affiliation \\
  % Address \\
  % \texttt{email} \\
  % \And
  % Coauthor \\
  % Affiliation \\
  % Address \\
  % \texttt{email} \\
}
\begin{document}

\maketitle

\begin{abstract}
 We study a setting where agents use no-regret learning algorithms to participate in repeated auctions. \citet{kolumbus2022auctions} showed, rather surprisingly, that when bidders participate in second-price auctions using no-regret bidding algorithms, no matter how large the number of interactions $T$ is, the runner-up bidder may not converge to bidding truthfully. Our first result shows that this holds for \emph{general deterministic} truthful auctions. We also show that the ratio of the learning rates of the bidders can \emph{qualitatively} affect the convergence of the bidders. 
Next, we consider the problem of revenue maximization in this environment. In the setting with fully rational bidders,  \citet{myerson1981optimal} showed
that revenue can be maximized by using a second-price auction with reserves.
We show that, in stark contrast, in our setting with learning bidders, \emph{randomized} auctions
   can have strictly better revenue guarantees than second-price
    auctions with reserves, when $T$ is large enough. 
    %To do this, we provide a black-box transformation from any truthful
    %auction $A$ to an auction $A'$ such that: i) all mean-based no-regret learners
    %that participate in $A'$ converge to bidding truthfully, ii)
    %the distance between the allocation rule and the payment
    %rule between $A,A'$ is negligible.    
    Finally, we study revenue maximization in the non-asymptotic regime.
We define a notion of {\em auctioneer regret} comparing the revenue generated to the revenue of a second price auction with truthful bids. When the auctioneer has to use
    the same auction throughout the interaction,
    we show an (almost) tight regret bound 
    of $\smash{\widetilde \Theta(T^{3/4})}.$ If the auctioneer can  change auctions during
    the interaction, but in a way that is oblivious
    to the bids, we show an (almost) tight
    bound of $\smash{\widetilde \Theta(\sqrt{T})}.$ 
    
\end{abstract}

\section{Introduction}\label{sec:intro}
In auction design, truthfulness is a highly sought-after property. It allows bidders to simply reveal their true valuations, simplifying the bidding process. In the standard single item setting with fully rational profit-maximizing bidders, Myerson's seminal paper~\cite{myerson1981optimal} shows that an auctioneer can achieve optimal revenue by using a {\em truthful and deterministic} auction mechanism -- a Second Price Auction (SPA) with a reserve price. 
%In auction design, truthfulness is a highly sought-after property. It allows bidders to simply reveal their true valuations, streamlining the bidding process. Surprisingly, \citet{myerson1981optimal} shows that with fully rational profit-maximizing bidders, truthful mechanisms can achieve optimal revenue for the auctioneer by implementing a Second Price Auction (SPA) with a reserve price.

%\ganote{We are trying to draw a contrast between the fully rational bidders, who are happy with a deterministic truthful auction, vs learning bidders where we need randomness to get a truthful auction with good properties. Attempted to rewrite the para a bit to help that narrative. Please feel free to edit}\andresnote{i think i wanted to also emphasize that truthfulness in itself is a very desirable property and moreover you can implement the opt mechanism w/o loosing this property. this is important because we are restricting to truthful auctions in this paper.}
%\andresnote{split paragraph?}

In many applications nowadays, buyers no longer bid directly in the auction but, instead, use learning algorithms to bid on their behalf. For example, in online advertising, platforms offer automated bidding tools that manage ad campaigns on behalf of advertisers.
Such bidders learn to bid over many rounds and are not fully rational. In a surprising result, \citet{kolumbus2022auctions} show that some appealing properties of second-price auctions break down in the presence of such learning bidders. In particular, when (profit-maximizing) bidders use no-regret learning algorithms, the second-price auction does not achieve as much revenue as with fully rational bidders. Indeed, bidders do not learn to bid their value, and consequently, the runner-up bidder’s bid is less than their value with positive probability, which diminishes the second price auction’s revenue. Moreover, \citet{kolumbus2022and} show that for a setting where rational agents are using learning algorithms to bid, then it is no longer optimal to truthfully submit their value as the input to the learning algorithm. This raises a crucial question: are there truthful auctions that promote convergence to the true valuations within a learning environment, and can they also guarantee strong revenue performance?

In this paper we provide an affirmative answer to this question.
%\grigorisnote{I changed it to singular, since we have only mentioned one question so far -- feel free to revert it to ``questions'' if you feel that it's more appropriate}. 
{In doing so, we also showcase the value of \emph{randomized} mechanisms} --- often overlooked in settings with profit-maximizing bidders --- for environments where bidders are learning agents. While randomization introduces inherent inefficiencies due to allocations to low-valuation bidders, this very behavior facilitates learning among low-valuation bidders. A revenue-maximizing auctioneer must now carefully balance the randomization within a truthful mechanism to incentivize learning without incurring excessive revenue loss due to mis-allocation.

We build our theory based on the model presented by \citet{kolumbus2022auctions}. We consider single-item repeated interactions over $T$ periods. There are two profit-maximizing bidders participating in the auctions, %\footnote{We discuss in \Cref{??} how our results apply to multiple bidders.} 
with valuations that are drawn independently from the same distribution, and fully persistent over time. {This
assumption is motivated by online ad auctions, where multiple auctions are taking place every second, and the valuations of the advertisers remain stable for certain time scales, e.g., a day or a week. Thus, there is typically a very large sequence of auctions where the valuations of the participating agents are persistent.}
% \footnote{In the paper, we also discuss how our results also apply to a model where valuations are chosen adversarially.} 
Bidders use mean-based no-regret learning algorithms \citep{braverman2018selling} and receive full feedback on which they base their updates. {(Many of our results extend immediately to multiple bidders. We discuss other extensions, such as the partial feedback settings, in
\Cref{app:extensions}.) }
%\footnote{We discuss in \Cref{??} how our results would change in a partial feedback setting.} \anupamnote{Should we just have footnote saying: "We discuss the extensions to multiple bidders and to partial feedback settings in \S blah."}
The auctioneer focuses on truthful auctions, and their objective is to maximize the total revenue they achieve over the $T$ rounds of interaction. Our results are the following:

\subsection{Our Results and Techniques}

\noindent\textbf{Limitations of Deterministic Auctions.} Our first set of results (in \Cref{sec:deterministic auctions}) characterize
the convergence of learners who are using Multiplicative Weights Update (MWU) in repeated \emph{deterministic auctions}. 
In particular, we show the following sharp phase transition: 
\begin{itemize}
\item If the \emph{learning rate} of the winning type is at least
as fast as the learning rate of the runner-up type, then
the runner-up type will not 
converge to bidding truthfully, 
even as $T \rightarrow \infty$; in fact, 
it will be bidding strictly below its true value, in expectation.
\item On the other hand, we show that
in many auctions, such as SPA, if the learning
rate of the runner-up type is strictly faster than that of the winning type, then the runner-up type will indeed converge to truthful bidding.
\end{itemize}
These generalize the results of \citet{kolumbus2022auctions}
who showed that in SPA, when bidders are using MWU with the same learning
rate, then the low type will not converge to bidding truthfully. The main challenges to proving this set of  results arise from our study of general
deterministic auctions, which have less structure
than second-price auctions. Indeed, small differences
in the learning rates can affect the landscape qualitatively, 
as is manifested from our results. Moreover, while the auctions are deterministic, the learning algorithms are randomized and highly correlated. Hence our 
approach is to break down the interaction into several
epochs and establish some qualitative properties which hold,
with high probability, at the end of each epoch. This requires
a careful accounting of the cumulative utility of each bid
of both bidders within
every epoch; in particular, if our estimation is off by even 
some $\omega(1)$ term, then it will not be sufficient to
establish our result. 

\noindent\textbf{Strictly-IC Auctions and the Power of Randomized Mechanisms.}
{The results in \Cref{sec:deterministic auctions} show that since the low valuation bidder tends to underbid, an auctioneer using SPA with reserve makes strictly less revenue than that predicted by the model with rational agents.}
%The results in \Cref{sec:deterministic auctions} show that since the low valuation bidder tends to underbid, the auctioneer’s revenue is strictly less than that predicted by the model with rational agents.
%\andresnote{mention this discussion about the learning rate? shall we mention that also provide quantitive description of the limit bid distribution?} \andresnote{split para?}\grigorisnote{I've split the paragraph and added the discussion. I think the qualitative result will not add much, and it is the same one that was shown from Kolumbus and Nisan, so maybe it is ok not to talk about it? }
%Given this limitation of deterministic mechanisms, 
Motivated by this, we consider a special class of randomized auctions called \emph{strictly-IC auctions}. These are randomized truthful auctions where for each bidder, it is strictly better to bid their true valuation compared to any other bid. We show that any strictly-IC auction is asymptotically truthful: 
that is, the limit point of the bidder's bid converges to their true value. Furthermore, we provide a black-box transformation from any truthful auction $A$ (deterministic or not) to a randomized auction $A'$ that has the following two properties: (i) the bidders converge towards truthful bidding, and (ii) the difference between the allocation and payment rules of the original auction $A$ and its strictly-IC counterpart $A'$ are negligible for any bid profile. Hence, such an auction $A'$ behaves similarly to $A$, but, crucially, it conveys information to the low bidder to help it converge to truthful bidding.
{As a corollary of this result, we get that SPA with reserve is not revenue-maximizing in this setting, and that randomization can get strictly more revenue than SPA with reserve. This is in stark contrast with the seminal result of \citet{myerson1981optimal} which shows that SPA with reserve is optimal for rational bidders.} 
%\grigorisnote{TODO. might need to  edit the last sentence and only mention SPA with reserve.} 

% \anupamnote{Added a version of the claim from the EC rebuttal here. Is this a good place, or should we move it?}
{{At a more conceptual level, }our results for randomized mechanisms can be viewed as showing that having enough randomness is key to the low bidder converging to truthful bidding: this randomness can come from the process itself, e.g., if bidder values are independently drawn in each round, as in \citet{feng2021convergence}. But if not, and if the ranking of the bidders does not change much due to the lack of inherent randomness, our results show that injecting external randomness into the auction
induces the desired learning behavior and hence improves the revenue.
{Having persistent valuations is just one case of the ranking of the bidders remaining stable over time: studying this case allows us to showcase our main ideas, but a central message of our work is that the presence or absence of stability in the rankings of the bidders is the main factor that dictates convergence to truthful bidding.}
}

\noindent\textbf{A Non-Asymptotic Analysis.}
Our next set of results in \Cref{sec:no-regret-learning-auctioneer} address the non-asymptotic regime. Here we consider the \emph{prior-free} setting,
meaning that the valuations of the bidders could 
be drawn from potentially different distributions that are unknown to the auctioneer. In order to evaluate its revenue performance when bidders are learning agents, we introduce the notion of \emph{auctioneer regret} for an auction, 
which measures the difference between the revenue achieved over $T$ rounds of implementing a given auction with learning bidders and the revenue achieved by implementing the optimal auction with rational bidders (i.e., SPA with a reserve price). {\Cref{prop:lower-bound-constant-auctions}} shows that if the auctioneer is constrained to use the same auction rule for all $T$ rounds, then no truthful auction --- deterministic or randomized --- can achieve an auctioneer-regret better than $\widetilde{O}(T^{3/4})$ in the setting of adversarial valuations.
However, if the auctioneer can change the auction rule just once within the $T$ rounds, with the change happening at a time independent of the bid history, then the auctioneer's regret drops to  $\widetilde{O}(\sqrt{T})$, as we show in {\Cref{sec:no-regret-learning-auctioneer}} Moreover, we show in \Cref{prop:root-T-lower-bound-all-environments} that this bound of $\widetilde{O}(\sqrt{T})$ is optimal even
if the auctioneer can design the auction schedule.
% conditioned on the realization of the types. and we design a two-stage mechanism (which uses a randomized auction in the first stage, followed by SPA with reserve) that attains it.
As a byproduct of our result, we show that the first-stage randomized auction used by the mechanism leads to the fastest convergence to truthful bidding from no-regret learning agents. 

To show that an auctioneer
facing learning bidders using MWU must suffer an $\Omega(T^{3/4})$
revenue loss compared to the setting when it is facing rational agents,
we break down the revenue loss into two non-overlapping epochs: 
one where the learning bidders have not converged to truthful bidding,
and the other where the bidders are truthful. Now an auctioneer using the same auction throughout
the interaction faces a trade-off: they can speed up the learning process to reduce the revenue loss from the first epoch, but this  loses revenue
in the second epoch due to the fact that the auction now differs significantly from SPA.
Our result optimizes this trade-off to show that an 
$\Omega(T^{3/4})$ revenue loss is unavoidable. This naturally suggests 
decomposing the interaction into two epochs: in the first one, the auctioneer
uses a truthful auction to facilitate the convergence to truthful bidding, and in the second one it uses SPA. We then design an auction
that guarantees the fastest convergence to truthful bidding for mean-based
learners in the prior-free setting, and we show that an improved revenue loss
of at most $\widetilde{O}(\sqrt{T})$ can be achieved with this approach. (Importantly, to maintain truthfulness, the decisions of the auctioneers are fixed before the beginning of the interaction and are not affected by the bids.)
This regret of $\widetilde{O}(\sqrt{T})$ seems surprising, because in traditional no-regret
learning settings the optimal regret is achieved when the \emph{exploration}
and \emph{exploitation} phase are intermixed.

% \andresnote{do we need an overview of results?}
% \grigorisnote{\subsection{Overview of Results and Techniques}\label{sec:overview-of-results-and-techinques}

% The key insight
% that the study of deterministic auctions reveals
% is that we need to come up with a different auction format that 
% gives some non-zero utility gain to the low valuation bidder
% when it is bidding truthfully. This is precisely what 
% \emph{strictly IC} auctions can achieve, which require \emph{randomization.}
% Then, in \Cref{sec:randomized-truthful-auctions} we can show that, perhaps surprisingly, there is a way to carefully
% combine \emph{any} deterministic auction $A$ with a randomized one so that,
% asymptotically, the new auction has the same behavior as $A,$ but, crucially,
% conveys information to the low bidder that truthful bidding
% is strictly preferable. 

\subsection{Related Work}\label{sec:related-work}
% \grigorisnote{I have created a further related work
% file for the appendix. feel free to move things there.}
The most closely related works to our setting
are \citet{feng2021convergence, deng2021firstPrice, kolumbus2022auctions, banchio2022artificial,rawat2023designing}. All these works
study the long-term behavior of bidding algorithms that participate in 
repeated auctions, focusing on first-price and second-price auctions,
but they give qualitatively different results. This is because they
make different assumptions across two important axes: the type of
learning algorithms that the bidders use and whether their 
valuation is \emph{persistent} across the interaction
or it is freshly drawn in each round. \citet{feng2021convergence}
studied the convergence of no-regret learning algorithms
that bid repeatedly in second-price and first-price auctions, where
all agents have i.i.d. valuation that are redrawn in every round
from a discrete distribution that has non-negligible mass on each point.
They show that in this setting the bidders exhibit the same-long
term behavior in both second-price and first-price auctions that 
classical theory predicts, i.e., the bids in second-price auctions
are truthful and the bids in first-price auctions form
Bayes-Nash equilibria. 
% \andresnote{i think the lit review should look more on how our paper connects to other. e.g. kolumbus look for SPA not quant. resutls . us extend their resutls and show that if the auction is randomzies it solves the bad properties of spa.}
\citet{kolumbus2022auctions} studied the
same setting with the crucial difference that agents' valuations are persistent across the execution and they
are not resampled from some distribution at every iteration.
Interestingly, they showed that in the case of two bidders with
in second-price
auctions, the agent that has the highest valuation will end up
bidding between the low valuation and its valuation,
whereas the agent with the low type will end up bidding 
strictly below its valuation. Intuitively, in their setting
the high type bidder quickly learns to bid
above the valuation of the low type bidder and always win the auction,
and thus the low type does not get enough signal to push its bid distribution
up to its valuation. On the other hand, when the valuations are redrawn
as in \citet{feng2021convergence}, the competition that the agents face varies.
In the long run, this gives enough information to the algorithms
to realize that bidding truthfully is the optimal strategy. In the 
case of first-price auctions where the agents have
persistent valuations,
%\anupamnote{Similar to what?}\grigorisnote{I rephrased it, does it make more sense now?}
both \citet{kolumbus2022auctions, deng2021firstPrice} provide
convergence guarantees of no-regret learning algorithms.
% \andresnote{this para seems a bit from what we do in this paper. }
The type of ``meta-games'' we touch upon in our work,
where we want to understand the incentives of the agents
who are submitting their valuations to bidding algorithms
that participate in the auctions on the behalf of these
agents,
were originally studied by \citet{kolumbus2022auctions}
and, subsequently, for more general classes
of games by \citet{kolumbus2022and}. 

The pioneering work of \citet{hart2000simple} showed that when players
deploy no-regret algorithms to participate in games
they converge to \emph{coarse-correlated equilibria}.
Recently, there has been a growing interest in the study
of no-regret learning in repeated auctions. The empirical
study of \citet{nekipelov2015econometrics} showed that the bidding
behavior of advertisers on Bing is consistent
with the use of no-regret learning algorithms that bid 
on their behalf. Subsequently, \citet{braverman2018selling}
showed, among other things, that when a seller faces a no-regret buyer 
in repeated auctions and can use non-truthful, it can extract the whole welfare
as its revenue. 
A very recent work \citep{cai2023selling} extended
some of the previous results to the setting with multiple
agents. {For a detailed comparison 
between our work and \citet{cai2023selling}, we refer to \Cref{app:related-work}.}

\citet{banchio2022artificial,rawat2023designing} diverge from the previous works
and consider agents that use $Q$-learning algorithms instead
of no-regret learning algorithms. Their experimental findings show that
in first-price auctions, such algorithmic bidders exhibit
collusive phenomena, whereas  they converge
to truthful bidding in second-price auctions. One of the main reasons for these
phenomena is the \emph{asynchronous} update used by the $Q$-learning
algorithm. The collusive behavior of such algorithms 
has also been exhibited in other settings
\citep{calvano2020protecting,asker2021artificial,asker2022artificial,den2022artificial,epivent2022algorithmic,asker2022impact}.
Notably, \citet{bertrand2023qlearners} formally proved that $Q$-learners
do collude when deployed in repeated prisoner's dilemma games.

In a related line of work, \citet{zhang2023steering}
study the problem of steering no-regret learning agents to 
a particular equilibrium. They show that the auctioneer can use 
\emph{payments} to incentivize the algorithms to converge
to a particular equilibrium that the designer wants them to.
% \grigorisnote{TODO: mention something more about the setting they consider.}
An interpretation of our results is that \emph{randomization}
is a way to achieve some kind of equilibrium steering in repeated auctions.
% {\andresnote{using a truthful mechanisms (which i assume they don't claim it can be done via truthful manner. I think this paper is quite important to mention.}}

Diverging slightly from the setting we consider, some recent papers
have illustrated different advantages of using randomized auctions over deterministic ones. \citet{mehta2022auction, liaw2023efficiency}
showed that there are randomized auctions which induce equilibria
with better welfare guarantees for value-maximizing autobidding agents compared to deterministic ones. In the setting of revenue maximization in the presence
of heterogeneous rational buyers, \citet{guruganesh2022prior} showed that randomization
helps when designing prior-free auctions with strong revenue guarantees, when the valuations
of the buyers are drawn independently from, potentially, non-identical distributions.

\section{Model}\label{sec:setting}

Our model follows the setup used in \citet{kolumbus2022auctions}. There are $T$ rounds, and the auctioneer sells a single item in each round $t=1,\ldots, T$. There are two bidders, with bidder $i \in \{1,2\}$ having a persistent private valuation $v_i$ drawn i.i.d.\ over the discrete set $B_{\Delta} := \left\{0,\nicefrac1\Delta,\nicefrac2\Delta,\ldots,1\right\}$ from a regular distribution $F$. (A discrete distribution is \emph{regular} if the discrete virtual valuation function $\phi(v) := v - \frac 1 {\Delta}\frac{\sum_{v'>v}\Pr[v']}{\Pr[v]}$ is non-decreasing.)
Given an allocation probability $x$ and price $p$, the bidder with valuation $v$ receives a payoff of $v\cdot x - p$.
In what follows, we refer to the bidder with valuation $v_L = \min\{v_1,v_2\}$ (resp.\ $v_H =\max\{v_1,v_2\}$)
as the \emph{low type} (resp.\ \emph{high type}). 

We are interested in truthful auctions,  (also called strategy-proof auctions, or dominant-strategy incentive-compatible mechanisms) that are individually rational, so that at every round $t$ the auctioneer uses a mechanism $((x^t_1,x^t_2),(p^t_1,p^t_2))$ satisfying 
\begin{alignat*}{2}
    v_i\cdot x_i^t(v_i,b') - p_i^t(v_i,b') &\geq v_i \cdot x_i^t(b,b') - p_i^t(b,b'), &\qquad& \forall
    v_i,b,b' \in B_\Delta,\, i=1,2 \,, \\
     v_i\cdot x_i^t(v_i,b') - p_i^t(v_i,b') &\geq 0, &&\forall
    v_i,b' \in B_\Delta,\, i=1,2 \,.
\end{alignat*}
%\andresnote{I don't understand what we want to add with this: We remark that Myerson's lemma \citep{myerson1981optimal}shows that an allocation rule $x(\cdot,\cdot)$ can be coupled with a unique payment rule $p(\cdot,\cdot)$ to satisfy this propertyif and only if for any fixed $b' \in B_\Delta$ it holds that $x(\cdot,b')$ is non-decreasing. }

In this work, we study various properties of
\emph{randomized} truthful auctions. 
\begin{definition}[Randomized Truthful Auction]
A truthful auction $((x_1,x_2),(p_1,p_2))$ is randomized if there is some bid profile $(b_1, b_2) \in B_\Delta$ such that either $x_1(b_1,b_2) \in (0,1)$ or $x_2(b_1,b_2) \in (0,1).$ 
\end{definition} 
%\grigorisnote{I think $\Delta$ should be thought of as the number of different bids and $1/\Delta$ as the differencebetween consecutive bids, so maybe you mean $1/\Delta$ in the above?}\andresnote{correct.fixed.}

%\andresnote{maybe we should add here two claims: (1) the truthful mechanism iff x is monotone and p satisfy the integral equation (not sure if the iff works for a discrete environment though; (2) show that if bidders are bidding truthfully and valuations are idd the SPA is the optimal mechanism among the one that always allocates (again, not sure if it holds on the discrete setting.}
%\grigorisnote{I also wasn't sure how exactly we want tophrase Myerson's lemma. I just wanted to briefly mention that the class of auctions that satisfies the IC, IR requirement is known in our setting and our focus is on the design of the allocation rule since the payment rule is implied by it (up to some discretization error maybe?).} \andresnote{i think this paper might be helpful for discrete type mech. desing: \url{https://www.sciencedirect.com/science/article/abs/pii/S016517650500412X}}.

% \subsection*{Mean Based no-Regret Learning Algorithms}

Bidders employ \emph{learning algorithms} that bid over the $T$ rounds. We assume that the learning algorithms are \emph{mean-based no-regret} learning algorithms \citep{braverman2018selling}. For the following discussion, define $U_i^t(b \mid \mathbf{b}^t_{-i} ) :=  \sum_{\tau=1}^t v_i\cdot x_i^\tau(b,  b^\tau_{-i}) -p_i^\tau(b,  b^\tau_{-i})$ to be the cumulative reward of agent $i$ when they bid $b$ over the $t$ rounds, whereas the other agent's bids are $\mathbf{b}^t_{-i} = \{b^\tau_{-i}\}_{\tau \in [t]}$.
The mean-based property states that if a bid $b \in B_\Delta$ has performed significantly better than bid $b' \in B_\Delta,$ then the probability of bidding $b'$ in the next round is negligible. This is formalized below.

\begin{definition}[Mean-Based Property \citep{braverman2018selling}]\label{def:mean-based-learner}
   % Let $n$ be the number of agents and conside some agent $i \in [n].$ For any $t \in [T]$, let $\tilde b^t_{-i} \in B_\Delta^{n-1}$  be an arbitrary sequence of bids submitted by    the agents $j \in [n]\setminus \{i\}.$ 
    %we denote by $U_i^t(b |\mathbf{b}^t_{-i} ) =  \sum_{\tau=1}^t v_i\cdot x_i^\tau(b, \tilde b^\tau_{-i}) -p_i^\tau(b, \tilde b^\tau_{-i})$ the cumulative reward agent $i$ gets by using a fixed bid $b$ over the $[t]$ rounds.
    %\andresnote{this should be vector of bids for $t$?}\grigorisnote{I changed the notation slightly, does it make more sense now? $b_{-i}$ is vector of bids, if you prefer to use bold for it I'm fine with it. If we stick to this notation, I will also change the no-regret definition.}\andresnote{let's use bold for vectors :)}.
    An algorithm for agent $i$
    is \emph{$\delta$-mean-based} if for any bid
    sequence $\mathbf{b}^t_{-i}$ such that $U_i^{t-1}(b\mid\mathbf{b}^t_{-i}) - U_i^{t-1}(b'\mid\mathbf{b}^t_{-i}) > \delta \cdot T$, 
    for some $b, b' \in B_\Delta$, the probability of playing
    bid $b'$ in the next round is at most $\delta$. We say that an algorithm
    is mean-based if it is $\delta$-mean-based for some $\delta = o(1).$
   % \andresnote{is the def. still correct?}\grigorisnote{Looks good to me!}
\end{definition}

The no-regret learning property states that the cumulative utility that the bidding algorithm generates is close to the cumulative utility that the optimal fixed bid would have generated, regardless of the history of bids the other bidders played. This is formalized in \Cref{def:no-regret}.
% \begin{definition}[No-Regret Learning]\label{def:no-regret}
% Let $\{b^t_i\}_{t \in [T]}$ be the bids submitted by bidder $i \in [n]$,
% $\{\tilde{b}^t_{-i}\}_{t \in [T]}$ be the bids submitted by the rest
% of the bidders,
% and 
% $\{x^t(\cdot,\cdot),p^t(\cdot,\cdot)\}_{t \in [T]}$ be the allocation rule and  payment rule used in the rounds $[T] = \{1,\ldots,T\}$. We say that the algorithm $i$
% \emph{achieves the no-regret property} if 
% \[
%      \E\bigg[\max_b \bigg\{ \frac{1}{T}\bigg(\sum_{t \in [T]}v\cdot x^t(b,\tilde{b}_{-i}^t) - p^t(b,\tilde{b}_{-i}^t)\bigg)\bigg\} - \frac{1}{T}\bigg(\sum_{t \in [T]}v\cdot x^t(b^t,\tilde{b}_{-i}^t) - p^t(b^t,\tilde{b}_{-i}^t)\bigg)\bigg] = o(1) \,
% \].
% %i.e., no fixed bid $b$ can achieve a much higher average utility, for any sequences of opposing bids, and allocation and payment rules,where the expectation is taken with respect to the random draws of the bids $b^1,\ldots,b^T.$ 
% \end{definition}
Mean-based no-regret learning algorithms are becoming a standard class of learning algorithms to use in auction environments (see, e.g., \citet{braverman2018selling, feng2021convergence, deng2021firstPrice, kolumbus2022auctions}, and references therein) and include many known no-regret learning algorithms, including
the multiplicative-weights update algorithm (MWU). %(In \Cref{??} we provide a more detailed overview of such algorithms.) 
For completeness, we present the version of MWU that
we use in our work in \Cref{alg:mw}.
The above definitions consider 
a fixed value of $T.$ Thus, given 
a sequence of such values $T$ and the limiting
behavior as $T \rightarrow \infty$,
we say that a family of algorithms, parameterized
by the time horizon $T$, satisfies the mean-based 
definition if there exists $\{\delta_T\}_{T \in \mathbbm{N}}$ such that $\delta_T \rightarrow_{T \rightarrow \infty} 0,$ and each
algorithm in this family is $\delta_T$-mean-based.
We define the no-regret property of such
a family of algorithms in a similar way.
In general, the asymptotic behavior
of the algorithms we study in this work 
is with respect to $T$ and the big $O$ notation 
suppresses quantities that do not depend on $T.$

%In terms of feedback information for the algorithms, 
For the sake of exposition, we focus on the \emph{full feedback} setting:
after every round $t \in [T]$, the algorithm learns for each bid $b \in B_\Delta$ the (expected) utility it would have generated had it played bid $b$. In \Cref{app:extensions}, we discuss
potential extensions.
%\andresnote{we need to add a discussion section at the end of the paper.}

Throughout this paper we make a natural assumption on the algorithms which restrict bidders to never bid over their value. Specifically, 
%\begin{assumption}
for any round $t$, and any history of bids before period $t$, agent $i$ bids $b_i>v_i$ with zero probability. 
%\end{assumption}
Without this assumption, \citet{braverman2018selling,cai2023selling} show that the auctioneer can extract the entire welfare in the setting where the valuations of the agents are drawn i.i.d.\ in each round. 
We focus on the \emph{last-iterate} convergence
of the distribution of the bids of the algorithms
as $T \rightarrow \infty.$ This is a 
% Last iterate converge is a very 
desirable
property of algorithms in multi-agent
games, and recent work has focused
on establishing it for learning algorithms
\citep{cai2022finite,cai2022accelerated1,cai2022accelerated}. %\anupamnote{Do we cite other people as well? Not clear it is required.} 
This is formalized in \Cref{def:last-iterate-convergence}.

Due to space limitations, all the proofs of our results
can be found in the appendix.

\section{Deterministic Truthful Auctions}\label{sec:deterministic auctions}
In this section we study the effect of the 
learning rate on the convergence of no-regret
learning 
algorithms in {non-degenerate} \emph{deterministic} truthful auctions. 
{Informally, the non-degeneracy 
requirement states that
\textbf{i)} the winning bidder $W$ under truthful bidding
gets strictly positive utility, \textbf{ii)} there is some sufficiently small bid of the winning
bidder such that the runner-up bidder $R$ wins
the item by bidding $v_R$ but does not win
by bidding $v_R - \nicefrac{1}{\Delta}$. The formal definition is given in \Cref{def:non-degenerate}.}
We focus our attention to bidders that use MWU
to participate in the auctions
and we study the bidding distribution they converge to
as a function of the ratio of the learning rate of their algorithms.
Throughout this section we refer to the bidder
who wins the auction under truthful bidding as 
the winning bidder and to the bidder that loses
the auction under truthtelling as the runner-up
bidder. Our main result in this section
shows the following behavior in {non-degenerate} deterministic
truthful auctions:
\begin{itemize}
    \item The winning bidder converges to bidding
     between its minimum winning bid and its
    true value, no matter what the choice of the learning 
    rates of the algorithms are.
    \item If the learning rate of the runner-up bidder
    is strictly faster 
    than the learning rate of the winning bidder,
    then the runner-up bidder converges to bidding
    truthfully.
    \item If the learning rate of the runner-up bidder
    is not strictly faster than that
    of the winning bidder, then the runner-up
    bidder converges to a bidding distribution
    whose mean is strictly smaller than its true value. {This result holds
    under an even milder requirement 
    than non-degeneracy. Namely,
    as long as the utility of the winning bidder
    under truthful bidding is strictly positive.
    }
\end{itemize}

We remark that, when the learning rates of the algorithms
are instantiated before the random draw of the two valuations of the agents that are i.i.d. from some distribution $F$,
then with probability at least $1/2$ the runner-up
bidder will not converge to bidding truthfully, if the
underlying auction is deterministic. As we will show
later, this behavior worsens the revenue guarantees of the auction. 
Let us first set up some notation to facilitate
our discussion. We denote by $v_W \in \{v_L, v_H\}$ and $\eta_T^W$ (resp., $v_R \in \{v_L, v_H\}$, and $\eta_T^R$) the value and learning rate 
of the winning bidder (i.e., the one who wins if both bidders bid truthfully) and the runner-up bidder, respectively. 
% The restriction we place on $\eta_T^W$ and $\eta_T^R$ is
% that they are non-degenerate, in the sense that they
% allow the bidders to achieve the no-regret guarantee.
We would like to remind the readers that, typically,
the learning rate $\eta_T$ of MWU is a decreasing
function of $T$ and is
chosen in 
a way to minimize the quantity
$
    \nicefrac{C_\Delta}{\eta_T} + C'_\Delta\cdot \eta_T \cdot T\,,
$
where $C_\Delta, C'_\Delta$ are discretization-dependent constants. Usually, it is instantiated with $\eta_T = 1/\sqrt{T}.$
However, for the purposes of our analysis we will say
that $\eta_T$ is \emph{non-degenerate} if
% \[
%     \frac{C'_\Delta}{T} \leq \eta_T \leq \frac{C''_\Delta}{\log T} \,,
% \]
$
    \lim_{T \rightarrow \infty} \eta_T \cdot T = \infty, \lim_{T \rightarrow \infty} \eta_T \cdot \log T = 0 \,.
$
 The intuition is that if the learning rate
is slower than $1/T,$ the bidder will be adjusting its
bid distribution very slowly, so it will not learn to 
bid correctly. On the other hand, if the rate is faster
than $1/\log T$ then the bidder will be adjusting 
its distribution too aggressively. 
% (This bound of 
%  $1/\log T$ arises for a technical reason: it helps
%  us prove a union bound in our proof.)
 
Our results show that in \emph{deterministic} auctions
the convergence behavior of the bidders depends
heavily on the ratio between the learning rates.
In particular, for the bidder with valuation $v_W$, we show that its bids converge to a distribution supported between $\hat p$, the price it would pay if both bidders bid truthfully, and its value $v_W$, no matter
what the choice of the learning rate of its algorithm is.
%\andresnote{should $\hat p = v_L$?}\grigorisnote{I think not necessarily, $\hat p$for example could be the reserve price}.\andresnote{ok.} 
On the other hand, the convergence behavior of
the runner-up bidder is more nuanced:
if $\nicefrac{\eta_T^R}{\eta_T^W} = \omega(1),$ i.e.,
the runner-up bidder learns more aggressively than
the winning bidder, then it converges to 
 bidding truthfully.
%  \footnote{For this case, to make the exposition cleaner, we state the formal result only for second price auctions. However, it holds for more general deterministic auctions.}
 However,
 if $\nicefrac{\eta_T^R}{\eta_T^W} < C_\Delta,$ where $C_\Delta$ is some discretization-dependent
 constant, then the runner-up converges to a bidding
 distribution that puts positive mass on every (discretized)
 point between $0$ and $v_R,$ and, in particular, its 
 expected value is strictly less than $v_R.$
We remark that even though our proof idea is inspired by  \citet{kolumbus2022auctions}, 
our analysis considers all the possible learning
rates that MWU could be instantiated with and requires
a more technically involved argument.
In particular, we notice that while the result of \citet{kolumbus2022auctions} is, implicitly, proved for \emph{identical} learning rates, we show that the choice of the learning rate affects the qualitative behavior of the algorithms in a crucial way.
 
 %The notion of convergence we are studying here
 %is the one defined in \Cref{??}.
 %This is formalized in the following claim.
%Even though our proof that follows is inspired by the
%ideas of \citet{kolumbus2022auctions}, there are some details
%that make it more involved technically. 
%In particular, the result of \citet{kolumbus2022auctions} is, implicitly, proved for \emph{constant} learning rate. As we will see, the choice of the learning rate affects the qualitative behavior of the algorithm in a crucial way.
We prove this result in two parts. We start with the case
where $\nicefrac{\eta_T^R}{\eta_T^W} < C_\Delta.$ The idea of 
the proof is to split the horizon into 
consecutive periods of size $O(1/\eta_T^R),$ which we call
\emph{epochs}. Now following the idea of \citet{kolumbus2022auctions}, we show that 
within each epoch the runner-up bidder bids
truthfully $\Omega(1/\eta_T^W)$ many times, so the
total
utility of the winning
bidder for bidding between $\hat p$ and $v_W$
will be at least $\Omega(1/\eta_T^W)$ greater than
bidding anything between $0$ and $\hat p - 1/\Delta.$
Because its learning rate is $\eta_T^W,$ this means
that it will move a constant fraction of its mass from
the region $\{0,1/\Delta,\ldots,\hat p - 1/\Delta\}$
to the region $\{\hat p, \ldots, v_W\}.$ Summing
this geometric series, we see that the winning bidder
will submit bids in the region $\{0,1/\Delta,\ldots,\hat p - 1/\Delta\}$ at most $O(1/\eta_T^W)$ many times.
Let us now focus on the runner-up bidder. Following
the previous argument, its total utility for bidding
$v_R$ will be at most $O(1/\eta_T^W)$ greater than bidding
some other bid $b' \in B_\Delta.$ Since $\eta_W^R/\eta_T^W < C,$ this means  the probability of bidding $b'$ after
$T$ rounds is only smaller than the probability of 
bidding $v_R$ by a discretization-dependent multiplicative
constant. The formal statement of this result and its 
proof follow are postponed to \Cref{app:deterministic-auctions}.

\begin{comment}
\begin{theorem}[No Deterministic Auction Leads to Truthful Bidding]\label{theorem:deterministic anonymous auctions do not lead to convergence}
Given bidder valuations $v_L < v_H$, suppose bidders use MWU with rate $\eta = O(1/\sqrt{T})$ to bid on the auction.
\andres{[Can we say this in a footnote:]where we are suppressing constants that depend on the discretization
parameter $\Delta$.}
Assume w.l.o.g. that $x_H(v_L,v_H) = 1, x_H(v_L, v_H) = 0, p_H(v_L, v_H) = \hat{p} < v_H.$
Then, with probability at least $1 - C_\Delta^1\cdot\sqrt{T}\cdot e^{-C_\Delta^2\sqrt{T}}$,
where $C_\Delta^1, C_\Delta^2 > 0,$ are some constants that depend on $\Delta$,
the high type will converge to bidding
between $\hat p, v_H$ and the low type
will converge to a bidding distribution with
$0 < \Pr[0] \leq \Pr[1/\Delta] \leq \ldots \leq \Pr[v_L].$
\end{theorem}
\end{comment}

% \Cref{theorem:deterministic anonymous auctions do not lead to convergence} showed when the learning
% rate of the winning bidder is at least as 
% fast as the rate of the runner-up bidder, up to
% some discretization dependent constant, deterministic auctions do not lead to truthful bidding of the runner-up
% bidder. 
Our next result illustrates that the convergence
behavior of the runner-up type exhibits
a sharp phase-transition phenomenon: if $\eta_T^R$ is
even slightly faster than $\eta_T^W,$ i.e.,
 $\nicefrac{\eta_T^R}{\eta_T^W} = \omega(1),$
then the runner-up will learn to bid truthfully.
Let us first give a high-level idea of the proof.
Similarly as before, we split the horizon
into intervals of size $O(1/\eta_T^W).$ We consider
the first interval of this interaction. Because 
of the choice of the learning rate, we can show
that the winning bidder will bid $v_R - 1/\Delta$
at least $\Omega(1/\eta_T^W)$ many times. Thus,
this means that the total utility of bidding
$v_R$ for the runner-up bidder will be at least
$\Omega(1/\eta_T^W)$ greater than bidding
any other bid. Since $\eta_T^R/\eta_T^W = \omega(1)$,
after the first epoch the MWU algorithm
will place all but a $o(1)$-fraction of the probability
mass to bidding truthfully. The formal statement  and its proof appear in \Cref{app:deterministic-auctions}.

Next, we discuss the implications that our 
results have to the revenue guarantees of the auctioneer.
In the setting with rational bidders, the seminal work of \citet{myerson1981optimal} showed
that using second-price auctions with an anonymous reserve price, which
depends on the value distribution $F$, generates the optimal revenue for the auctioneer. Our next result shows that this is no longer true
when the bidders are learning agents, even when the valuations of the 
agents are
drawn i.i.d. from the uniform distribution on $B_\Delta$, which we
denote by $U[B_\Delta].$
Intuitively, this happens because, no matter what the reserve price
is,
% \footnote{With the exception of setting the reserve price to be $1-1/\Delta,$ where the only revenue contribution will be happening whenever the valuation of one of the agents is exactly 1.}
with some non-zero probability the valuations of both agents
will be higher than the reserve price. Then, since the runner-up
bids will be strictly lower than the true valuation, the generated
revenue will be strictly lower than in the setting with rational 
agents, even when $T \rightarrow \infty.$ 
% This is formalized
% in the following result.
% \grigorisnote{To get the result for general
% truthful deterministic auctions we need to claim that in the 
% learning environment, the revenue optimal auction has the property
% that when the low type bids below its valuation, then the revenue does not
% increase, and for some combinations of $v_L,v_H$, when $v_L$ bids way
% below its valuation then the revenue strictly decreases. I am pretty sure this is correct, but I am not sure how to prove it formally. There
% are some annoying auctions in which underbidding increases
% the revenue, so we need to argue that all these pathological cases
% will not be revenue optimal in the learning setting.}

% \andresnote{IIUC, this is saying that even with reserve price determinsitc symmetric aucitons don't hhelp. Shouldn't this result be in the previous section??}
% \grigorisnote{You are right, this makes more sense.
% I moved it.}
\begin{theorem}[SPA with Reserve Is Not Revenue Optimal]\label{thm:deterministic-auctions-not-revenue-optimal}
    Let two agents draw their valuations from the uniform distribution
    over $U[B_\Delta]$ and participate in $T$ repeated auctions using
    mean-based learners. Let $b_1^T, b_2^T$ be the bid distributions
    after $T$ rounds.
    Let $\mathrm{Rev}(b_1,b_2;r)$ denote the revenue of the second-price
    auction with reserve price $r$ when the bids are $b_1, b_2 \in B^2_\Delta.$ Then, for all $r < 1-1/\Delta,$
    \[
         \E_{v_1,v_2 \sim U[B_\Delta]}\left[\lim_{T \rightarrow \infty}\E_{b_1 \sim b_1^T, b_2 \sim b_2^T}[\mathrm{Rev}(b_1,b_2;r) \mid v_1,v_2] \right] < \E_{v_1,v_2 \sim U[B_\Delta]}\left[\mathrm{Rev}(v_1,v_2;r)\right] - c \,,
    \]
    where $c > 0$ is some constant that does not depend on $T.$
\end{theorem}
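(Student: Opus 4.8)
The plan is to decompose the expected revenue over the (discrete) value pairs $(v_1,v_2)$, exhibit a single positive-probability configuration on which the limiting learning revenue falls a constant below the truthful revenue, and control every remaining configuration by monotonicity. Two ingredients drive the argument. First, since our learners never overbid, the realized bids satisfy $b_i\le v_i$ almost surely, and the revenue of the second-price auction with reserve $r$ is non-decreasing in each bid; hence $\mathrm{Rev}(b_1,b_2;r)\le \mathrm{Rev}(v_1,v_2;r)$ pointwise, so $\lim_{T\to\infty}\E_{b_1,b_2}[\mathrm{Rev}(b_1,b_2;r)\mid v_1,v_2]\le \mathrm{Rev}(v_1,v_2;r)$ for \emph{every} value pair (last-iterate convergence guarantees the limit exists). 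It therefore suffices to produce a constant strict gap on an event of constant probability.

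Second, I would isolate the event $E$ of value pairs that are (i) distinct and (ii) both strictly above $r$. Since $r<1-\nicefrac1\Delta$, the two largest grid points $1$ and $1-\nicefrac1\Delta$ both exceed $r$, so $E$ occurs with probability at least $(\nicefrac1{\Delta+1})^2>0$ under $U[B_\Delta]$. On such a pair the winner (high type) has strictly positive truthful utility $v_H-v_L>0$, which is exactly the mild hypothesis under which the runner-up non-convergence result of \Cref{sec:deterministic auctions} applies: the low (runner-up) type converges to a last-iterate distribution that places positive, $T$-independent mass on every point of $\{0,\nicefrac1\Delta,\ldots,v_L\}$, in particular on $0$. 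When the two learning rates differ, I use that the ratios $\eta^1/\eta^2$ and $\eta^2/\eta^1$ cannot both be $\omega(1)$; hence for at least one role assignment of the values $\{1,1-\nicefrac1\Delta\}$ to the bidders the runner-up's rate is not strictly faster, and that assignment already has positive probability, so the underbidding hypothesis is met on a positive-probability sub-event of $E$.

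It remains to convert underbidding into a revenue deficit. Fix a pair in $E$ with low value $v_L\ge r+\nicefrac1\Delta$, and write $b_L,b_H$ for the low/high type bids. Whenever $b_L<v_L$, monotonicity gives $\mathrm{Rev}(b_H,b_L;r)\le \mathrm{Rev}(v_H,b_L;r)=\max(r,b_L)\le v_L-\nicefrac1\Delta$, using $b_L\le v_L-\nicefrac1\Delta$ and $v_L-\nicefrac1\Delta\ge r$; and always $\mathrm{Rev}\le v_L$. Hence in the last-iterate limit
\[
\lim_{T\to\infty}\E_{b_1,b_2}[\mathrm{Rev}\mid v_1,v_2]\le v_L-\tfrac1\Delta\cdot\Pr[b_L<v_L]\le v_L-\tfrac1\Delta\cdot\Pr[b_L=0]\le v_L-c',
\]
for a constant $c'>0$ independent of $T$, whereas $\mathrm{Rev}(v_1,v_2;r)=v_L$. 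Summing the pointwise inequality over all value pairs and subtracting this deficit on $E$ yields the claim with $c=\Pr[E]\cdot c'>0$.

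The main obstacle is the middle step: transporting the \emph{qualitative} runner-up non-convergence result into a \emph{quantitative, $T$-uniform} lower bound on $\Pr[b_L<v_L]$ in the limit. This is where I must lean on the structural statement that the low type's limiting distribution is supported on all of $\{0,\ldots,v_L\}$ with mass bounded away from a point mass at $v_L$ by constants depending only on $\Delta$ and $F$, never on $T$; the monotonicity reduction and the positive-probability selection of $E$ are comparatively routine once this is in hand.
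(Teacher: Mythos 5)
Your proposal is correct and takes essentially the same route as the paper's proof: the same decomposition into the event where both values exceed $r$ and are distinct versus its complement, no-overbidding plus monotonicity of $\mathrm{Rev}$ on the complement, and the runner-up non-convergence theorem of \Cref{sec:deterministic auctions} to extract a $T$-independent deficit on the good event --- you are in fact more explicit than the paper about the learning-rate role assignment and the quantitative conversion of underbidding mass into a revenue gap. One small repair: when $1-\nicefrac{2}{\Delta} < r < 1-\nicefrac{1}{\Delta}$ no grid point satisfies $v_L \ge r + \nicefrac{1}{\Delta}$, but your argument survives because $\max(r,b_L) \le \max\left(r, v_L - \nicefrac{1}{\Delta}\right) < v_L$ still yields a deficit of at least $\min\left(v_L - r, \nicefrac{1}{\Delta}\right)\cdot \Pr[b_L < v_L] > 0$, which is all the conclusion requires.
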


\section{The Value of Randomized Truthful Auctions: The Asymptotic Case}\label{sec:randomized-truthful-auctions}

In this section we show that there is a class of randomized
auctions such that when mean-based no-regret learners participate 
in them repeatedly, they converge to \emph{truthful} bidding.
This holds for any choice of the learning
rates of these algorithms, which is in contrast 
to the results of \Cref{sec:deterministic auctions}.
We start by defining a class of auctions called
\emph{strictly} IC.

\begin{definition}[Strictly IC Auctions]\label{def:strictly-IC-auctions}
    An auction is called strictly IC if for every bidder $i \in [n]$,
    valuation $v_i \in B_\Delta$, and bid profile $b_{-i} \in B_\Delta^{n-1}$
    it holds that $
        v_i \cdot x_i(v_i, b_{-i}) - p_i(v_i,b_{-i}) > v_i \cdot x_i(b, b_{-i}) - p_i(b,b_{-i}), \forall b \neq v_i \,.
    $
\end{definition}

The next result, which is very useful for our derivation, states
that when mean-based no-regret learning algorithms bid in
some strictly IC auction they converge to bidding truthfully. Recall the definition of a 
mean-based learner (cf. \Cref{def:mean-based-learner})
which states that if the cumulative utility of
some bid $b$ up until round $t-1$
is much smaller than the utility
of some other bid $b'$, then the probability
of playing $b$ in the next round $t$ is negligible.
The proof is postponed to \Cref{app:randomized-asymptotic}.
% \grigorisnote{I think we can have two results. One for general
% no-regret learners where we have ``on average'' convergence to 
% truthful bidding and one for mean-based no-regret learners
% where we have last-iterate convergence to truthful bidding.}

% \grigorisnote{Can we get last iterate convergence for general 
% no-regret learners?}
\begin{lemma}[Convergence in Strictly IC Auctions]\label{lem:convergence-strictly-IC-auctions}
Consider $n$ bidders that participate in a repeated
strictly IC auction $A$ using
mean-based no-regret learning algorithms. Then, as $T \rightarrow \infty,$ the bidders converge to truthful bidding in a last-iterate sense.
\end{lemma}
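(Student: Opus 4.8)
The plan is to exploit the finiteness of the bid space to convert the strict incentive-compatibility inequalities of \Cref{def:strictly-IC-auctions} into a single uniform per-round utility gap, and then feed that gap directly into the mean-based property. First I would fix a bidder $i$ and its true value $v_i$, and for each competing profile $b_{-i} \in B_\Delta^{n-1}$ and each alternative bid $b \neq v_i$ consider the advantage of truthful bidding
\[
 \Lr{v_i \cdot x_i(v_i, b_{-i}) - p_i(v_i, b_{-i})} - \Lr{v_i \cdot x_i(b, b_{-i}) - p_i(b, b_{-i})} > 0 \,,
\]
which is strictly positive by strict IC (and where $x_i,p_i$ do not depend on the round, since the same auction $A$ is used throughout). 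Because $B_\Delta$ is finite and there are finitely many bidders, the minimum of all such quantities over $i$, $v_i$, $b \neq v_i$, and $b_{-i}$ is attained and strictly positive; call it $\gamma > 0$. The crucial point is that $\gamma$ is a fixed constant independent of $T$.

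Next I would lower bound the cumulative-utility gap. For any realized opponent bid sequence $\mathbf{b}^{t-1}_{-i}$, truthful bidding beats any fixed alternative $b \neq v_i$ by at least $\gamma$ in every single round, so summing over the first $t-1$ rounds yields
\[
 U_i^{t-1}(v_i \mid \mathbf{b}^{t-1}_{-i}) - U_i^{t-1}(b \mid \mathbf{b}^{t-1}_{-i}) \;\geq\; (t-1)\,\gamma \,.
\]
This holds for \emph{every} realization of the (random and correlated) history, because strict IC is a pointwise statement over bid profiles; in particular there is no need to invoke the no-regret property at all.

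I would then apply the mean-based property at the final round $t = T$. Writing $\delta = \delta_T = o(1)$ for the mean-based parameter, the gap at round $T$ is at least $(T-1)\gamma = \gamma T - \gamma$, and since $\gamma$ is constant while $\delta_T \to 0$, for all sufficiently large $T$ we have $(T-1)\gamma > \delta_T \cdot T$. \Cref{def:mean-based-learner} then guarantees that, conditioned on any history, bidder $i$ plays a given $b \neq v_i$ at round $T$ with probability at most $\delta_T$; taking the expectation over histories preserves this unconditional bound. A union bound over the at most $\abs{B_\Delta} - 1 \le \Delta$ non-truthful bids shows that bidder $i$ bids anything other than $v_i$ at round $T$ with probability at most $\Delta\,\delta_T = o(1)$, i.e.\ $\Pr[b_i^T = v_i] \to 1$. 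Since this holds simultaneously for every bidder, this is exactly last-iterate convergence to truthful bidding.

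The argument is short once the uniform gap is in hand, so the step I would treat most carefully is the extraction of $\gamma > 0$ together with the verification that the per-round advantage, and hence the cumulative gap, holds for every realization of the opponents' bids rather than only in expectation. This pointwise control is precisely what licenses conditioning on an arbitrary history when invoking the mean-based property, and it is what makes the no-regret hypothesis unnecessary here.
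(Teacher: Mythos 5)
Your proof is correct and follows essentially the same route as the paper's: extract a uniform per-round gap $\gamma > 0$ from strict IC and the finiteness of $B_\Delta$, note it holds pointwise over every opponent history, sum it to get a cumulative gap of order $\gamma t$, and invoke the mean-based property plus a union bound over the $\Delta$ non-truthful bids. The only cosmetic difference is that the paper applies the mean-based property at round $T_0+1$ with $T_0 = \max_i \delta_i T/\gamma$, whereas you apply it directly at round $T$; both yield the same $o(1)$ bound on the probability of non-truthful bidding in the last iterate.
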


The next important observation is that 
when we are taking a non-trivial
combination of an IC auction
with a strictly IC auction, the resulting
auction is strictly IC. The notion
of mixture we consider is formalized in
\Cref{def:mixture-auctions}.

\begin{definition}[Mixture of Auctions]\label{def:mixture-auctions}
Let $A = \left(x(\cdot), p(\cdot)\right)$ be an IC auction and $A' = \left(x'(\cdot), p'(\cdot)\right)$
be a strictly IC
auction. For some $q \in (0,1)$ 
we define the $q$-mixture of the auctions 
$A, A'$ to the be auction $\widetilde{A}_q = \left(q \cdot x(\cdot) + (1-q)\cdot x'(\cdot), q \cdot p(\cdot) + (1-q)\cdot p'(\cdot)\right).$
\end{definition}

Notice that for the allocation rule 
$q \cdot x(\cdot) + (1-q)\cdot x'(\cdot)$
Myerson's lemma states that the corresponding payment rule that makes the auction truthful is
indeed $q \cdot p(\cdot) + (1-q)\cdot p'(\cdot).$
The following claim, whose proof follows from the definition of 
this class of auctions, formalizes the fact that the class of strictly IC auctions is closed under mixtures with IC auctions.

\begin{claim}[Mixture of IC and Strictly IC Auction]\label{clm:mixture-IC-strictly-IC}
    Let $A, A'$ be an IC, strictly IC auction, respectively.
    Then, for any $q \in (0,1)$ the auction $q\cdot A + (1-q) \cdot A'$ is strictly IC.
\end{claim}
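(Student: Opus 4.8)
The plan is to exploit the fact that a bidder's realized utility is an affine function of the allocation probability and the payment, so that the utility under the mixture auction $\widetilde{A}_q$ decomposes as a convex combination of the utilities under $A$ and $A'$. The strictness of the incentive constraints for $A'$ will then survive the mixing precisely because the weight $1-q$ placed on $A'$ is strictly positive.

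Concretely, I would fix a bidder $i \in [n]$, a valuation $v_i \in B_\Delta$, an opposing bid profile $b_{-i} \in B_\Delta^{n-1}$, and an arbitrary alternative bid $b \neq v_i$. For a generic auction with allocation $y$ and payment $r$, write the utility of bidding $\beta$ as $u_{(y,r)}(\beta) := v_i \cdot y_i(\beta, b_{-i}) - r_i(\beta, b_{-i})$. By \Cref{def:mixture-auctions}, the allocation and payment rules of $\widetilde{A}_q$ are the $q$-weighted averages of those of $A$ and $A'$; substituting these into the utility and using linearity in $(y,r)$ yields the key identity
\begin{equation*}
u_{\widetilde{A}_q}(\beta) = q\cdot u_A(\beta) + (1-q)\cdot u_{A'}(\beta) \qquad \text{for every } \beta \in B_\Delta \,.
\end{equation*}

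With this identity the claim is immediate. Since $A$ is IC we have $u_A(v_i) \geq u_A(b)$, and since $A'$ is strictly IC and $b \neq v_i$ we have $u_{A'}(v_i) > u_{A'}(b)$. Because $q \in (0,1)$, both weights are strictly positive, so
\begin{align*}
u_{\widetilde{A}_q}(v_i) &= q\, u_A(v_i) + (1-q)\, u_{A'}(v_i) \\
&\geq q\, u_A(b) + (1-q)\, u_{A'}(v_i) \\
&> q\, u_A(b) + (1-q)\, u_{A'}(b) = u_{\widetilde{A}_q}(b) \,,
\end{align*}
where the weak inequality uses $q > 0$ together with IC of $A$, and the strict inequality uses $1-q > 0$ together with strict IC of $A'$. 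As $i$, $v_i$, $b_{-i}$, and $b \neq v_i$ were arbitrary, $\widetilde{A}_q$ is strictly IC.

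There is essentially no obstacle here beyond bookkeeping: the single point that requires care is ensuring that the strict inequality for $A'$ is preserved through the mixture, which is guaranteed exactly because $q < 1$ forces a strictly positive weight on the strictly-IC component (whereas the IC constraint of $A$ need only be carried weakly). I would not need to separately verify that $\widetilde{A}_q$ is a valid truthful auction, since strict IC implies IC; moreover, as already noted just before the claim, $q\,p + (1-q)\,p'$ is precisely the payment rule that Myerson's lemma attaches to the averaged (and hence monotone) allocation $q\,x + (1-q)\,x'$.
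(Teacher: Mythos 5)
Your proof is correct and is essentially the paper's own argument: the paper dispatches this claim by saying it ``follows from the definition of this class of auctions,'' and the intended justification is exactly your observation that the mixture's utility decomposes as $q\,u_A + (1-q)\,u_{A'}$, so the weak IC inequality for $A$ and the strict one for $A'$ combine, with strictly positive weight $1-q$ on the latter, to yield strict IC. The same linearity computation appears quantitatively in the paper's proof of \Cref{thm:mixture-of-IC-with-strictly-IC-last-iterate-convergence}, where it is used to show that the mixture's strictness parameter $\gamma$ is at least the weight placed on the strictly IC component times that component's $\gamma$.
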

% The proof follows immediately from the definition.
{We remark that we can construct strictly IC
auctions using randomization; such an example
is presented in \Cref{sec:no-regret-learning-auctioneer}.} Equipped with the above results, we can show that there 
is a black-box transformation from any IC auction $A$ to 
a strictly IC auction $A'$ so that as $T\rightarrow \infty,$ any mean-based
    learning algorithms
    converges to truthful bidding, and the auction $A'$ is close to the auction $A$
    in the sense that $|x_i(b) - x'_i(b)| = o(1), |p_i(b) - p'_i(b)| = o(1), \forall i \in [n], \forall b \in B^n_\Delta.$
    %andresnote{is this uniformly on the bids?}\grigorisnote{Yes, I just edited it. Does it make sense now?}\andresnote{i think so. i'm not a big fun of o notation, but i think is correct. btw,shouldn't that be in $B_\Delta^2$?}\grigorisnote{Yes, I fixed it. Thanks!}
The formal statement of the result follows.
\begin{theorem}\label{thm:mixture-of-IC-with-strictly-IC-last-iterate-convergence}
Let $A$ be an IC auction for $n$ agents with
valuations $v_1, \ldots, v_n$. Let each agent $i \in [n]$
use a mean-based no-regret learning algorithm to bid in the auction.
Then, there exists an auction $A'$ such that 
for each agent $i \in [n]$
we have that $\lim_{T \rightarrow \infty} b^T_i =  v_i$ and $|x_i(b) - x'_i(b)| = o(1), |p_i(b) - p'_i(b)| = o(1),\forall b\in B^n_\Delta,$ where $x_i(\cdot), x'_i(\cdot)$ (resp. $p_i(\cdot), p'_i(\cdot)$) is the allocation (resp. payment) rule of $A, A'.$
\end{theorem}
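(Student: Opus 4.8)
The plan is to realize $A'$ as a mixture of the given IC auction $A$ with a fixed, genuinely randomized strictly IC auction, where the mixing weight is tuned to vanish with $T$. First I would fix one strictly IC auction $A^\star$ whose existence is guaranteed by the remark following \Cref{clm:mixture-IC-strictly-IC}; for instance one may take any allocation rule $x_i^\star(\cdot,b_{-i})$ that is \emph{strictly} increasing in $b_i$ on $B_\Delta$ (e.g.\ the proportional allocation $x_i^\star(b)=b_i/(1+\sum_j b_j)$, which satisfies $\sum_i x_i^\star<1$) together with the unique Myerson payments, yielding an IR, strictly IC auction. Because the grid $B_\Delta$ is finite and $A^\star$ is fixed, the strictness gap
\[
 \gamma^\star \;:=\; \min_{i}\min_{v_i}\min_{b\neq v_i}\min_{b_{-i}}\Big(v_i x_i^\star(v_i,b_{-i})-p_i^\star(v_i,b_{-i}) - v_i x_i^\star(b,b_{-i})+p_i^\star(b,b_{-i})\Big)
\]
is a strictly positive constant independent of $T$. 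I would then set $A' := \widetilde A_{q_T} = q_T\cdot A + (1-q_T)\cdot A^\star$, choosing $\epsilon_T:=1-q_T$ so that $\epsilon_T\to 0$ and $\epsilon_T/\delta_T\to\infty$ (e.g.\ $\epsilon_T=\sqrt{\delta_T}$, where $\delta_T$ is the mean-based parameter of the learners). By \Cref{clm:mixture-IC-strictly-IC}, $A'$ is strictly IC.

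The closeness property is then immediate from the linearity of the mixture: $x_i'(b)-x_i(b)=\epsilon_T\big(x_i^\star(b)-x_i(b)\big)$ and $p_i'(b)-p_i(b)=\epsilon_T\big(p_i^\star(b)-p_i(b)\big)$, so since allocations and payments lie in $[0,1]$ we get $|x_i'(b)-x_i(b)|\le\epsilon_T=o(1)$ and $|p_i'(b)-p_i(b)|\le\epsilon_T=o(1)$ for every $i$ and $b\in B_\Delta^n$. For convergence I would mimic the argument behind \Cref{lem:convergence-strictly-IC-auctions}, but track the $T$-dependence of the gap. Since $A$ is IC and $A^\star$ is strictly IC with gap $\ge\gamma^\star$, in $A'$ truthful bidding beats any $b\neq v_i$ by at least $\gamma_T:=\epsilon_T\gamma^\star$ \emph{in every round and against every opponent bid} $b_{-i}$. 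Hence for each $b\neq v_i$ the cumulative advantage obeys $U_i^{T-1}(v_i\mid \mathbf{b}_{-i})-U_i^{T-1}(b\mid\mathbf{b}_{-i})\ge (T-1)\gamma_T$. By our choice $\gamma_T=\epsilon_T\gamma^\star=\omega(\delta_T)$, so $(T-1)\gamma_T>\delta_T\cdot T$ for all large $T$, and the mean-based property forces the probability of playing each $b\neq v_i$ at round $T$ to be at most $\delta_T$. A union bound over the at most $\Delta+1$ bids gives that bidder $i$ plays a non-truthful bid with probability at most $(\Delta+1)\delta_T=o(1)$, i.e.\ $\lim_{T\to\infty}b_i^T=v_i$, the desired last-iterate convergence.

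The crux of the argument is the tension governing the choice of $\epsilon_T=1-q_T$: it must shrink to $0$ to keep $A'$ uniformly close to $A$, yet it must dominate the mean-based slack $\delta_T$ so that the residual strict-IC advantage $\epsilon_T\gamma^\star$, accumulated over $\Theta(T)$ rounds, still exceeds the $\delta_T\cdot T$ threshold that activates the mean-based guarantee. The feasibility of this balance hinges precisely on $\delta_T=o(1)$ (which opens a window $\delta_T\ll\epsilon_T\ll 1$) and on $\gamma^\star$ being a positive constant independent of $T$. Verifying the latter --- that a concrete randomized $A^\star$ has a strictly positive, $T$-independent strictness gap --- together with confirming that the mixture's payment rule is exactly $q_T p + (1-q_T)p^\star$ (which follows from Myerson's lemma applied to the monotone mixed allocation, as noted after \Cref{def:mixture-auctions}), is the part requiring the most care.
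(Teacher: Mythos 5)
Your proposal is correct and follows essentially the same route as the paper's proof: mix $A$ with a fixed strictly IC auction, putting a vanishing weight $\epsilon_T$ on the strictly IC component chosen so that $\epsilon_T = \omega(\delta_T)$, which makes the per-round strict advantage $\epsilon_T\gamma^\star$ accumulate past the $\delta_T \cdot T$ mean-based threshold while the closeness bounds $|x_i - x_i'| \le \epsilon_T = o(1)$ and $|p_i - p_i'| \le \epsilon_T = o(1)$ follow immediately from linearity of the mixture. The paper does exactly this with the choice $q_T = 2\delta_1/\gamma_{\widetilde{A}}$ (where $\delta_1$ is the slowest agent's mean-based parameter) rather than your $\sqrt{\delta_T}$, and it defers the existence of a concrete strictly IC auction to its staircase construction rather than your proportional allocation --- purely cosmetic differences, though note that on the discrete grid $B_\Delta$ the Myerson payment is not unique and one should use the integral-form payment $p_i(b) = b_i x_i(b) - \int_0^{b_i} x_i(z,b_{-i})\,dz$ to guarantee that adjacent-bid deviations are strictly suboptimal.
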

%is postponed to \Cref{app:randomized-asymptotic}.

% \grigorisnote{TODO: add examples of strictly IC auctions
% and explain the intuition. We can have some linear auction, proportional
% allocation, soft-max etc.}
% \andresnote{yes, this would be great}

%\medskip
\noindent\textbf{Equilibria of Meta-Game in Repeated Strictly IC Auctions}\label{sec:meta-game-strictly-IC}
We now describe the implications that our results have
for the meta-game that we alluded to in \Cref{sec:intro}. Recall that this game is defined as follows: the agents submit their valuations
to mean-based no-regret learning algorithms and then, given
these fixed valuations,
they bid on the behalf of the agents in a repeated truthful auction $A$.
The main question we are interested in understanding is
given the specification of the auctions and the valuations
of the agents, what is the optimal value they should
submit to the algorithms in order to maximize their utility,
after a large number of steps?

Despite the fact that $A$ is IC and IR, \citet{kolumbus2022auctions}
showed that, rather surprisingly,
when two agents use MWU to participate
in repeated second price auctions there are
instances where the agent
with the low valuation has an incentive to report
a higher value to its algorithm than its true one.
This is because the valuation reported by one
agent affects the bidding distribution that
the other agent will converge to. To illustrate
this point, assume that the low type reports
$v'_L > v_H$ to its bidding algorithm. Then, the 
bidder with type $v_H$ will take the role of the low
bidder in the interaction and will 
converge to bidding strictly below
$v_H.$ Now if its expected bid is also below $v_L$, this
will generate strictly positive utility for its opponent.
Using our previous construction from  \Cref{thm:mixture-of-IC-with-strictly-IC-last-iterate-convergence}
and transforming the auction $A$ to a strictly
IC auction $A',$
we can show that in the new meta-game every agent
can gain at most $o(1)$ more utility in the long run
by misreporting to the algorithm than reporting its true
valuation. The reason why we observe a
qualitatively different behavior in our construction
is that every algorithm converges to bidding
its reported value, no matter what the reported
values of the other agents are.
Due to space constraints, we refer the interested 
reader to \Cref{app:randomized-asymptotic}

%\medskip
\noindent\textbf{Revenue Maximization in the Learning Setting}\label{sec:revenue-optimal-auctions-repeated-learners}
In this section, we illustrate
another application of \Cref{thm:mixture-of-IC-with-strictly-IC-last-iterate-convergence}
to revenue maximization in the learning setting. 
We are interested in auctions with strong revenue guarantees
when the bids are coming from the limiting distribution
of the algorithms, as $T \rightarrow \infty.$
% Recall
% the revenue maximization problem we are interested in,
% which is a modification of
% the classical revenue maximization problem to account for the fact
% that the bids come from no-regret learning algorithms.
% \begin{itemize}
%     \item There are $n$ agents who draw their bids from i.i.d. regular distributions  $F_1,\ldots,F_n.$
%     \item The auctioneer, who knows the distributions
%     $F_1,\ldots,F_n$, commits to an auction $A.$
%     \item The agents submit their valuations
%     to mean-based no-regret learning algorithms.
%     \item The algorithms bid in $A$ repeatedly for
%     $T$ rounds.
%     \item The goal of the auctioneer is to maximize
%     the revenue it collects
%     under the limiting distribution of the bidders.
% \end{itemize}
% \andresnote{these bullet points are in the model section}\grigorisnote{Do you think it would be worth
% it to remind the reader the setting here?}\andresnote{no, especially that we will be space-consttrained.}
This has the additional
complication that not only do agents draw their valuations
from the distribution $F,$ but
also their bids come from the limiting distribution
that the algorithms converge to, as $T \rightarrow \infty.$
As we have seen already,
this distribution depends on the valuation reported to
the algorithm,
the particular mean-based algorithm that it is using, and,
potentially, the reported valuations and the algorithms 
of the opposing bidders.

As we explained in \Cref{sec:deterministic auctions},
second price auctions with reserves have strictly
worse revenue guarantees in the setting with learning
bidders compared to the setting with rational bidders.
Using our transformation described in \Cref{thm:mixture-of-IC-with-strictly-IC-last-iterate-convergence}
we can restore their revenue guarantees. 
The following result whose formal proof is deferred to \Cref{app:randomized-asymptotic} is,
essentially, a corollary of \Cref{thm:mixture-of-IC-with-strictly-IC-last-iterate-convergence}.
Let us denote by $\mathrm{Rev}(A;b_1,\ldots,b_n)$ the revenue of some auction $A$  and
by $\mathrm{Rev}(\mathrm{Myerson};b_1,\ldots,b_n)$ the revenue of Myerson's optimal auction for $F,$ where the bid profile is $b_1,\ldots,b_n \in B^n_\Delta$.

\begin{corollary}\label{cor:revenue-guarantees-randomized}
    Consider an environment with $n$ agents
    that draw their values i.i.d. from some regular distribution
    $F$ and participate in repeated single-item auctions using 
    mean-based no-regret learning algorithms. Then, there is a randomized auction
    $A$ so that 
    \begin{align*}
        \E_{v_1,\ldots,v_n \sim F^n}\bigg[ \lim_{T \rightarrow \infty}\E_{b_1\sim b_1^T,\ldots,b_n \sim b^T_n}[\mathrm{Rev}(A;b_1,\ldots,b_n)]~\bigg|~ v_1,\ldots, v_n\bigg] \\
        \geq \E_{v_1,\dots,v_n \sim F^n}[\mathrm{Rev}(\mathrm{Myerson;v_1,\ldots,v_n)}] - o(1).        
    \end{align*}
\end{corollary}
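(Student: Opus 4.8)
The plan is to read off the result from \Cref{thm:mixture-of-IC-with-strictly-IC-last-iterate-convergence} by instantiating the base IC auction with Myerson's optimal auction. Concretely, let $M$ denote Myerson's optimal (truthful, deterministic) auction for the regular distribution $F$, with allocation and payment rules $x^M, p^M$; since $F$ is regular, $M$ is well-defined, IC, and IR. Applying \Cref{thm:mixture-of-IC-with-strictly-IC-last-iterate-convergence} with this base auction produces a randomized, strictly IC auction $A$ (obtained as a $q_T$-mixture of $M$ with a fixed strictly IC auction, in the sense of \Cref{def:mixture-auctions}, with the mixing weight on $M$ tending to $1$) such that $|x^M_i(b) - x^A_i(b)| = o(1)$ and $|p^M_i(b) - p^A_i(b)| = o(1)$ for every $i \in [n]$ and every bid profile $b \in B^n_\Delta$, and for which every mean-based no-regret learner converges to truthful bidding in the last-iterate sense, i.e. $\lim_{T \to \infty} b^T_i = v_i$ for each $i$.

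The first step is to evaluate the inner limit for a fixed realized valuation profile $v = (v_1, \ldots, v_n)$. Because the bid grid $B_\Delta$ is finite and the last-iterate bid distribution $b^T_i$ converges to the point mass at $v_i$, while $\mathrm{Rev}(A;\cdot)$ is a bounded functional of the bids (all payments lie in $[0,1]$ by IR and the normalization of valuations), convergence in distribution to a point mass on the finite set $B^n_\Delta$ forces convergence of the bounded expectation, giving $\lim_{T \to \infty}\E_{b \sim b^T}[\mathrm{Rev}(A;b)\mid v] = \mathrm{Rev}(A;v_1,\ldots,v_n)$. The second step compares $\mathrm{Rev}(A;v)$ to $\mathrm{Rev}(M;v)$ pointwise: writing revenue as a sum of payments and invoking the $o(1)$-closeness of the payment rules, $\mathrm{Rev}(A;v) = \sum_i p^A_i(v) \geq \sum_i p^M_i(v) - n\cdot o(1) = \mathrm{Rev}(\mathrm{Myerson};v_1,\ldots,v_n) - o(1)$, where the $o(1)$ is uniform over $v$ since $n$ and $|B_\Delta|$ are constants independent of $T$. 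Chaining these two steps and taking the expectation over $v \sim F^n$ yields the stated inequality.

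The main obstacle is making precise the interaction of the two $T \to \infty$ limits. The auction $A = A_T$ is itself a $T$-dependent family: its mixing weight $q_T \to 1$ so that it stays $o(1)$-close to $M$, yet the residual strict-IC gap must remain large enough relative to the mean-based parameter $\delta_T$ to guarantee convergence of the learners by horizon $T$. I would verify that \Cref{thm:mixture-of-IC-with-strictly-IC-last-iterate-convergence} indeed delivers \emph{both} properties for the \emph{same} family $\{A_T\}$, so that the closeness bound (which is uniform in $b$) can be combined with the limit-of-expectation identity without any circular dependence on the rate at which $q_T \to 1$. Once that is in hand, the remainder is routine bookkeeping: revenue is a bounded functional that is continuous with respect to convergence in distribution on the finite bid grid, so no integrability or interchange-of-limit subtleties arise beyond a standard bounded-convergence argument.

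\medskip
\noindent In summary, \Cref{cor:revenue-guarantees-randomized} is essentially a direct consequence of \Cref{thm:mixture-of-IC-with-strictly-IC-last-iterate-convergence}: the transformation restores truthful bidding, convergence to truthfulness lets us replace the limiting bids by the true values, and the $o(1)$ payment-closeness transfers the revenue guarantee of Myerson's auction to the randomized auction $A$ up to a vanishing loss.
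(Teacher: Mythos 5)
Your proposal is correct and takes essentially the same route as the paper's proof: instantiate \Cref{thm:mixture-of-IC-with-strictly-IC-last-iterate-convergence} with Myerson's optimal auction as the base IC auction, use last-iterate convergence to truthful bidding together with the $o(1)$ closeness of the payment rules to compare revenues pointwise for each fixed valuation profile, and then take the expectation over $v_1,\ldots,v_n \sim F^n$. The double-limit subtlety you flag (the auction is really a $T$-dependent family $\{A_T\}$) is a fair point that the paper glosses over, but it is resolved exactly as you suggest, since the theorem's single choice of mixing weight delivers the convergence and the closeness guarantees simultaneously.
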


Given the results
from \Cref{thm:deterministic-auctions-not-revenue-optimal} and 
\Cref{cor:revenue-guarantees-randomized} we would like to remark the following.
\begin{remark}[Randomized Auctions vs. SPA with Reserve]\label{rem:randomized-vs-spa}
    Our results illustrate that randomized auctions have strictly
    better revenue guarantees compared to SPA with reserve price,
    when the bidders are using mean-based 
    no-regret learning algorithms.
    This is a property of randomized auctions that is not witnessed in the 
    setting where the bidders are fully rational, as proven by \citet{myerson1981optimal}.
\end{remark}

\section{Revenue Maximization in the Finite Time Horizon Setting}\label{sec:no-regret-learning-auctioneer}
So far, we have focused on the asymptotic
regime and we have studied the convergence
of the learning bidders under various auctions. 
In this section, we study the \emph{finite-horizon} setting,
where our goal is to come up with auctions that have
strong \emph{revenue} guarantees for the auctioneer.
% \subsection{Prior-Free Setting}\label{sec:}
We focus on the \emph{prior-free} setting, meaning
that the auctioneer does not have 
any distributional knowledge about the valuation
of the agents.
% \footnote{The distributions could
% be different for each agent.}
Similarly to the rest of the paper,
we assume that the two buyers are
using mean-based no-regret learning algorithms to
participate in single-item auctions for 
$T$ rounds.
Since we are working on the prior-free setting, 
it is natural to compete with the
 cumulative revenue of the second-price auction.
%  \footnote{The results we discuss in this section also hold when the benchmark
%  is second-price auctions with some reserve.}
% For any time horizon $T$, 
% we consider the following natural
% class of auctions
% \[
%     \calA_T = \{p_T \cdot A + (1-p_T) \cdot A': p_T \in [0,1] \} \,,
% \]
% where $A, A'$ are any truthful auctions that are independent of $T.$\grigorisnote{TODO:define the previous set more formally.}
% Let $\mathrm{SPA}$ denote
% the second price auction.
The goal of the auctioneer is to choose
an auction in a way that minimizes 
% the following metric 
\[
    \wt{\reg}_T(A;v_L,v_H) = \sum_{t=1}^T \rev(v_L,v_H;\SP) - \E\left[\sum_{t=1}^T \rev(b_L^t,b_H^t;A) \right] \,,
\]
where the expectation is taken with respect to 
the randomness of the learning algorithms
and, potentially, the auction. We will refer to this benchmark as the
{\emph{auctioneer regret}.}
% Moreover, we are interested in auction
% schedules that have small \emph{variance}, which is 
% defined as
% \[
%     \wt{\vars}(A_1,\ldots,A_T) = 1+\sum_{t=2}^T \mathbbm{1}[A_t \neq A_{t-1}] \,.
% \]
One quantity that will be useful for the derivation
of our results is the following
\[
    \gamma_A = \min_{i\in \{1,2\},b_i, b_{-i}, v_i \in B^3_\Delta: b_i \neq v_i} \left\{ \left( v_i\cdot x_i(v_i,b_{-i}) - p_i(v_i,b_{-i})\right) -  \left( v_i\cdot x_i(b_i,b_{-i}) - p_i(b_i,b_{-i})\right)\right\} \,,
\]
i.e., the minimum increase in the utility by bidding truthfully
instead of bidding non-truthfully in $A.$
% In this section, we assume that the mean-based algorithms
% are parametrized by the same $\eta_T = o(1).$\footnote{If this is not the case, all our results hold by setting $\eta_T = \min\{\eta^1_T,\eta^2_T\}.$} 

% \subsection{Auctioneer Regret Bounds in Terms of $T$}\label{sec:modified-regret-bounds-T}

Our first goal is to understand the dependence of the auctioneer
regret on the time horizon $T$. Then, we will move on to establishing bounds with respect to the number of discretized bids $\Delta.$
Our first result shows that given any strictly IC auction $A$ there exists an auction $A_T$ that achieves auctioneer regret $O\left(T\cdot\sqrt{\frac{\Delta \cdot \delta_T }{\gamma_A}}\right).$ This is formalized below and the proof is postponed to \Cref{app:finite-horizon}.

\begin{proposition}\label{prop:regret-single-auction-upper-bound}
There exists auction $A_T$ which is a mixture of some strictly IC auction
$A$ and $\mathrm{SPA}$ such that, for all $v_L, v_H \in [0,1]^2$ and for all $\delta_T$-mean-based learning
algorithms it holds that
$
     \wt{\reg}_T(A_T;v_L,v_H) = {O}\left(\sqrt{\frac{\Delta \cdot \delta_T}{\gamma_A}} \cdot T \right), \forall v_L, v_H \in B^2_\Delta \,.
$
\end{proposition}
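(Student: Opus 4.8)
The plan is to take $A_T$ to be the $q$-mixture $A_T = q\cdot\SP + (1-q)\cdot A$ of the second-price auction with the strictly IC auction $A$, where the mixing weight $q=q_T \in (0,1)$ is tuned at the very end as a function of $\delta_T$ and $\gamma_A$. By \Cref{clm:mixture-IC-strictly-IC}, $A_T$ is itself strictly IC, so \Cref{lem:convergence-strictly-IC-auctions} applies. Moreover, since the utility of the truthful-versus-untruthful deviation in $\SP$ is nonnegative (SPA is IC) while $A$ contributes a gap of at least $\gamma_A$, the per-round incentive gap of the mixture satisfies $\gamma_{A_T} \ge (1-q)\,\gamma_A$. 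This quantity is what drives the speed of convergence.

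Next I would control the convergence time. Fix a bidder $i$ and any non-truthful bid $b\neq v_i$. Because $\gamma_{A_T}$ is defined as a minimum over all opponent bids, the per-round utility advantage of bidding $v_i$ over $b$ is at least $(1-q)\gamma_A$ no matter what the opponent does; hence after $t$ rounds the cumulative advantage $U_i^{t-1}(v_i\mid \mathbf{b}^t_{-i}) - U_i^{t-1}(b\mid \mathbf{b}^t_{-i})$ is at least $(t-1)(1-q)\gamma_A$. Setting $t^\star := \tfrac{\delta_T T}{(1-q)\gamma_A}$, for every round $t>t^\star$ this cumulative advantage exceeds $\delta_T T$, so the $\delta_T$-mean-based property (\Cref{def:mean-based-learner}) forces the probability of playing $b$ to be at most $\delta_T$. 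A union bound over the at most $\Delta$ non-truthful bids shows that, past round $t^\star$, each bidder bids truthfully except with probability at most $\Delta\delta_T$.

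Then I would bound the regret epoch by epoch. In the first epoch ($t\le t^\star$) I use only that revenues lie in $[0,1]$, so the per-round regret is at most $1$ and the epoch contributes $O(t^\star)=O\big(\tfrac{\delta_T T}{(1-q)\gamma_A}\big)$. In the second epoch ($t>t^\star$), I condition on whether both bidders are truthful. When they are, the mixture identity $\rev(\cdot;A_T)=q\,\rev(\cdot;\SP)+(1-q)\,\rev(\cdot;A)$ together with $\rev(\cdot;A)\ge 0$ gives $\rev(v_L,v_H;\SP)-\rev(v_L,v_H;A_T)\le (1-q)\,\rev(v_L,v_H;\SP)\le 1-q$; accounting for the residual mass $\le 2\Delta\delta_T$ on non-truthful profiles (on which revenues are again in $[0,1]$), the expected per-round regret in this epoch is at most $(1-q)+2\Delta\delta_T$. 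Summing the two epochs gives $\wt{\reg}_T(A_T;v_L,v_H)=O\big(\tfrac{\delta_T T}{(1-q)\gamma_A}+T(1-q)+T\Delta\delta_T\big)$. Choosing $1-q=\sqrt{\delta_T/\gamma_A}$ (which lies in $(0,1)$ for $T$ large, since $\delta_T=o(1)$) balances the first two terms at $O\big(T\sqrt{\delta_T/\gamma_A}\big)$; because $\Delta\ge 1$ and $\gamma_A\le 1$ this is at most $O\big(T\sqrt{\Delta\delta_T/\gamma_A}\big)$, and for $T$ large enough that $\Delta\delta_T\gamma_A\le 1$ the leftover term $T\Delta\delta_T$ is likewise $O\big(T\sqrt{\Delta\delta_T/\gamma_A}\big)$, yielding the claimed bound.

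I expect the main obstacle to be the convergence step: arguing that the cumulative truthfulness advantage grows at the full rate $\gamma_{A_T}$ per round \emph{uniformly} over the opponent's random and adaptive bid sequence (which is exactly what the minimization in the definition of $\gamma_A$ buys us), and then translating the per-bid mean-based guarantee into a union-bounded truthfulness guarantee while only paying the $\Delta$ factor that shows up in the final bound. The revenue accounting in the first epoch and the mixture identity are routine once boundedness of revenues is noted.
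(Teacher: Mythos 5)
Your proposal is correct and follows essentially the same route as the paper's proof: the same SPA/strictly-IC mixture, the same bound $\gamma_{A_T}\ge (1-q)\gamma_A$, the same mean-based convergence time $t^\star = \frac{\delta_T T}{(1-q)\gamma_A}$ with a union bound over the $\Delta$ non-truthful bids, and the same two-epoch regret decomposition. The only differences are cosmetic: you parameterize by the SPA weight $q$ rather than the strictly-IC weight $p_T=1-q$, and you tune $1-q=\sqrt{\delta_T/\gamma_A}$ instead of the paper's $p_T=\sqrt{2\Delta\delta_T/\gamma_A}$, which still yields (in fact slightly improves on) the claimed $O\bigl(\sqrt{\Delta\delta_T/\gamma_A}\cdot T\bigr)$ bound after absorbing the $T\Delta\delta_T$ term.
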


% \begin{remark}\label{rem:T-3/4-regret-upper-bound}
    We emphasize that for common mean-based no-regret learning algorithms
    such as MWU it is the case that $\delta_T = \widetilde O\left({\nicefrac{1}{\sqrt T}}\right),$ which implies 
    that the auctioneer regret from \Cref{prop:regret-single-auction-upper-bound}
    grows as $\widetilde O\left(T^{3/4}\right).$
% \end{remark}
% Given the previous result, we ask if some auction $A_T$  has better guarantees in terms of its auctioneer
% regret. 
Our next result complements this result by showing that even if the high-valuation bidder always bids truthfully and the low-valuation bidder uses MWU with learning rate  $\Theta(\nicefrac{1}{\sqrt{T}}),$
no auction can achieve a better auctioneer regret than $O(T^{3/4}).$
\begin{proposition}[Lower Bound for Constant Auction Policies]\label{prop:lower-bound-constant-auctions}
Consider a repeated auction environment where the high-valuation
bidder bids truthfully and the low-valuation bidder
uses MWU with rate $\Theta(\nicefrac{1}{\sqrt{T}}).$
Then, every truthful auction $A_T$ has an auctioneer regret $\wt{\reg}_T(A_T;v_L,v_H) \geq C_\Delta \cdot T^{3/4},$ where $C_\Delta > 0$ is some constant that
depends on the discretization parameter.
\end{proposition}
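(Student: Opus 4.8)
The plan is to fix the worst-case valuation pair $(v_L,v_H)$ and decompose the regret into the revenue lost \emph{before} the low bidder's MWU dynamics converge to truthful bidding and the revenue lost \emph{after} convergence, parametrizing both by a single quantity: the strict-IC gap that $A_T$ offers the low bidder at the realized profile. Since the high bidder bids $v_H$ in every round, the low bidder's counterfactual cumulative utilities simplify to $U^{t}(b)=t\cdot u(b;v_L,v_H)$ with $u(b;v_L,v_H)=v_L x_L(b,v_H)-p_L(b,v_H)$, so with MWU at rate $\eta=\Theta(\nicefrac{1}{\sqrt T})$ the low bidder plays $b$ with probability proportional to $\exp(\eta\,t\,u(b;v_L,v_H))$. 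Let $g:=u(v_L;v_L,v_H)-u(v_L-\nicefrac{1}{\Delta};v_L,v_H)\ge 0$ denote the per-round utility advantage of bidding truthfully over the nearest underbid; $g$ plays the role of $\gamma_{A_T}$ localized at $(v_L,v_H)$, and the entire argument turns on balancing the two epochs as a function of $g$.

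First I would lower-bound the pre-convergence loss when $g$ is small. Directly from the MWU weights, $\Pr[b_L^t=v_L-\nicefrac{1}{\Delta}]\ge \tfrac{1}{|B_\Delta|}\exp(-\eta\,(t-1)\,g)$, which is at least a $\Delta$-dependent constant for all $t\le 1/(\eta g)=\Theta(\sqrt T/g)$. Hence for $\Theta(\sqrt T/g)$ rounds the low bidder underbids with constant probability, and in each such round I would show the auctioneer loses a $\Delta$-dependent constant relative to the SPA benchmark $v_L$, summing to a pre-convergence loss of order $\sqrt T/g$. Second, I would lower-bound the post-convergence loss when $g$ is large. A Myerson/monotonicity computation on the low bidder's allocation curve shows that a gap of $g$ forces $x_L(v_L,v_H)\ge \Delta\cdot g$ and an information rent $u_L^*\ge g$; both subtract from the revenue extractable at $(v_L,v_H)$, so the per-round revenue at the truthful profile is at most $v_L-c_\Delta\, g$, i.e.\ a loss of order $g$ per round, hence of order $g\,T$ across the horizon.

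Combining, the regret is $\Omega(\sqrt T/g)+\Omega(g\,T)$ up to $\Delta$-dependent constants, and this expression is minimized (from the auctioneer's point of view) at $g\asymp T^{-1/4}$, where both terms equal $\Theta(T^{3/4})$. Since the bound holds for every value of $g$ --- if $g\ge T^{-1/4}$ the post-convergence term already exceeds $c_\Delta T^{3/4}$, and if $g\le T^{-1/4}$ the pre-convergence term does --- every truthful $A_T$ suffers $\wt{\reg}_T(A_T;v_L,v_H)\ge C_\Delta \,T^{3/4}$, as claimed.

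The main obstacle is making the two per-round loss bounds robust against an auction that tries to \emph{over-extract}, i.e.\ to charge the winner strictly more than the second price at the realized profile so that its per-round revenue exceeds $v_L$ and the naive loss becomes negative. An IC auction can indeed beat the second price at an isolated state, so the bounds above cannot hold pointwise for an arbitrary fixed $(v_L,v_H)$; this is precisely where the prior-free (worst-case) choice of valuations enters. I would control this by invoking monotonicity of the high bidder's allocation and the Myerson payment identity across high-bidder values: over-extraction at $(v_L,v_H)$ requires depressing the winner's allocation on bids in $(v_L,v_H)$, which by the revenue-curve argument forces a revenue deficit at a nearby profile, so the adversary can always select a pair for which both the pre- and post-convergence per-round losses are genuinely positive and of order $g$ (resp.\ constant). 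Pinning down this worst-case pair and the associated constant $C_\Delta$ --- rather than the clean two-epoch balancing --- is the technically delicate part of the proof.
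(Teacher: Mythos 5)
Your skeleton is the same as the paper's: split the horizon into a pre-convergence epoch of length $\Theta\big(1/(\eta_T\cdot\mathrm{gap})\big)$ and a post-convergence epoch, charge a constant per round in the first and $\mathrm{gap}$ per round in the second, and balance at $\mathrm{gap}\asymp T^{-1/4}$ to get $T^{3/4}$. The paper parametrizes by $q:=1-x_H(v_H,v_L)$, the allocation withheld from the winner, where you parametrize by the local IC gap $g$; the two are tied by $g\le q/\Delta$ and play the same role in the balancing.

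The genuine gap is the step you defer to your last paragraph, and it is not a deferred detail: both displayed per-round bounds are false as stated, and your single parameter $g$ cannot do the double duty you assign to it. Concretely, consider the deterministic truthful auction in which the higher bidder wins and pays the opposing bid plus $1/\Delta$ (threshold payments, hence IC and IR). In a round where the low bidder bids $v_L-1/\Delta$, the revenue is exactly $v_L$, so the pre-convergence rounds you count contribute \emph{zero} loss; and at the truthful profile the revenue is $v_L+1/\Delta>v_L$, so ``revenue $\le v_L-c_\Delta g$'' fails as well (mixing in an $\eps$-fraction of the staircase auction makes $g>0$ while both failures persist). The structural problem is that $g$ controls the persistence of precisely the one underbid, $v_L-1/\Delta$, from which the auctioneer \emph{can} extract full benchmark revenue, whereas the underbids that provably lose a constant (such as $v_L/2$) have persistence controlled not by $g$ but by the low bidder's total rent, which is at most $v_L\cdot q$. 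This is why the paper's proof works with $q$: it simultaneously upper-bounds every gap $u_L(v_L)-u_L(b)$, so that \emph{all} underbids, including $v_L/2$, keep constant probability for $\Omega(1/(q\eta_T))=\Omega(\sqrt{T}/q)$ rounds, and lower-bounds the post-convergence loss. On top of this the paper needs two devices your sketch would have to supply: the adversarial choice $v_H=v_L+1/\Delta$, so that Myerson's payment bound caps $p_H(v_H,v_L)$ by roughly $v_L\cdot x_H(v_H,v_L)$ and blocks over-extraction from the winner at the chosen pair; and the profile-switching argument --- if the auction extracts $v_L-o(1)$ even against the bid $v_L/2$, then $x_H(v_H,v_L/2)-x_H(v_L,v_L/2)=1-o(1)$, and the nearby profile $(v_L/2,\,v_L/2+1/\Delta)$ already suffers regret linear in $T$. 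Your closing paragraph gestures at exactly this resolution but leaves it as a plan; since both of your key inequalities fail without it, the proposal is incomplete at the step that constitutes the actual content of the proof.
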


The proof is postponed to \Cref{app:finite-horizon}. We note that choosing the learning rate
of MWU to be $1/\sqrt{T}$ gives the optimal no-regret
guarantees. Other choices, such as $\eta_T = \Omega(1),$ 
have trivial regret bounds.
% \anupamnote{The reader may wonder why we don't set the learning rate $\eta_T = \Omega(1)$, in which case the above calculations will only give $\Omega(\sqrt{T})$. Do we say something about that?}

Having established the previous results for repeated auctions
where the auctions remain \emph{constant} across all the iterations,
it is natural to ask whether we can get improved
results when the auctioneer is allowed to change the underlying
auction, but in a way that is \emph{oblivious} to the bids that 
bidders have submitted so far. In other words, the auctioneer
has to commit to an \emph{auction schedule} $\{A_1,\ldots,A_T\}$
\emph{before} the beginning of the interaction. We extend
the definition of the auctioneer regret in a natural 
way to allow for different auctions in every timestep and 
we denote
$
    \wt{\reg}_T(A_1,\ldots,A_T;v_L,v_H) = \sum_{t=1}^T \rev(v_L,v_H;\SP) - \E[\sum_{t=1}^T \rev(b_L^t,b_H^t;A_t)] \,.
$
Our next result shows that there exists an auction schedule where
the auctioneer changes the underlying auction only \emph{once}
throughout the interaction so that its regret is bounded
by $\widetilde O(\delta_T \cdot T).$ For typical choices
of $\eta_T$ this translates to an auctioneer regret bounded
by $\widetilde O({\sqrt{T}}).$
The main insight is that the auctioneer can split the interaction
into two intervals: the first interval has size $T_0,$ for some 
appropriately chosen $T_0 \in [T],$ where the auctioneer uses some strictly IC auction $A$ that encourages 
the learners to converge to bidding truthfully. Then, assuming
that $T_0$ is large enough to guarantee this convergence,
the auctioneer switches to using second-price auction. 
This is perhaps counterintuitive because in other no-regret learning settings,
such as multi-armed bandits, the optimal regret bound is achieved
when \emph{exploration} and \emph{exploitation} are happening
simultaneously, whereas in our setting these two phases
are separated.
% \grigorisnote{does the previous sentence make sense?}\andresnote{i think so but maybe a  bit detour of our point. if you think is worth tom mention i'd suggest to succintly add this to the intro}

\begin{theorem}\label{thm:auction-schedule-regret-upper-bound}
There exists an auction schedule $(A_1,\ldots,A_T)$
in which $A_1 = A_2 = \ldots = A_{T_0} = A,$ where $A$ is any strictly IC
auction,
and $A_{T_0 + 1} = A_{T_0 + 2} = \ldots = A_T = \SP$, 
that 
achieves
$
     \wt{\reg}(A_1,\ldots,A_T;v_L,v_H) \leq O\left(  \delta_T \cdot T \cdot \left(\frac{1}{\gamma_A} + \Delta \right)\right), \forall v_L, v_H \in B^2_\Delta \,.
$
\end{theorem}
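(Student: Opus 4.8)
The plan is to analyze the two-phase schedule directly: split the auctioneer regret into the contribution of the first $T_0$ rounds (running the strictly IC auction $A$) and the remaining rounds (running $\SP$), and choose $T_0$ to balance the two. Concretely, I would take $T_0 = \lceil \delta_T T / \gamma_A \rceil + 1$, so that $T_0 \cdot \gamma_A > \delta_T \cdot T$, and decompose
\[
\wt{\reg}(A_1,\ldots,A_T;v_L,v_H) = \underbrace{\sum_{t=1}^{T_0}\big(\rev(v_L,v_H;\SP) - \E[\rev(b_L^t,b_H^t;A)]\big)}_{\text{Phase 1}} + \underbrace{\sum_{t=T_0+1}^{T}\big(\rev(v_L,v_H;\SP) - \E[\rev(b_L^t,b_H^t;\SP)]\big)}_{\text{Phase 2}} \,.
\]
Since $\delta_T = o(1)$ and $\gamma_A$ is a fixed constant, $T_0 = O(\delta_T T / \gamma_A) \ll T$ for $T$ large, so the schedule is well defined.

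For Phase 1 no convergence argument is needed: every truthful, IR auction has per-round revenue in $[0,1]$ and $\rev(v_L,v_H;\SP)=v_L\le 1$, so each summand is at most $1$ and the total Phase 1 contribution is at most $T_0 = O(\delta_T T/\gamma_A)$. This is exactly the source of the $1/\gamma_A$ term; the point is simply that the ``learning'' phase is short.

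The crux is Phase 2, where I would show the learners have locked onto truthful bidding. The key observation is that the cumulative utility gap favoring truthfulness is built up during Phase 1 and never erodes afterwards. Because $A$ is strictly IC with margin $\gamma_A$, each of the first $T_0$ rounds contributes at least $\gamma_A$ to $U_i^{t-1}(v_i\mid \mathbf{b}^t_{-i}) - U_i^{t-1}(b'\mid \mathbf{b}^t_{-i})$ for every $b'\ne v_i$, for \emph{every} realization of the opponent's bids; and because $\SP$ is IC, every subsequent round contributes a non-negative amount to this same difference. Hence for all $t>T_0$ and all $b'\ne v_i$ we have, deterministically, $U_i^{t-1}(v_i) - U_i^{t-1}(b') \ge T_0\gamma_A > \delta_T T$. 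By the mean-based property (\Cref{def:mean-based-learner}) this forces $\Pr[b_i^t=b']\le \delta_T$, and a union bound over the at most $\Delta$ bids $b'\ne v_i$ (recall agents never bid above their value) and over the two bidders shows that in each Phase 2 round both bidders bid truthfully with probability at least $1-2\Delta\delta_T$. When both bid truthfully the corresponding summand vanishes since $\rev(b_L^t,b_H^t;\SP)=\rev(v_L,v_H;\SP)$; otherwise the loss is at most $1$. Thus each Phase 2 summand is at most $2\Delta\delta_T$, contributing $(T-T_0)\cdot 2\Delta\delta_T = O(\Delta\delta_T T)$, which yields the $\Delta$ term. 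Adding the two phases gives $\wt{\reg}=O\big(\delta_T T(1/\gamma_A+\Delta)\big)$.

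The main obstacle, and the step I would treat most carefully, is justifying that the utility advantage accumulated in Phase 1 survives intact through all of Phase 2. This relies on two facts working in tandem: that $\SP$ never \emph{decreases} the truthful-versus-nontruthful gap (its IC property gives per-round contributions $\ge 0$), and that the Phase 1 bound $T_0\gamma_A$ holds pointwise over every opponent realization rather than only in expectation, so that the mean-based hypothesis can be invoked with certainty at each round $t>T_0$. A secondary technical point is confirming that the per-round revenue differences are genuinely bounded in $[0,1]$ so that the rare non-truthful events contribute only $O(\Delta\delta_T)$ per round; this follows from individual rationality and from bids being capped at the bidders' values.
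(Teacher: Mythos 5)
Your proof is correct and follows essentially the same route as the paper's: split the horizon at $T_0 \approx \delta_T T/\gamma_A$, bound the Phase 1 regret trivially by $T_0$, and use the strictly-IC margin $\gamma_A$ together with the mean-based property (plus a union bound over the $\Delta$ bids and two bidders) to show the Phase 2 per-round loss is at most $2\Delta\delta_T$. You are in fact slightly more careful than the paper on one point: you explicitly verify that the utility gap accumulated in Phase 1 cannot erode during Phase 2, since the IC property of the second-price auction makes each subsequent round's contribution to the truthful-versus-nontruthful gap non-negative — a step the paper's proof leaves implicit.
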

The formal proof of this result is postponed to \Cref{app:finite-horizon}.
% \begin{remark}\label{rem:auctio-schedule-sqrt-T-upper-bound}
    The previous result shows that for $\eta_T = \smash{\widetilde O\left({\nicefrac{1}{\sqrt T}}\right)}$ the auctioneer
    regret of the auction schedule we designed is $\smash{\widetilde O(\sqrt{T})}.$
    Thus, we see an $\widetilde O(T^{1/4})$ improvement compared to the
    previous setting where the auctioneer was restricted to 
    be using the same auction across all iterations.
% \end{remark}

Next, we prove that even if the auctioneer uses 
a different auction in every step, our bound from \Cref{thm:auction-schedule-regret-upper-bound} is (almost) optimal with respect to the time horizon $T.$
The proof idea is that when the agents are using MWU with learning
rate $\eta_T,$ the signals in the first $O(1/\eta_T)$
steps are insufficient for them to move their bidding distribution to truthful bids. I.e., with at least
some constant probability in every round within the first $O(1/\eta_T)$
rounds, they will not be bidding their true valuation. 
Importantly, our lower bound holds even in the (unrealistic) setting where the 
auctioneer can choose $A_1,\ldots,A_T,$ conditioned on $v_L, v_H.$
This is formalized
below; the proof is postponed to \Cref{app:finite-horizon}.

\begin{proposition}\label{prop:root-T-lower-bound-all-environments}
When two agents are using MWU with learning rate $\nicefrac{1}{\sqrt{T}}$
to participate in repeated single-item auctions for all the 
auction schedules $(A_1,\ldots,A_T)$ it holds that 
$
    \smash{\wt{\reg}(A_1,\ldots,A_T;v_L,v_H)} = \Omega(\sqrt{T}) \,.
$
% even when the auctioneer can choose $(A_1,\ldots,A_T)$ conditioned
% on $v_L,v_H.$
\end{proposition}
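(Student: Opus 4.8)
The plan is to exhibit a single valuation profile---namely $v_L = v_H = 1$---on which \emph{every} oblivious (indeed, every $(v_L,v_H)$-dependent) auction schedule incurs regret $\Omega(\sqrt{T})$. The first observation is that this choice makes every per-round regret term nonnegative: for any truthful and IR auction $A_t$ and any realized bids $(b_L^t,b_H^t)$, individual rationality gives $p_i^t \leq b_i^t x_i^t$, so the revenue is at most $b_L^t x_L^t + b_H^t x_H^t \leq \max(b_L^t,b_H^t)(x_L^t + x_H^t) \leq \max(b_L^t,b_H^t) \leq 1 = \rev(v_L,v_H;\SP)$, using $x_L^t + x_H^t \leq 1$. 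Hence each summand $\rev(v_L,v_H;\SP) - \E[\rev(b_L^t,b_H^t;A_t)]$ is nonnegative, and it suffices to lower bound the contribution of the first $\Theta(\sqrt{T})$ rounds.

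Next I would show that during these early rounds the MWU bid distributions are nearly uniform, regardless of which auctions have been played. Since each realized per-round utility lies in $[0,1]$, after $t-1$ rounds the cumulative utilities satisfy $U^{t-1}(b) - U^{t-1}(b') \leq t-1$ for all bids $b,b'$. Thus for $t \leq \alpha\sqrt{T}$ (with $\eta_T = 1/\sqrt{T}$ and $\alpha$ a small constant), $\eta_T\lr{U^{t-1}(b) - U^{t-1}(b')} \leq \alpha$, so for MWU the conditional probability of \emph{every} bid given the history is at least $q := e^{-\alpha}/(\Delta+1) > 0$. The key point is that this bound is history-independent and schedule-independent: it uses only boundedness of utilities, so it holds no matter how the adversary chose $A_1,\ldots,A_T$, and even if that choice depends on $(v_L,v_H)$.

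I would then convert near-uniformity into a per-round revenue gap. Conditioned on any history, the two bidders draw bids independently, and each places mass at least $q$ on the bid $0$; in particular each bids strictly below $1$ with probability at least $q$, so $\Pr[\max(b_L^t,b_H^t) \leq 1 - \nicefrac1\Delta] \geq q^2$. Combining with $\rev \leq \max(b_L^t,b_H^t)$, this yields $\E[\max(b_L^t,b_H^t)] \leq 1 - q^2/\Delta$, hence $\E[\rev(b_L^t,b_H^t;A_t)] \leq 1 - q^2/\Delta$, i.e., a per-round regret of at least $q^2/\Delta =: c_\Delta > 0$ for each $t \leq \alpha\sqrt{T}$. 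Summing over these rounds and discarding the nonnegative contributions of the remaining rounds gives $\wt{\reg}(A_1,\ldots,A_T;v_L,v_H) \geq c_\Delta \cdot \alpha\sqrt{T} = \Omega(\sqrt{T})$.

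The main obstacle is the second step: to make the argument robust to arbitrary (even $v$-adaptive) schedules, I must ensure the lower bound on each bid's probability holds for \emph{every} realization of the history and \emph{every} choice of auctions, which is exactly why the cumulative-utility bound is phrased purely in terms of boundedness of per-round utilities. The delicate quantitative point is that the near-uniform window has width $\Theta(1/\eta_T) = \Theta(\sqrt{T})$, and this is precisely what pins down the $\sqrt{T}$ rate---a faster learning rate would shrink the window but would violate the no-regret optimality of $\eta_T = 1/\sqrt{T}$. Finally, the choice $v_L = v_H$ is what guarantees nonnegativity of every regret term, freeing us from having to control the potentially revenue-extracting later rounds.
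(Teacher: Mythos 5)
Your proof is correct, and while it follows the same skeleton as the paper's --- (i) per-round regret is nonnegative, (ii) in the first $\Theta(1/\eta_T)=\Theta(\sqrt T)$ rounds MWU cannot concentrate its mass, so each such round loses a $\Delta$-dependent constant --- it differs in two substantive ways, one of which actually makes your argument more robust than the paper's. First, the window: the paper bounds the per-round utility gap between $v_i$ and $v_i-\nicefrac1\Delta$ by $\nicefrac1\Delta$ via Myerson's lemma, obtaining a near-uniform window of length $\Delta/\eta_T$, whereas you use only boundedness of utilities to get a window of length $\alpha\sqrt T$; both give the $\sqrt T$ rate, yours with less machinery. Second, and more importantly, the witness pair: the paper takes $v_H=v_L+\nicefrac1\Delta$ and then simply asserts that ``the auctioneer cannot have negative auctioneer regret in any interval,'' while you take $v_L=v_H=1$, for which nonnegativity is a one-line consequence of feasibility and IR-at-the-bid: revenue $\le \max(b_L^t,b_H^t)\le 1=\rev(v_L,v_H;\SP)$. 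This is not merely stylistic. For the paper's pair the asserted nonnegativity can in fact fail against valuation-aware schedules, which the proposition explicitly permits: consider the truthful, IR auction that never allocates to the low bidder, allocates to the high bidder exactly when $b_H=v_H$, and charges the Myerson payment $v_H-\nicefrac{1}{2\Delta}$ induced by linearly interpolating the allocation on $[v_L,v_H]$ (the same payment convention as the paper's staircase auction). Bidding $v_H$ then strictly dominates every other bid by $\nicefrac{1}{2\Delta}$ per round, so MWU concentrates on $v_H$ after $O(\Delta\sqrt T\log(\Delta T))$ rounds, after which the per-round revenue is $v_L+\nicefrac{1}{2\Delta}>v_L$; the total regret of this constant schedule is $O_\Delta(\sqrt T\log T)-\Omega_\Delta(T)<0$. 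Your choice $v_L=v_H$ closes exactly this hole, because with equal values no IR truthful auction can ever collect more than the benchmark, and the remaining steps of your argument --- the history- and schedule-independent bound $q=e^{-\alpha}/(\Delta+1)$ on every bid's conditional probability, conditional independence of the two bidders' randomizations, and the resulting per-round loss of at least $q^2/\Delta$ over the first $\alpha\sqrt T$ rounds --- are all sound.
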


% \subsection{Auctioneer Regret Bounds in Terms of $T,\Delta$}\label{sec:modified-regret-bounds-T-Delta}
Having established the optimal dependence with respect
to the time horizon $T,$ we now shift our attention
to understanding the dependence of the auctioneer regret on the 
discretization parameter $\Delta.$ 
First, we define an auction $\bar A$ that satisfies 
$\gamma_{\bar A} = \Theta(\nicefrac{1}{\Delta^2}).$ 
\begin{definition}[Staircase Auction]\label{def:staircase-auction}
We define the allocation rule of auction $\bar A$ in the following way:
% \begin{itemize}
     with probability $1/2$ select a bidder $i \in \{1,2\}$
    independently of their bids and then allocate to  $i$ with probability $b_i.$
% \end{itemize}
We define the payment rule in the way that makes the auction truthful.
\end{definition}
A simple application of Myerson's lemma shows that $\gamma_{\bar A} =
\Theta(\nicefrac{1}{\Delta^2}).$ This is because between any two consecutive
bids, i.e., bids whose distance is $1/\Delta,$ the increase in the allocation
is $\nicefrac{1}{2\Delta}$
and the function is linear.
A corollary of \Cref{thm:auction-schedule-regret-upper-bound} shows the following
bound in the auctioneer regret.
\begin{corollary}\label{cor:staircase-auction-regret-bound}
Let the bidders use a mean-based
learner with $\eta_T = \widetilde O(\sqrt{\nicefrac{\log \Delta}{T}})$
and the auctioneer use the schedule 
$(A_1,\ldots,A_T)$ with $A_1 = \ldots = A_{T_0} = \bar A, A_{T_0 + 1} = A_{T_0+2} = \ldots = A_T = \mathrm{SPA},$ for $T_0 = \widetilde O\left(\nicefrac{\sqrt{T}}{\Delta^2}\right)$. Then, $
     \wt{\reg}(A_1,\ldots,A_T;v_L,v_H) \leq \widetilde O\left( \Delta^2\sqrt{T} \right), \forall v_L, v_H \in B^2_\Delta \,.
$
\end{corollary}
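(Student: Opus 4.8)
The plan is to obtain \Cref{cor:staircase-auction-regret-bound} as a direct instantiation of \Cref{thm:auction-schedule-regret-upper-bound}, taking the strictly IC auction to be the staircase auction $\bar A$ of \Cref{def:staircase-auction}. The first step is to verify that $\bar A$ is genuinely strictly IC and to pin down $\gamma_{\bar A}$. Since bidder $i$ is served only in the event that it is the (independently) selected bidder, and then with probability equal to its bid, the interim allocation is $x_i(b_i,b_{-i}) = \tfrac12 b_i$, which is strictly increasing in $b_i$ with constant slope $\tfrac{1}{2\Delta}$ per discretization step. By Myerson's lemma the associated payment rule makes the auction truthful, and the strict monotonicity makes it strictly IC. Moreover, a one-step deviation changes the allocation by $\tfrac{1}{2\Delta}$ over a value gap of $\tfrac{1}{\Delta}$, so the induced utility gap is $\Theta(1/\Delta^2)$, giving $\gamma_{\bar A} = \Theta(1/\Delta^2)$ as claimed just before the corollary.

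The second step is to translate the prescribed learning rate $\eta_T = \widetilde O(\sqrt{\log\Delta/T})$ into a bound on the mean-based parameter $\delta_T$ of \Cref{def:mean-based-learner}. For MWU the round-$t$ play probability of a bid $b'$ is proportional to $\exp(\eta_T U^{t-1}(b'))$, so whenever $U^{t-1}(b) - U^{t-1}(b') > \delta T$ one has $\Pr[b'] \le \exp(-\eta_T \delta T)$. Choosing $\delta_T = \Theta\!\big(\tfrac{\log(\eta_T T)}{\eta_T T}\big)$ makes this right-hand side at most $\delta_T$, certifying that MWU is $\delta_T$-mean-based. Substituting $\eta_T = \widetilde O(\sqrt{\log\Delta/T})$ gives $\eta_T T = \widetilde O(\sqrt{T\log\Delta})$ and hence $\delta_T = \widetilde O(1/\sqrt{T})$, where the tilde absorbs the logarithmic factors (in particular the $\log\Delta$ factor carried by $\eta_T$).

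The final step is the substitution. Applying \Cref{thm:auction-schedule-regret-upper-bound} with $A = \bar A$ and the two-phase schedule (first $\bar A$ up to the switch time $T_0$, then $\SP$) yields
\[
\wt{\reg}(A_1,\ldots,A_T;v_L,v_H) \le O\!\Big(\delta_T\, T\,\big(\tfrac{1}{\gamma_{\bar A}} + \Delta\big)\Big).
\]
Plugging in $\gamma_{\bar A} = \Theta(1/\Delta^2)$ turns the parenthetical factor into $\Theta(\Delta^2 + \Delta) = \Theta(\Delta^2)$, and plugging in $\delta_T = \widetilde O(1/\sqrt T)$ then gives the claimed $\widetilde O(\Delta^2\sqrt T)$; the switch time is the one prescribed by the theorem, $T_0 = \Theta(\delta_T T/\gamma_{\bar A})$. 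I expect no serious obstacle, as the bulk of the work is done by \Cref{thm:auction-schedule-regret-upper-bound}; the one point requiring care is the $\delta_T$-versus-$\eta_T$ accounting, where the $\log\Delta$ factor built into the optimal learning rate must be tracked so that it contributes only to the suppressed logarithmic terms and does not inflate the polynomial $\Delta$-dependence beyond the $\Delta^2$ coming from $1/\gamma_{\bar A}$.
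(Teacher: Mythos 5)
Your proposal is correct and is exactly the paper's intended derivation: the paper gives no separate proof of \Cref{cor:staircase-auction-regret-bound}, only the observation that $\gamma_{\bar A} = \Theta(\nicefrac{1}{\Delta^2})$ (via Myerson's lemma applied to the linear allocation $x_i(b_i,b_{-i}) = \nicefrac{b_i}{2}$) followed by substitution into \Cref{thm:auction-schedule-regret-upper-bound} with $\delta_T = \widetilde O(\nicefrac{1}{\sqrt T})$ for MWU, which is precisely your three steps. One remark: your switch time $T_0 = \Theta(\nicefrac{\delta_T T}{\gamma_{\bar A}}) = \widetilde O\lr{\Delta^2 \sqrt T}$ is the self-consistent choice required for the learners to have converged, whereas the corollary as stated writes $T_0 = \widetilde O\lr{\nicefrac{\sqrt T}{\Delta^2}}$; the latter appears to be a typo in the paper (with that smaller $T_0$ the mean-based guarantee would not yet have kicked in), so your reading is the right one.
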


\section{Conclusion}\label{sec:conclusion}

Our work studies the behavior of learning bidders
in repeated single-item auctions, with persistent valuations.
We show the limitations of deterministic mechanisms, and how nuances such as learning rates can qualitatively affect participant behavior. Moreover, we show that randomized auctions can encourage faster convergence of bidders to truthful behavior.
We hope our work
paves the way to better understanding of learning agents' behavior 
in single-parameter environments, and of the power of randomization. 

\section*{Acknowledgements}
Anupam Gupta is supported in part  by NSF grants CCF-1955785 and CCF-2006953. Grigoris Velegkas is supported in part
by the AI Institute for Learning-Enabled Optimization at Scale (TILOS).

% Moreover, 
% our approach can be extended to more general settings.

% of this problem and complements the findings of prior work \citep{kolumbus2022auctions},
% by showcasing
% that nuances of the implementation of the learning algorithms, such
% as the ratio of their learning rates, affect the behavior
% of the participants in a \emph{qualitative} way. Our next set of
% results highlights the importance of using \emph{randomized} auctions 
% that speed-up the learning behavior of these algorithms. In particular, 
% we have shown that by carefully balancing between the randomized
% component and the deterministic component of the auctions we can 
% both help the algorithms converge to the desired bidding distribution
% and enjoy the performance guarantees that deterministic auctions exhibit
% when the bidders are \emph{truthful.} 
% We hope our work
% paves the way to better understanding of learning agents' behavior 
% in single-parameter environments, and of the power of randomization. 
% and we expect that the use of randomization will lead
% to results of a similar flavor to the ones in the single-item setting.

%\newpage
% \bibliographystyle{plainnat}
\bibliography{bib}

%%% -*-BibTeX-*-
%%% Do NOT edit. File created by BibTeX with style
%%% ACM-Reference-Format-Journals [18-Jan-2012].

\begin{thebibliography}{27}

%%% ====================================================================
%%% NOTE TO THE USER: you can override these defaults by providing
%%% customized versions of any of these macros before the \bibliography
%%% command.  Each of them MUST provide its own final punctuation,
%%% except for \shownote{}, \showDOI{}, and \showURL{}.  The latter two
%%% do not use final punctuation, in order to avoid confusing it with
%%% the Web address.
%%%
%%% To suppress output of a particular field, define its macro to expand
%%% to an empty string, or better, \unskip, like this:
%%%
%%% \newcommand{\showDOI}[1]{\unskip}   % LaTeX syntax
%%%
%%% \def \showDOI #1{\unskip}           % plain TeX syntax
%%%
%%% ====================================================================

\ifx \showCODEN    \undefined \def \showCODEN     #1{\unskip}     \fi
\ifx \showDOI      \undefined \def \showDOI       #1{#1}\fi
\ifx \showISBNx    \undefined \def \showISBNx     #1{\unskip}     \fi
\ifx \showISBNxiii \undefined \def \showISBNxiii  #1{\unskip}     \fi
\ifx \showISSN     \undefined \def \showISSN      #1{\unskip}     \fi
\ifx \showLCCN     \undefined \def \showLCCN      #1{\unskip}     \fi
\ifx \shownote     \undefined \def \shownote      #1{#1}          \fi
\ifx \showarticletitle \undefined \def \showarticletitle #1{#1}   \fi
\ifx \showURL      \undefined \def \showURL       {\relax}        \fi
% The following commands are used for tagged output and should be
% invisible to TeX
\providecommand\bibfield[2]{#2}
\providecommand\bibinfo[2]{#2}
\providecommand\natexlab[1]{#1}
\providecommand\showeprint[2][]{arXiv:#2}

\bibitem[Asker et~al\mbox{.}(2021)]%
        {asker2021artificial}
\bibfield{author}{\bibinfo{person}{John Asker}, \bibinfo{person}{Chaim
  Fershtman}, {and} \bibinfo{person}{Ariel Pakes}.}
  \bibinfo{year}{2021}\natexlab{}.
\newblock \bibinfo{booktitle}{\emph{Artificial intelligence and pricing: The
  impact of algorithm design}}.
\newblock \bibinfo{type}{{T}echnical {R}eport}. \bibinfo{institution}{National
  Bureau of Economic Research}.
\newblock


\bibitem[Asker et~al\mbox{.}(2022a)]%
        {asker2022impact}
\bibfield{author}{\bibinfo{person}{John Asker}, \bibinfo{person}{Chaim
  Fershtman}, {and} \bibinfo{person}{Ariel Pakes}.}
  \bibinfo{year}{2022}\natexlab{a}.
\newblock \bibinfo{booktitle}{\emph{The Impact of AI Design on Pricing}}.
\newblock \bibinfo{type}{{T}echnical {R}eport}. \bibinfo{institution}{Working
  Paper}.
\newblock


\bibitem[Asker et~al\mbox{.}(2022b)]%
        {asker2022artificial}
\bibfield{author}{\bibinfo{person}{John Asker}, \bibinfo{person}{Chaim
  Fershtman}, \bibinfo{person}{Ariel Pakes}, {et~al\mbox{.}}}
  \bibinfo{year}{2022}\natexlab{b}.
\newblock \showarticletitle{Artificial intelligence, algorithm design and
  pricing}. In \bibinfo{booktitle}{\emph{AEA Papers and Proceedings}},
  Vol.~\bibinfo{volume}{112}. American Economic Association,
  \bibinfo{pages}{452--56}.
\newblock


\bibitem[Banchio and Skrzypacz(2022)]%
        {banchio2022artificial}
\bibfield{author}{\bibinfo{person}{Martino Banchio} {and}
  \bibinfo{person}{Andrzej Skrzypacz}.} \bibinfo{year}{2022}\natexlab{}.
\newblock \showarticletitle{Artificial intelligence and auction design}. In
  \bibinfo{booktitle}{\emph{Proceedings of the 23rd ACM Conference on Economics
  and Computation}}. \bibinfo{pages}{30--31}.
\newblock


\bibitem[Bertrand et~al\mbox{.}(2023)]%
        {bertrand2023qlearners}
\bibfield{author}{\bibinfo{person}{Quentin Bertrand}, \bibinfo{person}{Juan
  Duque}, \bibinfo{person}{Emilio Calvano}, {and} \bibinfo{person}{Gauthier
  Gidel}.} \bibinfo{year}{2023}\natexlab{}.
\newblock \bibinfo{title}{Q-learners Can Provably Collude in the Iterated
  Prisoner's Dilemma}.
\newblock
\newblock
\showeprint[arxiv]{2312.08484}~[cs.GT]


\bibitem[Braverman et~al\mbox{.}(2018)]%
        {braverman2018selling}
\bibfield{author}{\bibinfo{person}{Mark Braverman}, \bibinfo{person}{Jieming
  Mao}, \bibinfo{person}{Jon Schneider}, {and} \bibinfo{person}{Matt
  Weinberg}.} \bibinfo{year}{2018}\natexlab{}.
\newblock \showarticletitle{Selling to a no-regret buyer}. In
  \bibinfo{booktitle}{\emph{Proceedings of the 2018 ACM Conference on Economics
  and Computation}}. \bibinfo{pages}{523--538}.
\newblock


\bibitem[Cai et~al\mbox{.}(2023)]%
        {cai2023selling}
\bibfield{author}{\bibinfo{person}{Linda Cai}, \bibinfo{person}{S~Matthew
  Weinberg}, \bibinfo{person}{Evan Wildenhain}, {and} \bibinfo{person}{Shirley
  Zhang}.} \bibinfo{year}{2023}\natexlab{}.
\newblock \showarticletitle{Selling to Multiple No-Regret Buyers}.
\newblock \bibinfo{journal}{\emph{arXiv preprint arXiv:2307.04175}}
  (\bibinfo{year}{2023}).
\newblock


\bibitem[Cai et~al\mbox{.}(2022a)]%
        {cai2022accelerated1}
\bibfield{author}{\bibinfo{person}{Yang Cai}, \bibinfo{person}{Argyris
  Oikonomou}, {and} \bibinfo{person}{Weiqiang Zheng}.}
  \bibinfo{year}{2022}\natexlab{a}.
\newblock \showarticletitle{Accelerated algorithms for monotone inclusions and
  constrained nonconvex-nonconcave min-max optimization}.
\newblock \bibinfo{journal}{\emph{arXiv preprint arXiv:2206.05248}}
  (\bibinfo{year}{2022}).
\newblock


\bibitem[Cai et~al\mbox{.}(2022b)]%
        {cai2022finite}
\bibfield{author}{\bibinfo{person}{Yang Cai}, \bibinfo{person}{Argyris
  Oikonomou}, {and} \bibinfo{person}{Weiqiang Zheng}.}
  \bibinfo{year}{2022}\natexlab{b}.
\newblock \showarticletitle{Finite-time last-iterate convergence for learning
  in multi-player games}.
\newblock \bibinfo{journal}{\emph{Advances in Neural Information Processing
  Systems}}  \bibinfo{volume}{35} (\bibinfo{year}{2022}),
  \bibinfo{pages}{33904--33919}.
\newblock


\bibitem[Cai and Zheng(2022)]%
        {cai2022accelerated}
\bibfield{author}{\bibinfo{person}{Yang Cai} {and} \bibinfo{person}{Weiqiang
  Zheng}.} \bibinfo{year}{2022}\natexlab{}.
\newblock \showarticletitle{Accelerated single-call methods for constrained
  min-max optimization}.
\newblock \bibinfo{journal}{\emph{arXiv preprint arXiv:2210.03096}}
  (\bibinfo{year}{2022}).
\newblock


\bibitem[Calvano et~al\mbox{.}(2020)]%
        {calvano2020protecting}
\bibfield{author}{\bibinfo{person}{Emilio Calvano}, \bibinfo{person}{Giacomo
  Calzolari}, \bibinfo{person}{Vincenzo Denicol{\`o}},
  \bibinfo{person}{Joseph~E Harrington~Jr}, {and} \bibinfo{person}{Sergio
  Pastorello}.} \bibinfo{year}{2020}\natexlab{}.
\newblock \showarticletitle{Protecting consumers from collusive prices due to
  AI}.
\newblock \bibinfo{journal}{\emph{Science}} \bibinfo{volume}{370},
  \bibinfo{number}{6520} (\bibinfo{year}{2020}), \bibinfo{pages}{1040--1042}.
\newblock


\bibitem[den Boer et~al\mbox{.}(2022)]%
        {den2022artificial}
\bibfield{author}{\bibinfo{person}{Arnoud~V den Boer},
  \bibinfo{person}{Janusz~M Meylahn}, {and} \bibinfo{person}{Maarten~Pieter
  Schinkel}.} \bibinfo{year}{2022}\natexlab{}.
\newblock \showarticletitle{Artificial collusion: Examining supracompetitive
  pricing by Q-learning algorithms}.
\newblock \bibinfo{journal}{\emph{Amsterdam Law School Research Paper}}
  \bibinfo{number}{2022-25} (\bibinfo{year}{2022}).
\newblock


\bibitem[Deng et~al\mbox{.}(2022)]%
        {deng2021firstPrice}
\bibfield{author}{\bibinfo{person}{Xiaotie Deng}, \bibinfo{person}{Xinyan Hu},
  \bibinfo{person}{Tao Lin}, {and} \bibinfo{person}{Weiqiang Zheng}.}
  \bibinfo{year}{2022}\natexlab{}.
\newblock \showarticletitle{Nash convergence of mean-based learning algorithms
  in first price auctions}. In \bibinfo{booktitle}{\emph{Proceedings of the ACM
  Web Conference 2022}}. \bibinfo{pages}{141--150}.
\newblock


\bibitem[Epivent and Lambin(2022)]%
        {epivent2022algorithmic}
\bibfield{author}{\bibinfo{person}{Andr{\'e}a Epivent} {and}
  \bibinfo{person}{Xavier Lambin}.} \bibinfo{year}{2022}\natexlab{}.
\newblock \showarticletitle{On Algorithmic Collusion and Reward-Punishment
  Schemes}.
\newblock \bibinfo{journal}{\emph{Available at SSRN 4227229}}
  (\bibinfo{year}{2022}).
\newblock


\bibitem[Feng et~al\mbox{.}(2021)]%
        {feng2021convergence}
\bibfield{author}{\bibinfo{person}{Zhe Feng}, \bibinfo{person}{Guru
  Guruganesh}, \bibinfo{person}{Christopher Liaw}, \bibinfo{person}{Aranyak
  Mehta}, {and} \bibinfo{person}{Abhishek Sethi}.}
  \bibinfo{year}{2021}\natexlab{}.
\newblock \showarticletitle{Convergence analysis of no-regret bidding
  algorithms in repeated auctions}. In \bibinfo{booktitle}{\emph{Proceedings of
  the AAAI Conference on Artificial Intelligence}}, Vol.~\bibinfo{volume}{35}.
  \bibinfo{pages}{5399--5406}.
\newblock


\bibitem[Guruganesh et~al\mbox{.}(2022)]%
        {guruganesh2022prior}
\bibfield{author}{\bibinfo{person}{Guru Guruganesh}, \bibinfo{person}{Aranyak
  Mehta}, \bibinfo{person}{Di Wang}, {and} \bibinfo{person}{Kangning Wang}.}
  \bibinfo{year}{2022}\natexlab{}.
\newblock \showarticletitle{Prior-Independent Auctions for Heterogeneous
  Bidders}.
\newblock \bibinfo{journal}{\emph{arXiv preprint arXiv:2207.09429}}
  (\bibinfo{year}{2022}).
\newblock


\bibitem[Hart and Mas-Colell(2000)]%
        {hart2000simple}
\bibfield{author}{\bibinfo{person}{Sergiu Hart} {and} \bibinfo{person}{Andreu
  Mas-Colell}.} \bibinfo{year}{2000}\natexlab{}.
\newblock \showarticletitle{A simple adaptive procedure leading to correlated
  equilibrium}.
\newblock \bibinfo{journal}{\emph{Econometrica}} \bibinfo{volume}{68},
  \bibinfo{number}{5} (\bibinfo{year}{2000}), \bibinfo{pages}{1127--1150}.
\newblock


\bibitem[Kolumbus and Nisan(2022a)]%
        {kolumbus2022auctions}
\bibfield{author}{\bibinfo{person}{Yoav Kolumbus} {and} \bibinfo{person}{Noam
  Nisan}.} \bibinfo{year}{2022}\natexlab{a}.
\newblock \showarticletitle{Auctions between regret-minimizing agents}. In
  \bibinfo{booktitle}{\emph{Proceedings of the ACM Web Conference 2022}}.
  \bibinfo{pages}{100--111}.
\newblock


\bibitem[Kolumbus and Nisan(2022b)]%
        {kolumbus2022and}
\bibfield{author}{\bibinfo{person}{Yoav Kolumbus} {and} \bibinfo{person}{Noam
  Nisan}.} \bibinfo{year}{2022}\natexlab{b}.
\newblock \showarticletitle{How and why to manipulate your own agent: On the
  incentives of users of learning agents}.
\newblock \bibinfo{journal}{\emph{Advances in Neural Information Processing
  Systems}}  \bibinfo{volume}{35} (\bibinfo{year}{2022}),
  \bibinfo{pages}{28080--28094}.
\newblock


\bibitem[Liaw et~al\mbox{.}(2023)]%
        {liaw2023efficiency}
\bibfield{author}{\bibinfo{person}{Christopher Liaw}, \bibinfo{person}{Aranyak
  Mehta}, {and} \bibinfo{person}{Andres Perlroth}.}
  \bibinfo{year}{2023}\natexlab{}.
\newblock \showarticletitle{Efficiency of Non-Truthful Auctions in
  Auto-bidding: The Power of Randomization}. In
  \bibinfo{booktitle}{\emph{Proceedings of the ACM Web Conference 2023}}.
  \bibinfo{pages}{3561--3571}.
\newblock


\bibitem[Mehta(2022)]%
        {mehta2022auction}
\bibfield{author}{\bibinfo{person}{Aranyak Mehta}.}
  \bibinfo{year}{2022}\natexlab{}.
\newblock \showarticletitle{Auction design in an auto-bidding setting:
  Randomization improves efficiency beyond vcg}. In
  \bibinfo{booktitle}{\emph{Proceedings of the ACM Web Conference 2022}}.
  \bibinfo{pages}{173--181}.
\newblock


\bibitem[Myerson(1981)]%
        {myerson1981optimal}
\bibfield{author}{\bibinfo{person}{Roger~B Myerson}.}
  \bibinfo{year}{1981}\natexlab{}.
\newblock \showarticletitle{Optimal auction design}.
\newblock \bibinfo{journal}{\emph{Mathematics of operations research}}
  \bibinfo{volume}{6}, \bibinfo{number}{1} (\bibinfo{year}{1981}),
  \bibinfo{pages}{58--73}.
\newblock


\bibitem[Nekipelov et~al\mbox{.}(2015)]%
        {nekipelov2015econometrics}
\bibfield{author}{\bibinfo{person}{Denis Nekipelov}, \bibinfo{person}{Vasilis
  Syrgkanis}, {and} \bibinfo{person}{Eva Tardos}.}
  \bibinfo{year}{2015}\natexlab{}.
\newblock \showarticletitle{Econometrics for learning agents}. In
  \bibinfo{booktitle}{\emph{Proceedings of the sixteenth acm conference on
  economics and computation}}. \bibinfo{pages}{1--18}.
\newblock


\bibitem[Rawat(2023)]%
        {rawat2023designing}
\bibfield{author}{\bibinfo{person}{Pranjal Rawat}.}
  \bibinfo{year}{2023}\natexlab{}.
\newblock \showarticletitle{Designing Auctions when Algorithms Learn to Bid:
  The critical role of Payment Rules}.
\newblock \bibinfo{journal}{\emph{arXiv preprint arXiv:2306.09437}}
  (\bibinfo{year}{2023}).
\newblock


\bibitem[Roughgarden(2010)]%
        {roughgarden2010algorithmic}
\bibfield{author}{\bibinfo{person}{Tim Roughgarden}.}
  \bibinfo{year}{2010}\natexlab{}.
\newblock \showarticletitle{Algorithmic game theory}.
\newblock \bibinfo{journal}{\emph{Commun. ACM}} \bibinfo{volume}{53},
  \bibinfo{number}{7} (\bibinfo{year}{2010}), \bibinfo{pages}{78--86}.
\newblock


\bibitem[Skreta(2006)]%
        {SKRETA2006293}
\bibfield{author}{\bibinfo{person}{Vasiliki Skreta}.}
  \bibinfo{year}{2006}\natexlab{}.
\newblock \showarticletitle{Mechanism design for arbitrary type spaces}.
\newblock \bibinfo{journal}{\emph{Economics Letters}} \bibinfo{volume}{91},
  \bibinfo{number}{2} (\bibinfo{year}{2006}), \bibinfo{pages}{293--299}.
\newblock
\showISSN{0165-1765}
\urldef\tempurl%
\url{https://doi.org/10.1016/j.econlet.2005.12.005}
\showDOI{\tempurl}


\bibitem[Zhang et~al\mbox{.}(2023)]%
        {zhang2023steering}
\bibfield{author}{\bibinfo{person}{Brian~Hu Zhang}, \bibinfo{person}{Gabriele
  Farina}, \bibinfo{person}{Ioannis Anagnostides}, \bibinfo{person}{Federico
  Cacciamani}, \bibinfo{person}{Stephen~Marcus McAleer},
  \bibinfo{person}{Andreas~Alexander Haupt}, \bibinfo{person}{Andrea Celli},
  \bibinfo{person}{Nicola Gatti}, \bibinfo{person}{Vincent Conitzer}, {and}
  \bibinfo{person}{Tuomas Sandholm}.} \bibinfo{year}{2023}\natexlab{}.
\newblock \showarticletitle{Steering No-Regret Learners to Optimal Equilibria}.
\newblock \bibinfo{journal}{\emph{arXiv preprint arXiv:2306.05221}}
  (\bibinfo{year}{2023}).
\newblock


\end{thebibliography}

\newpage
\appendix
%\newpage

\section{Multiplicative Weights Update (MWU)}\label{apx:mwu}
In this section we describe the version of MWU 
we consider in this work. Similar to \citet{braverman2018selling},
we are using the following version of the algorithm.

\begin{algorithm}[ht] 
	\caption{Multiplicative Weights Update Algorithm.} \label{alg:mw}
    \begin{algorithmic}[1]
    	\STATE Choose $\eta_T = \sqrt{\frac{\log \Delta}{ T}}$. Initialize $\Delta$ weights, letting $w^t_{i}$ be the value of the $i$th weight at round $t$. Initially, set all $w^0_{i} = 1$, let $v$ be the valuation of the agent.
        \FOR {$t=1$ to $T$}
			\STATE Choose bid $b_i$ with probability $p^t_{i} = w^{t-1}_{i} / \sum_{j}w^{t-1}_{j}$.
			\FOR {$j=1$ to $K$}
			    \STATE Let $u^t_j = v \cdot x^t(b_j,b') - p^t(b_j,b')$
				\STATE Set $w^t_{j} = w^{t-1}_{j} \cdot e^{\eta_T u^t_{j}}$.
			\ENDFOR
        \ENDFOR
      \end{algorithmic}
\end{algorithm}

\section{Further Related Work}\label{app:related-work}

We view our results and the setting in which we work as orthogonal to the setting of \citet{cai2023selling}. Firstly, they do not restrict themselves to truthful auctions, and for their welfare extraction results, the agents are allowed to overbid. Secondly, in their setting, redrawing valuations i.i.d. in every round helps the learning process (this was also observed by \citet{feng2021convergence}). Intuitively, consider two agents and SPA: for every valuation  of player 1, there is some positive probability that player 2’s draw is below it, hence player 1 will learn that bidding truthfully is strictly better (in expectation over the other random draw), which leads to the desired bidding behavior. In such a system, randomness is already present due to the draws of the valuations, which helps the convergence to the right bidding behavior.

Our work also differs from \citet{cai2023selling} in having different conceptual goals: we aim to ``restore'' the single-shot behavior in natural auctions, such as second-price auctions, in the presence of mean-based learning agents by making minimal modifications to the underlying auction rule. On the other hand, \citet{cai2023selling} aim to exploit the mean-based learning behavior to extract more revenue, and their auctions diverge from the truthful ones we consider in our work. Thus, in our setting, it is clear that reporting the valuation truthfully to the bidding algorithm is an (almost) optimal strategy for the agents (i.e., the so-called ``meta-game'' considered by \citet{kolumbus2022auctions} is truthful), whereas it is not clear to us whether reporting the valuations truthfully to the no-regret algorithms is an optimal strategy in the setting of \citet{cai2023selling}.

\section{Omitted Details from \Cref{sec:setting}}
\citet{SKRETA2006293} shows that our discrete-type space mechanism design problem approximates the mechanism design problem with continuous type space as $\Delta \to \infty$: specifically, Proposition~1 from that paper gives the following claims. 

\begin{claim}
A mechanism is truthful if and only for every $v_{-i}$ $x_i(v_i,v_{-i})$ is non-decreasing on $v_i$ and $p_i$ satisfy that 
$$\left|\,p_i(v_i,v_{-i}) - \left(v_i x_i(v_i,v_{-i}) - \int_0^{v_i} x_i(z,v_{-i}) dz\right)\, \right| \leq O(1/\Delta).$$
\end{claim}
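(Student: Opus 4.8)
The plan is to prove this as a discrete analogue of Myerson's lemma, where the $O(1/\Delta)$ slack comes entirely from approximating the continuous integral $\int_0^{v_i} x_i(z,v_{-i})\,dz$ by Riemann sums over the grid $B_\Delta$. I would fix the competing report $v_{-i}$ and suppress it, writing $x(\cdot),p(\cdot)$ for agent $i$'s allocation and payment as functions of its own report, and $t_k := k/\Delta$ for $k=0,\dots,\Delta$. The first step is the standard reduction of truthfulness to \emph{adjacent} incentive constraints: I would show that the global constraints $v_i x(v_i)-p(v_i)\ge v_i x(b)-p(b)$ are equivalent, once monotonicity is in hand, to the two families of local constraints between consecutive grid points $t_{k-1},t_k$.

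Next I would extract monotonicity and the payment band. Writing $d_k := x(t_k)-x(t_{k-1})$, the upward constraint (type $t_k$ does not envy $t_{k-1}$) gives $p(t_k)-p(t_{k-1})\le t_k d_k$, and the downward constraint (type $t_{k-1}$ does not envy $t_k$) gives $p(t_k)-p(t_{k-1})\ge t_{k-1}d_k$. Consistency of these forces $d_k\ge 0$, i.e. $x(\cdot)$ non-decreasing, and simultaneously pins each increment into an interval of width $(t_k-t_{k-1})d_k = d_k/\Delta$. Conversely, monotonicity with any increments in these bands makes all adjacent constraints hold, and a telescoping argument using $d_j\ge 0$ upgrades local to global incentive compatibility (for $m<k$, $\sum_{j=m+1}^k t_j d_j \le t_k\sum_{j=m+1}^k d_j = t_k(x(t_k)-x(t_m))$, which is exactly the constraint that $t_k$ not envy $t_m$). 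This yields the exact discrete characterization of truthfulness.

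The third step is the integral approximation. Telescoping $p(t_k)-p(0)=\sum_{j=1}^k (p(t_j)-p(t_{j-1}))$ and invoking the bands sandwiches $p(t_k)-p(0)$ between $\sum_{j=1}^k t_j d_j$ and $\sum_{j=1}^k t_{j-1}d_j$. Abel summation rewrites the upper sum as $t_k x(t_k)-\frac1\Delta\sum_{j=0}^{k-1}x(t_j)$, whose subtracted term is a left Riemann sum for $\int_0^{t_k}x(z)\,dz$; since $x$ is monotone with values in $[0,1]$, the Riemann error is at most $\frac1\Delta(x(t_k)-x(0))\le 1/\Delta$. The two bounding sums differ only by $\sum_{j=1}^k (t_j-t_{j-1})d_j=\frac1\Delta(x(t_k)-x(0))\le 1/\Delta$, so both ends lie within $O(1/\Delta)$ of $t_k x(t_k)-\int_0^{t_k}x(z)\,dz$. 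Normalizing $p(0)=0$ (forced by individual rationality at type $0$ together with non-negative payments) gives the stated bound $\lvert p(t_k)-(t_k x(t_k)-\int_0^{t_k}x(z)\,dz)\rvert\le O(1/\Delta)$.

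I expect the main obstacle to be the reverse implication together with the correct reading of the $O(1/\Delta)$ slack: a payment that is merely within $O(1/\Delta)$ of the continuous formula need not be truthful, since non-uniform perturbations of size $1/\Delta$ can push an increment outside its feasible band of width $d_k/\Delta\le 1/\Delta$. The honest statement is that truthfulness is equivalent to monotonicity \emph{plus} the increments lying in these feasible bands, and it is the bands themselves that are $O(1/\Delta)$-close to the continuous Myerson payment. Care is also needed in the local-to-global step, which must establish monotonicity before chaining adjacent constraints, and in normalizing $p(0)$ so that the additive constant does not silently absorb an extra $\Theta(1)$ error.
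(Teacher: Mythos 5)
Your argument is correct, but note that the paper contains no proof of this statement to compare against: the claim is quoted as Proposition~1 of \citet{SKRETA2006293}, and the paper simply defers to that reference. Your self-contained reconstruction is the standard discrete Myerson argument: adjacent incentive constraints pin each payment increment into the band $t_{k-1}d_k \le p(t_k)-p(t_{k-1}) \le t_k d_k$, whose mutual consistency forces $d_k \ge 0$ (monotonicity); a telescoping-plus-Abel-summation step then shows that any payment assembled from these bands lies within $O(1/\Delta)$ of the continuous formula $v_i x_i(v_i,v_{-i}) - \int_0^{v_i} x_i(z,v_{-i})\,dz$, with the Riemann-sum error controlled by monotonicity of $x_i$. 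The most valuable part of your write-up is the caveat about the converse, and it is a genuine one: as literally stated, the ``if'' direction of the claim is false. For instance, with the constant allocation $x \equiv \nicefrac{1}{2}$ the continuous formula gives a payment that is identically zero, and perturbing the payment upward by $1/\Delta$ at a single grid point preserves the $O(1/\Delta)$ proximity while breaking the adjacent incentive constraint there. The correct equivalence is the one you isolate --- truthfulness holds if and only if the allocation is monotone and every payment increment lies in its feasible band --- and the $O(1/\Delta)$ proximity to the Myerson payment is only a necessary consequence of that, which is also the reading consistent with how the paper uses the claim (as an approximation statement, that the discrete design problem approaches the continuous one as $\Delta \to \infty$, rather than as an exact characterization). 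Your two housekeeping points, normalizing $p_i(0,v_{-i})=0$ via individual rationality plus non-negative payments and fixing a convention for extending $x_i(\cdot,v_{-i})$ off the grid so the integral is well defined, are both needed and correctly handled.
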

%\grigorisnote{TODO:double-check if we have equality, without the $1/Delta.$}
\begin{claim}
Suppose bidders are rational agents (i.e., they maximize profits). Let $OPT$ be the revenue of the revenue-maximizing mechanism (among truthful or non-truthful) that the auctioneer can implement, and $Rev(r-SPA)$ be the revenue of a Second Price Auction with reserve $r$. Then for $r= \min \{ v: \phi(v) \geq 0 \}$, we have that $OPT = Rev(r-SPA)$. 
\end{claim}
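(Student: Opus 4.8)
The plan is to carry out the discrete analogue of Myerson's optimal-auction argument. First I would invoke the revelation principle to reduce to truthful (Bayesian incentive-compatible) mechanisms: any Bayes--Nash equilibrium of an arbitrary, possibly non-truthful, mechanism can be simulated by a truthful direct mechanism with the same interim allocation and payment rules, and hence the same expected revenue. So it suffices to maximize expected revenue over the class of truthful, individually rational mechanisms, for which (by the monotonicity characterization of the preceding claim) the allocation $x_i(\cdot, v_{-i})$ is nondecreasing and the payments are pinned down by the induced payment identity.

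The heart of the argument is to rewrite the expected revenue as an expected \emph{virtual surplus}. Fixing $v_{-i}$ and writing $u_i(v_i) = v_i x_i(v_i, v_{-i}) - p_i(v_i, v_{-i})$ for the interim utility, the binding adjacent (downward) incentive constraints together with $u_i(0) = 0$ --- which is revenue-optimal and compatible with individual rationality --- pin down $u_i(v_i) = \tfrac{1}{\Delta}\sum_{w < v_i} x_i(w, v_{-i})$. Taking expectations over $v_i$ and exchanging the order of the two summations (a discrete summation-by-parts / Abel step) shows that the expected information rent equals $\E\big[\tfrac{1}{\Delta}\, x_i(v_i,v_{-i})\,\tfrac{\sum_{v'>v_i}\Pr[v']}{\Pr[v_i]}\big]$, so that the per-bidder expected payment $\E[v_i x_i - u_i]$ collapses to $\E[\phi(v_i)\,x_i(v_i,v_{-i})]$ with \emph{exactly} the discrete virtual value $\phi(v) = v - \tfrac{1}{\Delta}\tfrac{\sum_{v'>v}\Pr[v']}{\Pr[v]}$ from the model. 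Summing over bidders gives $\rev = \E_{\vec v}\big[\sum_i \phi(v_i)\, x_i(\vec v)\big]$.

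Having reduced the objective to expected virtual surplus, I would maximize it pointwise for each realization $\vec v$ subject to feasibility ($x_i \geq 0$, $\sum_i x_i \leq 1$): since the objective is linear in $x$, the optimum allocates the entire item to a bidder attaining $\max_i \phi(v_i)$ whenever this maximum is nonnegative, and withholds it otherwise. Regularity of $F$ makes $\phi$ nondecreasing, so this allocation is monotone in each bidder's own report and therefore implementable by a truthful mechanism, certifying that the pointwise bound is attained. Finally, since $\phi$ is nondecreasing and $r = \min\{v : \phi(v) \geq 0\}$, we have $\phi(v) \geq 0 \iff v \geq r$, so ``allocate to the highest virtual value when nonnegative'' coincides with ``allocate to the highest bid that is at least $r$'' --- precisely the allocation rule of the second-price auction with reserve $r$. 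Because $\SP$ with reserve is itself truthful, has a zero-value bidder paying nothing ($u_i(0)=0$), and shares this allocation rule, the payment identity forces its expected revenue to equal the optimal virtual surplus; hence $OPT = \rev(r\text{-}\SP)$.

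The main obstacle I anticipate is the discrete bookkeeping rather than any conceptual difficulty: getting the summation-by-parts to reproduce the exact discrete virtual value $\phi$ (as opposed to the $O(1/\Delta)$-approximate payment of the previous claim), and handling the boundary and tie cases --- realizations where the maximizing virtual value is exactly $0$, or where several bidders tie for the highest value --- where the allocation is revenue-indifferent but must still be chosen monotonically and consistently with the $\SP$-with-reserve tie-breaking, so that the two mechanisms match \emph{exactly} rather than merely up to lower-order terms.
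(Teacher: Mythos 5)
Your proposal is correct, but note that the paper itself offers no proof of this claim: both claims in that appendix section are imported wholesale from Proposition~1 of \citet{SKRETA2006293}, so the ``paper's proof'' is a citation, and what you have written is precisely the self-contained discrete-Myerson derivation that the cited reference carries out. Your sketch gets the exact identities right: with $u_i(0)=0$ and the downward-adjacent constraints telescoped, the rent is $u_i(v)=\frac{1}{\Delta}\sum_{w<v}x_i(w,v_{-i})$, and exchanging the order of summation reproduces the paper's discrete virtual value $\phi(v)=v-\frac{1}{\Delta}\frac{\sum_{v'>v}\Pr[v']}{\Pr[v]}$ exactly, not merely up to $O(1/\Delta)$; moreover, because the reserve $r=\min\{v:\phi(v)\ge 0\}$ and the opposing bids are grid points, the threshold payment of the second-price auction with reserve makes those adjacent constraints bind exactly, so $\mathrm{SPA}$ with reserve attains the virtual-surplus upper bound with equality rather than approximately. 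The two difficulties you flag are the right ones and both are benign: any deterministic (or symmetric random) tie-break keeps $x_i(\cdot,v_{-i})$ monotone, and ties cause no loss because revenue equals virtual surplus in expectation, not ex post. Two points to phrase more carefully in a full write-up: (i) for the upper bound you should use the IC constraints as inequalities, $u_i(v)\ge u_i(v-\nicefrac{1}{\Delta})+\frac{1}{\Delta}\,x_i(v-\nicefrac{1}{\Delta},v_{-i})$ together with $u_i(0)\ge 0$, which bounds revenue above by expected virtual surplus for \emph{every} IC--IR mechanism (``binding'' them is what the optimum does, not a hypothesis); and (ii) after the revelation-principle step the reduction is to Bayesian IC, so the constraints and rents should be written with interim allocations $X_i(v_i)=\E_{v_{-i}}[x_i(v_i,v_{-i})]$ rather than pointwise in $v_{-i}$ --- the algebra is unchanged, and the pointwise virtual-surplus maximizer you exhibit is DSIC, hence also attains the BIC optimum.
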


\begin{definition}[No-Regret Learning Property]\label{def:no-regret}
   % Let $n$ be the number of agents and conside some agent $i \in [n].$ For any $t \in [T]$, let $\tilde b^t_{-i} \in B_\Delta^{n-1}$  be an arbitrary sequence of bids submitted by    the agents $j \in [n]\setminus \{i\}.$
    % Let $U_i^T(b \mid\mathbf{b}^T_{-i} ) =  \sum_{\tau=1}^T v_i\cdot x_i^\tau(b,  b^\tau_{-i}) -p_i^\tau(b,  b^\tau_{-i})$ be the cumulative reward agent $i$ gets by using a fixed bid $b$ over the $[T]$ rounds and the other agent bids according to $\mathbf{b}^T_{-i} = \{b^\tau_{-i}\}_{\tau \in [T]}$. 
    Let $\{b_i^\tau\}_{\tau \in [T]}$ be the bid
    sequence submitted by agent $i$'s algorithm, and $U_i^T(\mathbf{b}^T ) =  \sum_{\tau=1}^T v_i\cdot x_i^\tau(b_i^\tau,  b^\tau_{-i}) -p_i^\tau(b_i^\tau,  b^\tau_{-i})$
    the total reward agent $i$ receives.
    %we denote by $U_i^t(b |\mathbf{b}^t_{-i} ) =  \sum_{\tau=1}^t v_i\cdot x_i^\tau(b, \tilde b^\tau_{-i}) -p_i^\tau(b, \tilde b^\tau_{-i})$ the cumulative reward agent $i$ gets by using a fixed bid $b$ over the $[t]$ rounds.
    %\andresnote{this should be vector of bids for $t$?}\grigorisnote{I changed the notation slightly, does it make more sense now? $b_{-i}$ is vector of bids, if you prefer to use bold for it I'm fine with it. If we stick to this notation, I will also change the no-regret definition.}\andresnote{let's use bold for vectors :)}.
    We say that this algorithm
    \emph{satisfies the no-regret property} if for any
    sequence $\mathbf{b}^T_{-i}$ it holds that \[
        \E\left[\max_{b \in B_\Delta} U_i^T(b\mid\mathbf{b}_{-i}^T) - U_i^T(\mathbf{b}^T) \right] = o(T) \,,
    \]
    where the expectation is taken
    with respect to the randomness 
    of the algorithm.
   % \andresnote{is the def. still correct?}\grigorisnote{Looks good to me!}
\end{definition}

\begin{definition}[Last Iterate Convergence (LIC)]\label{def:last-iterate-convergence}
    Let $\tildeb_i^T$ the bid distribution of bidder $i$ in the last round $T$. We say that $\tildeb_i^T$ converges to some distribution $\tilde q$ over $B_\Delta$ if $
\lim_{T\to \infty} d_{\mathrm{TV}}(\tildeb_i^T,\tilde q) = o(1),$
where $d_{\mathrm{TV}} := \frac{1}{2}\left(\sum_{b \in B_\Delta}|\tildeb_i^T(b) - \tilde q(b)|\right)$
is the Total-Variation (TV) distance between $\tildeb_i^T$ and $\tilde q.$
\end{definition}

\section{Omitted Details from \Cref{sec:deterministic auctions}}\label{app:deterministic-auctions}
\begin{definition}[Non-Degenerate auctions]\label{def:non-degenerate}
A single-item auction $(x, p)$ for two agents 
is non-degenerate with respect to the
valuation profile $(v_1, v_2)$ if
there are bid profiles $b_1 \leq v_1, b_2 \leq v_2,$ so that
\begin{align*}
    v_1 \cdot x_1(v_1, b_2) - p_1(v_1,b_2) &>  v_1 \cdot x_1(v_1-\nicefrac{1}{\Delta}, b_2) - p_1(v_1,b_2) \geq 0 \\
     v_2 \cdot x_2(b_1, v_2) - p_2(b_1,v_2) &>  v_2 \cdot x_2(b_1, v_2-\nicefrac{1}{\Delta}) - p_2(b_1,v_2-\nicefrac{1}{\Delta}) \geq 0  \,,
\end{align*}
and
\begin{align*}
    \max\left\{v_1 \cdot x_1(v_1, v_2) - p_1(v_1,v_2), v_2 \cdot x_2(v_1, v_2) - p_2(v_1,v_2) \right\} > 0\,.
\end{align*}
\end{definition}

In order to show our result, we utilize 
a characterization (cf. \Cref{thm:characterization-deterministic-auctions}) regarding the structure of truthful
deterministic
single-item auctions that charge non-negative payments (see, e.g., \citet[Thm~9.36]{roughgarden2010algorithmic}) for $n$ bidders. 

\begin{theorem}[Characterization of Truthful Deterministic Single-Item Auctions \cite{roughgarden2010algorithmic}]\label{thm:characterization-deterministic-auctions}
A single-item auction is truthful, and satisfies NPT, i.e., no payment transfers from the auctioneer to the bidders, if and only if:
\begin{itemize}
    \item $x_i(\cdot,v_{-i})$ is monotone for every $i \in [n], v_{-i} \in B_\Delta^{n-1}.$
    \item For all $i \in [n], v_i \in B_\Delta, v_{-i} \in B_\Delta^{n-1}$ we have that
    \begin{align*}
        p_i(v_i,v_{-i}) = \begin{cases}
        0, & \text{ if } x_i(v_i,v_{-i}) = 0 \\
    \min \{b \in B_\Delta: x_i(b,v_{-i}) = 1\}, & 
    \text{ if }  x_i(v_i,v_{-i}) = 1
    \end{cases}  \,.
    \end{align*}
\end{itemize}

\end{theorem}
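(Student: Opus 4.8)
The plan is to prove both directions of the equivalence, fixing an agent $i$ together with the opponents' reports $v_{-i}$ throughout, so that for each fixed $v_{-i}$ the problem reduces to a one-dimensional single-parameter analysis of the $0/1$-valued map $x_i(\cdot,v_{-i})$. For the \textbf{sufficiency} direction ($\Leftarrow$), I would assume monotonicity and the stated payment rule and verify NPT, IR, and DSIC directly. NPT is immediate, since every payment is either $0$ or a bid in $B_\Delta\subseteq[0,1]$. Writing $b^\ast:=\min\{b:x_i(b,v_{-i})=1\}$ for the critical (minimum winning) bid — well defined because monotonicity makes the winning set an up-set $\{b\ge b^\ast\}$, and if no bid wins the agent is never allocated and truthfulness is trivial — the key observation is that the payment of \emph{any} winning report equals $b^\ast$ and is independent of the agent's own report. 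Hence a winning report yields utility $v_i-b^\ast$ and a losing report yields $0$, and a two-case split on whether $v_i\ge b^\ast$ shows truthtelling is optimal: when $v_i\ge b^\ast$ it wins and gives $v_i-b^\ast\ge 0$, weakly dominating both losing ($0$) and every other winning report (same payoff); when $v_i<b^\ast$ it loses and gives $0$, dominating any winning report (payoff $v_i-b^\ast<0$). The same computation yields IR.

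For the \textbf{necessity} direction ($\Rightarrow$), I would assume DSIC and NPT (recalling that in this paper truthfulness already incorporates individual rationality) and derive the two conclusions in turn. \emph{Monotonicity} follows from the standard exchange argument: adding the DSIC constraints that forbid a type-$b$ agent from reporting $b'$ and a type-$b'$ agent from reporting $b$ gives $(b'-b)\bigl(x_i(b',v_{-i})-x_i(b,v_{-i})\bigr)\ge 0$, so $x_i(\cdot,v_{-i})$ is non-decreasing. For the \emph{payment rule}, I first handle losing reports: if $x_i(b,v_{-i})=0$, applying IR to the agent whose true value is $b$ gives $-p_i(b,v_{-i})\ge 0$, and combined with NPT ($p_i\ge 0$) this forces $p_i(b,v_{-i})=0$. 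For winning reports, a second exchange argument between any two winning bids shows the payment is constant across them; comparing the critical bid $b^\ast$ (which wins) against the adjacent lower bid $b^\ast-\nicefrac{1}{\Delta}$ (which loses and hence pays $0$) through the two corresponding DSIC constraints sandwiches this common payment in the interval $[\,b^\ast-\nicefrac{1}{\Delta},\,b^\ast\,]$, identifying it with the critical value $b^\ast$.

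The \textbf{main obstacle} is pinning the winning payment \emph{exactly} to $b^\ast$. Because types and bids live on the discrete grid $B_\Delta$, the adjacent-bid comparison only confines the winning payment to an interval of width $\nicefrac{1}{\Delta}$, so exact equality is really the continuous single-parameter statement — equivalently, Myerson's payment identity $p_i=v_i x_i(v_i,v_{-i})-\int_0^{v_i}x_i(z,v_{-i})\,dz$ specialized to a $0/1$ allocation that jumps at $b^\ast$ — and in the discrete model it holds only up to the $O(\nicefrac{1}{\Delta})$ slack already recorded in the preceding claim. I would therefore either invoke the continuous characterization of \citet{roughgarden2010algorithmic} for the exact identity, or, if the discreteness must be taken literally, state the payment conclusion with the $O(\nicefrac{1}{\Delta})$ error term, choosing whichever form is needed downstream.
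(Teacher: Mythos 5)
Your proof is correct, but there is nothing in the paper to compare it against: the paper does not prove this statement, it imports it wholesale from \citet[Thm~9.36]{roughgarden2010algorithmic}. Both halves of your argument are the standard ones (taxation-principle sufficiency via the critical bid $b^\ast$; exchange-argument monotonicity, IR+NPT forcing zero losing payments, and constancy of the winning payment for necessity), and the obstacle you flag is genuine: on the discrete grid $B_\Delta$, DSIC together with IR and NPT pins the common winning payment only to the interval $[\,b^\ast - \nicefrac{1}{\Delta},\, b^\ast\,]$. Indeed, charging every winning report $b^\ast - \nicefrac{1}{\Delta}$ remains DSIC (the type $b^\ast-\nicefrac{1}{\Delta}$ becomes indifferent between losing truthfully and deviating to win, which weak incentive compatibility permits), IR, and NPT, so the necessity direction of the exact payment formula is literally false in the discrete model and holds only up to $\nicefrac{1}{\Delta}$ slack; exact equality is forced only in the continuum (or in the edge case $b^\ast = 0$). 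This is consistent with the paper's own discrete-model statement borrowed from \citet{SKRETA2006293}, which asserts Myerson's payment identity only up to an $O(\nicefrac{1}{\Delta})$ error. The slack is also harmless for how the theorem is used downstream: the proof of \Cref{theorem:deterministic anonymous auctions do not lead to convergence} only needs that losing bids pay zero, that all winning bids pay the same report-independent price, and that the allocation is monotone --- all of which your necessity argument establishes exactly. So your proposal is sound as written; the honest resolution is exactly the one you suggest, namely either to read the theorem in its continuous form from the cited source or to restate the payment condition with the $O(\nicefrac{1}{\Delta})$ tolerance.
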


\begin{theorem}[No Deterministic Auction Leads to Truthful Bidding]\label{theorem:deterministic anonymous auctions do not lead to convergence}
Fix a valuation profile $(v_1, v_2)$ and a  deterministic truthful auction.
Suppose bidders bid using MWU and with non-degenerate
learning rates. Let  $W$ (respectively $R$), be the bidder $i \in \{1,2\}$ such that $x_i(v_i,v_{-i}) = 1$ (respectively, $x_i(v_i,v_{-i}) = 0$) and let $\hat{p} = p_W(v_W, v_R).$ Assume that $\lim_{T \rightarrow \infty}\nicefrac{\eta_T^R}{\eta_T^W} < \infty$
and $v_W \cdot x_W(v_W,v_R) - \hat p > 0.$
Then, with probability at least $0.99$,
% where $C_\Delta^1, C_\Delta^2 > 0,$ are some constants that depend on $\Delta$,
the winner's bids converge to a distribution supported
between $\hat p, v_W$ and the runner-up bidder converges to a bidding distribution satisfying $0 < \Pr[0] \leq \Pr[1/\Delta] \leq \ldots \leq \Pr[v_R].$
\end{theorem}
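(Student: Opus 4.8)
The plan is to adapt the epoch-based argument sketched in the main text, using the deterministic auction characterization (\Cref{thm:characterization-deterministic-auctions}) to control the utility landscape of each bidder. First I would fix the valuation profile and set up notation: by the characterization, since $W$ wins under truthful bidding and $R$ loses, there is a critical winning bid $\hat p$ for $W$, and by non-degeneracy there is a bid $b_W$ of the winner low enough that $R$ wins by bidding $v_R$ but loses by bidding $v_R - \nicefrac{1}{\Delta}$. The monotonicity of allocation and the explicit payment formula (payment equals the minimum winning bid) are what let me compute exact per-round utilities: whenever $R$ bids below its threshold, $W$ pays only $\hat p$ and collects $v_W x_W - \hat p > 0$ for any bid in $\{\hat p,\dots,v_W\}$, uniformly better than any bid below $\hat p$ (which wins nothing and pays nothing). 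This is the key structural input that replaces the special structure of SPA used by \citet{kolumbus2022auctions}.

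\textbf{Analyzing the winner.} I would partition the horizon into epochs of length $\Theta(1/\eta_T^R)$. The inductive claim I want to maintain is that at the start of each epoch, $R$ plays $v_R$ a constant fraction of the time, i.e.\ $\Omega(1/\eta_T^R) = \Omega(1/\eta_T^W)$ rounds per epoch (using $\eta_T^R/\eta_T^W < \infty$, so $1/\eta_T^R = \Omega(1/\eta_T^W)$). During those rounds $W$'s cumulative utility gap between the region $\{\hat p,\dots,v_W\}$ and the region $\{0,\dots,\hat p - 1/\Delta\}$ grows by $\Omega(1/\eta_T^W)$. By the MWU update rule (\Cref{alg:mw}), a cumulative utility advantage of $\Omega(1/\eta_T^W)$ translates, after multiplying by $\eta_T^W$, into a constant-factor multiplicative advantage in the weights, so $W$ moves a constant fraction of its mass from the low region into $\{\hat p,\dots,v_W\}$ each epoch. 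Summing the resulting geometric series shows $W$ bids in the low region only $O(1/\eta_T^W)$ times total over all $T$ rounds, hence its limiting distribution is supported on $\{\hat p,\dots,v_W\}$.

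\textbf{Analyzing the runner-up.} Plugging back the bound that $W$ bids below $\hat p$ only $O(1/\eta_T^W)$ times, I compare $R$'s cumulative utility for bidding $v_R$ against any alternative $b'$. The only rounds where $v_R$ strictly outperforms $b'$ are those where $W$'s bid lies in the narrow band making $v_R$'s win pivotal; by the previous step and non-degeneracy this happens at most $O(1/\eta_T^W)$ times, so the utility gap is $O(1/\eta_T^W)$. Since $\eta_T^R/\eta_T^W = O(1)$, multiplying by $\eta_T^R$ keeps the exponent $O(1)$, so $\Pr[b']/\Pr[v_R] = e^{-O(1)}$ stays bounded below by a discretization-dependent constant; combined with monotonicity of the weights this yields $0 < \Pr[0] \le \Pr[1/\Delta] \le \dots \le \Pr[v_R]$, so $R$'s expected bid is strictly below $v_R$.

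The main obstacle is the quantitative accounting across epochs with \emph{high probability} rather than in expectation. Because the auction is deterministic but the bids are randomized and correlated, the number of rounds $R$ plays $v_R$ in an epoch is a random quantity, and errors compound across $\Theta(\eta_T^R T) = \omega(1)$ epochs. As the authors stress, an $\omega(1)$ slack per epoch would be fatal, so I would control each epoch's deviations via a concentration argument (a martingale/Azuma bound on the per-epoch count of truthful bids, conditioned on the induction hypothesis holding so far), and union-bound over the $O(\eta_T^R T)$ epochs. Getting the failure probability small enough—matching the $0.99$ guarantee in the statement—requires the per-epoch concentration to beat the number of epochs, which is exactly where the careful, off-by-no-more-than-$O(1)$ estimation of cumulative utilities is needed.
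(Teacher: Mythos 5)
Your proposal follows essentially the same route as the paper's proof: an epoch decomposition (yours of length $\Theta(1/\eta_T^R)$, the paper's of length $\Theta(1/\eta_T^W)$ --- equivalent up to constants under the assumption $\lim \eta_T^R/\eta_T^W < \infty$), concentration on the number of rounds the runner-up bids $v_R$ per epoch, geometric decay of the winner's mass below $\hat p$, an $O(1/\eta_T^W)$ bound on the total number of rounds the winner bids below $\hat p$, and finally the observation that the runner-up's cumulative utility gaps are $O(1/\eta_T^W)$, so multiplying by $\eta_T^R = O(\eta_T^W)$ keeps every weight ratio bounded below by a discretization-dependent constant.

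Three points need repair, all fixable within your plan. First, your per-epoch martingale/Azuma argument has no engine as stated: to concentrate the count of truthful bids you need a per-round lower bound on the probability that $R$ bids $v_R$, and this cannot be bootstrapped by an ``inductive claim.'' The paper gets it unconditionally: since $R$ never overbids, $v_R$ weakly dominates all of $R$'s other bids, so MWU never decreases the relative weight of $v_R$, and with (near-)uniform initialization the probability of bidding $v_R$ is at least $1/\Delta$ in \emph{every} round; the per-epoch count then stochastically dominates a Binomial, and a multiplicative Chernoff bound plus a union bound over the $O(\eta_T^W T)$ epochs suffices (the condition $\eta_T \log T \to 0$ is exactly what makes this union bound vanish). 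Relatedly, the winner's \emph{total} number of low bids is controlled in the paper by an expectation bound plus Markov's inequality --- that Markov step is where the $0.99$ in the statement comes from --- not by ``summing the geometric series'' alone, since that sum only bounds an expectation. Second, your structural setup is incorrect for general deterministic truthful auctions: by \Cref{thm:characterization-deterministic-auctions} the winner's threshold price depends on the runner-up's bid, so it is false that ``whenever $R$ bids below its threshold, $W$ pays only $\hat p$.'' The payment is $\hat p$ only in rounds where $R$ bids exactly $v_R$; the utility-gap accounting must be restricted to those rounds, and in all other rounds one argues separately (again via weak dominance of bids $\geq \hat p$ over bids $< \hat p$) that the relevant weight ratios cannot move in the wrong direction. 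Third, you invoke non-degeneracy (the pivotal bid $b_W$ for which $R$ wins with $v_R$ but not with $v_R - \nicefrac{1}{\Delta}$), but this theorem does not assume it --- its only quantitative hypothesis is $v_W \cdot x_W(v_W,v_R) - \hat p > 0$ --- and your argument never actually uses $b_W$; that hypothesis is needed only for the companion phase-transition result, \Cref{prop:effect-of-learning-rate-convergence}.
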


\begin{proof}[Proof of \Cref{theorem:deterministic anonymous auctions do not lead to convergence}]
 The idea of the proof
is to split the horizon $T$ into continuous non-overlapping epochs of length
$c/\eta_T^W$, where $c$ is some sufficiently large constant that depends
on the discretization parameter $\Delta$. Notice that since $\lim_{T \rightarrow \infty}\eta_T^W \cdot T = \infty$ these epochs are well-defined, when $T$ is sufficiently large.
Assume without loss of generality
that the weights of all the bids that are at most $v_W$ (resp. $v_R$)
for the winning bidder (resp. runner-up) are initialized
to 1. (The proof holds as long as there is some constant mass on each
bid at the initialization stage, albeit with different constants.)
We denote the epochs by $\tau$ and the rounds of
the interaction by $t.$

Let $c_W = v_W - \hat{p}$ be the utility the bidder gets
when it wins the auction. By assumption, $c_W > 0.$ Let $W_W$ be the set of bids between $\hat p$ and  $v_W$, i.e., $W_W = \{\hat p,\hat p + \nicefrac{1}{\Delta},\ldots, v_W \}.$
Whenever the runner-up bids $v_R$ all the bids in $W_W$
increase their weights by a multiplicative factor of $e^{
c_W\cdot\eta_T^W}$, whereas the weights of the other bids remain unchanged. 
Moreover, since the allocation rule is non-decreasing and the price
does not depend on the bid,
whenever the weight of some bid $b \in B_\Delta$ 
is increased, the weights of all the bids
that are greater than $b$
are also increased
by the same amount.
Notice that, since
bidding $v_R$ is a weakly-dominant strategy
for the runner-up type, the mass that it puts on $v_R$
will never decrease relatively to the mass of the
rest of the bids. Thus, the probability of bidding $v_R$ for the runner-up
type is at least $1/\Delta$ in every round.
Hence, if we consider an interval
of size $T_0 = 8\Delta^2 / (\eta_T^W \cdot c_W)$ 
and we denote by $Z_i, i \in [T_0],$ the indicator variable
of whether the runner-up bid $v_R$ in round $i \in [T_0]$ 
we have that for any $\alpha > 0$
\[
    \Pr\left[Z_1 + \ldots + Z_{T_0} \geq \alpha \right] \geq 
    \Pr\left[\tilde Z_1 + \ldots + \tilde Z_{T_0} \geq \alpha \right] \,,
\]
where $\tilde Z_i, \in [T_0]$ are i.i.d. Bernoulli random 
variables with mean $1/\Delta.$
Then, the multiplicative 
version of Chernoff bound on $\{\tilde Z_i\}_{i \in [T_0]}$
shows that, 
with probability at least $1-e^{-\Delta / (\eta_T^W \cdot c_W)}$
the runner-up type will bid at least $4 \Delta / (\eta_T^W \cdot c_W))$
many times
$v_R$ in this window. By a union bound, we know that with 
probability at least $1-(T\cdot \eta_T^W/c) \cdot e^{-\Delta / (\eta_T^W \cdot c_W)}$
this holds across all the $T\cdot \eta_T^W/c$ different epochs. We call this 
event $\calE_1$ and condition on it for the rest of the proof.
Our assumption that $\eta_T$ is non-degenerate shows
that this probability is at least $1-o(1).$

Let $w_W^\tau(b)$ be the total weight that the winning type
assigns to $b$ at the beginning of epoch $\tau$
and $m_W^\tau(b)$ be its probability. Notice
that at $\tau=1$ this distribution is uniform.
Consider the ratio of the weights of any $b \leq \hat p - 1/\Delta$
and $\hat p.$ We have that
\begin{align}\label{eq:1}
    \frac{w_W^{\tau+1}(b)}{w_W^{\tau+1}(\hat p)} &\leq \frac{w_W^{\tau}(b)}{w_W^{\tau}(\hat p)}\cdot e^{- 4c_W\cdot \Delta\cdot \eta_T^W/(c_W \cdot \eta_T^W)} = \frac{w_W^{\tau}(b)}{w_W^{\tau}(\hat p)}\cdot e^{-4\Delta} \,,
    % \implies \\
    % m_W^{\tau+1}(b) &\leq \frac{ m_W^{\tau+1}(\hat p)}{ m_W^{\tau}(\hat p)} \cdot m_W^\tau(b) \cdot e^{-4\Delta} \,,
\end{align}
 where $w_W^{\tau}(b), w_W^{\tau}(\hat p)$ are the weights that the winner puts on $b, \hat p$ at the beginning of epoch $\tau$ (similarly for the $\tau+1$ terms). 
   % We can equivalently write \Cref{eq:1} as
   %  \begin{equation}\label{eq:2}
   %      \frac{\frac{w_W^{\tau+1}(b)}{\sum_{b' \in B_\Delta} w_W^{\tau+1}(b')}}{\frac{w_W^{\tau+1}(\hat p)}{\sum_{b' \in B_\Delta} w_W^{\tau+1}(b')}} \leq \frac{\frac{w_W^{\tau}(b)}{\sum_{b' \in B_\Delta} w_W^{\tau}(b')}}{\frac{w_W^{\tau}(\hat p)}{\sum_{b' \in B_\Delta} w_W^{\tau}(b')}}\cdot e^{-4\Delta}
   %  \end{equation}
    For the probability of each
    bid in MWU, 
    $m_W^{\tau + 1}(b) = \frac{w_W^{\tau+1}(b)}{\sum_{b' \in B_\Delta} w_W^{\tau+1}(b')}$ (and symmetrically for the other terms). Thus, by dividing the numerator and the denominator of the RHS of \Cref{eq:1}
    by $\sum_{b' \in B_\Delta} w_W^{\tau}(b')$
    and the numerator and denominator of the LHS of \Cref{eq:1} by $\sum_{b' \in B_\Delta} w_W^{\tau+1}(b')$
    we get:
    \[
        \frac{m_W^{\tau+1}(b)}{m_W^{\tau+1}(\hat p)} \leq \frac{m_W^{\tau}(b)}{m_W^{\tau}(\hat p)}\cdot e^{-4\Delta}.
    \]
    Multiplying by $m_W^{\tau+1}(\hat p)$ gives us
    \[
         m_W^{\tau+1}(b) \leq \frac{ m_W^{\tau+1}(\hat p)}{ m_W^{\tau}(\hat p)} \cdot m_W^\tau(b) \cdot e^{-4\Delta} \,.
    \]
Notice that $m_W^1(\hat p) = 1/\Delta, m_W^\tau(\hat p)$ is
non-decreasing in $\tau$ since bidding $\hat p$
is a weakly-dominant strategy for the winning type\footnote{This is where we are using the assumption that the runner-up type does not overbid. Otherwise, the argument can still go through with a different constant since we can show that the winning type will overbid only some $O(\eta_T^W)$ many times and we need to account for this term.},
and, by definition, $m_H^{\tau+1}(\hat p) \leq 1$,
so $\frac{ m_W^{\tau+1}(\hat p)}{ m_W^{\tau}(\hat p)} \leq \Delta.$ Hence,
\[
    m_W^{\tau+1}(b) \leq \Delta e^{-4\Delta} \cdot m_W^{\tau+1}(b)< 0.1 \cdot m_W^{\tau}(b), \forall b < \hat p \,,
\]
where the second inequality follows from $xe^{-4x} < 1, \forall x > 0.$
Thus, after each epoch the probability that the winning type does not
bid in $W_W$ decreases by a factor of $0.9.$
Hence, we can see that after $O(\eta_T^W \cdot T)$ epochs
that total mass in this region is at most $O(0.1^{\eta_T^W \cdot T-1}) = o(1).$ This proves the claim about 
the distribution of the winning type.

Let $Z_i, i \in [T],$ be the random variable
that indicates whether $v_W$ bid in 
$\{0,1/\Delta,\ldots,\hat p-1/\Delta\}$
in round $i \in [T].$ Let also $T'$ denote
the total number of epochs.
 Let $\widehat Z_\tau = Z_{\tau} + \ldots + Z_{\tau + T_0-1}$, so that 
    % By linearity of expectation we have
    $\mathbb{E}[Z_1 + \ldots Z_T] = \sum_{\tau=1}^{T'} \mathbb{E}[\widehat Z_\tau].$ 
    The preceding steps of the proof had 
    shown that after every round,
    the probability that the winner bids in this region
    is non-increasing (since the bids in interval $I$ are
    weakly dominated by the bids in $\{\widehat p,\ldots,v_W\}$), hence $\mathbb{E}[\widehat Z_\tau] \leq T_0\cdot  \mathbb{E}[ Z_{(\tau-1)\cdot T_0 + 1}].$ Thus, it suffices to bound
    $\sum_{\tau = 1}^{T'}\mathbb{E}[ Z_{(\tau-1)\cdot T_0 + 1}].$ 
    
    By definition, $\mathbb{E}[ Z_{(\tau-1)\cdot T_0 + 1}] = \sum_{b < \widehat p} m_W^{(\tau-1)\cdot T_0 + 1}(b).$ Now, the 
    previous step of the proof had shown
    that the mass
    of each bid in interval $I$ drops by a factor of 0.9 between the beginning
    of consecutive epochs, i.e., 
    $m_W^{\tau\cdot T_0 + 1}(b) \leq 0.1\cdot m_W^{(\tau-1)\cdot T_0 + 1}(b)$ for all $b \in \{0,1/\Delta,\ldots \widehat{p} - 1\}$. This implies $\mathbb{E}[ Z_{\tau\cdot T_0 + 1}] \leq 0.1 \cdot \mathbb{E}[ Z_{(\tau-1)\cdot T_0 + 1}].$
    Using $\mathbb{E}[ Z_{ 1}] \leq 1$, we get $\sum_{\tau = 1}^{T'}\mathbb{E}[ Z_{(\tau-1)\cdot T_0 + 1}] \leq \sum_{\tau = 1}^{T'} (0.1)^{\tau-1}$. Multiplying this by the value of $T_0$ gives
\begin{align*}
    \E\left[Z_1 + \ldots + Z_T\right] &\leq \sum_{\tau = 1}^{T'} (8\Delta^2 / (\eta_T^W \cdot c_W)) \cdot (0.1)^{\tau - 1} \\
    &\leq \sum_{\tau = 1}^{\infty} (8\Delta^2 / (\eta_T^W \cdot c_W)) \cdot (0.1)^{\tau - 1} \\
    &\leq 16\Delta^2 / (\eta_T^W \cdot c_W) \,.
\end{align*}
Hence, using Markov's inequality we see that
\begin{align*}
    \Pr\left[Z_1 + \ldots + Z_T \geq 101\cdot\left(16\Delta^2 / (\eta_T^W \cdot c_W) \right)\right] &\leq \frac{\E\left[Z_1 + \ldots Z_T\right]}{101\cdot\left(16\Delta^2 / (\eta_T^W \cdot c_W) \right)} 
    \leq \frac{1}{101} \,.
\end{align*}
Let us call this event $\calE_2$ and condition on it.

% In the first epoch, the winning type bids in the  region $\{0,1/\Delta,\ldots,\hat p-1/\Delta\}$
% at most $(c /\eta_T^W)\cdot(\hat p\cdot \Delta)/(\Delta + 1)$ times in expectation, since the total probability mass in this region is initialized to $(\hat p\cdot \Delta)/(\Delta + 1)$,
% and it does not increase throughout 
% the execution, since bidding in $W_W$ is a weakly-dominant
% strategy.
% Again, a multiplicative Chernoff bound shows that
% with probability at least $1-e^{-(c /\eta_T^W)\cdot(\hat p\cdot \Delta)/(\Delta + 1)}$
% the winning type will not bid more than $2 (c /\eta_T^W)\cdot(\hat p\cdot \Delta)/(\Delta + 1)$
% times in this region. As we argued above, the expected
% number of times it bids in this region in the second epoch 
% shrinks by a factor of $0.9.$
% \grigorisnote{TODO:fix concentration bound.}
% Hence, by taking a union bound
% over the concentration for each epoch and summing up the geometric series,
% we see that with probability
% $1-(T\cdot \eta_T^W/c) \cdot e^{-(c /\eta_T^W)\cdot(\hat p\cdot \Delta)/(\Delta + 1)} = 1 - o(1)$
% the winning type will bid in this region at most $(4 c /\eta_T^W)\cdot(\hat p\cdot \Delta)/(\Delta + 1).$
% We call this event $\calE_2$ and we condition on it.

Let us now consider the bid distribution of the runner-up type after the end 
of the last epoch. We denote this distribution by $\wh m_R(\cdot)$.
Recall that whenever the winning type bids in $W_W$, the runner-up type
performs no updates. Moreover, whenever it does perform an update
its utility when it bids $v_R$ is at most $1$ greater than bidding $b = 0.$
Notice that whenever the weight of some bid $b$ is increased, the weights
of all the bids greater than $b$ are also increased by the same amount,
so the monotonicity of the bid distribution follows immediately.
It suffices now to bound the ratio of the probability of bidding
zero and the probability of bidding $v_R$ by some quantity that
is independent of $T.$ We have that
\[
    \frac{\wh m_R(0)}{\wh m_R(v_R)} \geq e^{-\eta_T^R 101\cdot\left(16\Delta^2 / (\eta_T^W \cdot c_W) \right)} \implies \\
    \wh m_R(0) \geq \frac{ e^{-\eta_T^R 101\cdot\left(16\Delta^2 / (\eta_T^W \cdot c_W) \right)}}{\Delta} \,,
\]
where the second inequality follows from the fact that
the distribution is initialized to be uniform and
$v_R$ is a weakly-dominant strategy across all rounds,
so its probability is not decreased.
Notice that
\[
    \lim_{T \rightarrow \infty} \nicefrac{\eta_T^R}{\eta_T^W} < C\,,
\]
for some discretization-dependent $C$, it follows that 
$\wh m_R(0) > C',$ where $C' > 0$ is some discretization-dependent
constant. Since $\Pr[\calE_1] \geq 1 - o(1), \Pr[\calE_2] \geq 100/101,$
we have that $\Pr[\calE_1 \cap \calE_2] \geq 99/100,$ when $T$ is large
enough.
\end{proof}

\begin{theorem}[Effect of Learning Rate on Convergence]\label{prop:effect-of-learning-rate-convergence}
Fix a valuation profile $(v_1, v_2)$ and a  non-degenerate deterministic truthful auction with respect to $(v_1,v_2).$ 
Suppose bidders bid using MWU and with non-degenerate
learning rates. Let $W$ (respectively $R$), be the bidder $i \in \{1,2\}$ such that $x_i(v_i,v_{-i}) = 1$ (respectively, $x_i(v_i,v_{-i}) = 0$). 
Let $\hat p$ be the minimum winning bid of $W$
when $R$ bids $v_R.$
Assume that $\nicefrac{\eta_T^R}{\eta_T^W} = \omega(1).$
Then, with probability at least $1 - o(1)$,
% where $C_\Delta^1, C_\Delta^2 > 0,$ are some constants that depend on $\Delta$,
bidder $R$ converges to bidding $v_R$ 
and bidder $W$ converges to a bidding distribution
supported in $\{\hat p, \hat p + \nicefrac{1}{\Delta},\ldots, v_W\}.$
\end{theorem}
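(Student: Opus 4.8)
The plan is to mirror the epoch decomposition used in \Cref{theorem:deterministic anonymous auctions do not lead to convergence}, but to exploit the separation $\eta_T^R/\eta_T^W = \omega(1)$ to show that the runner-up's mass collapses onto $v_R$ already after a \emph{single} epoch, rather than spreading out. Partition $[T]$ into consecutive epochs of length $T_0 = c/\eta_T^W$ for a large discretization-dependent constant $c$; since $\eta_T^W\cdot T\to\infty$ these are well defined for large $T$. From the non-degeneracy of the auction (\Cref{def:non-degenerate}) extract the ``pivotal'' winner bid $b^\ast \le v_W$ witnessing the second inequality: when $W$ bids $b^\ast$, the runner-up wins by bidding $v_R$ and obtains utility $\gamma_0 := v_R - p_R(v_R,b^\ast) > 0$ (the strict non-degeneracy inequality forces $p_R(v_R,b^\ast) < v_R$), whereas by monotonicity it loses, and pays $0$, with every bid $b' \le v_R - 1/\Delta$. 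Hence each round in which $W$ plays $b^\ast$ contributes exactly $\gamma_0$ to $U_R^t(v_R) - U_R^t(b')$ for every $b' < v_R$, while truthfulness guarantees that every other round contributes a non-negative amount.

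First I would control the winner inside the initial epoch. Over $T_0$ rounds each MWU weight of $W$ is rescaled by a factor in $[e^{-c},e^{c}]$, since the cumulative utility difference is at most $T_0$ and $\eta_T^W T_0 = c$; thus $W$'s bid distribution stays within a constant factor of uniform, and in particular $W$ plays $b^\ast$ with probability $\Omega(1/\Delta)$ in every round of the epoch. As in the proof of \Cref{theorem:deterministic anonymous auctions do not lead to convergence}, the number of times $b^\ast$ is played stochastically dominates a sum of i.i.d.\ Bernoulli$(\Omega(1/\Delta))$ variables, so a multiplicative Chernoff bound shows that with probability $1-o(1)$ the winner plays $b^\ast$ at least $\Omega(T_0/\Delta) = \Omega(1/\eta_T^W)$ times in the first epoch. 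Call this event $\mathcal{E}_1$ and condition on it.

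On $\mathcal{E}_1$ the previous paragraph gives $U_R^{T_0}(v_R) - U_R^{T_0}(b') \ge \gamma_0\cdot\Omega(1/\eta_T^W) = \Omega(1/\eta_T^W)$ for every $b' < v_R$. Feeding this into the runner-up's MWU weights, and using that the weights are initialized uniformly, we get
\[
\frac{w_R^{T_0}(b')}{w_R^{T_0}(v_R)} \le e^{-\eta_T^R\cdot\Omega(1/\eta_T^W)} = e^{-\Omega(\eta_T^R/\eta_T^W)} = o(1),
\]
where the last step is precisely where the hypothesis $\eta_T^R/\eta_T^W = \omega(1)$ enters. Summing over the at most $\Delta$ bids $b' < v_R$ shows that after the first epoch the runner-up places all but an $o(1)$ fraction of its mass on $v_R$. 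Since bidding $v_R$ is weakly dominant, the gap $U_R^t(v_R) - U_R^t(b')$ is non-decreasing in $t$, so it remains $\Omega(1/\eta_T^W)$ for all $t \ge T_0$; hence the runner-up keeps bidding $v_R$ with probability $1-o(1)$ through the last round $T$, giving last-iterate convergence to $v_R$ in the sense of \Cref{def:last-iterate-convergence}.

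Finally, the winner's convergence follows as in \Cref{theorem:deterministic anonymous auctions do not lead to convergence}: once the runner-up plays $v_R$ with probability $1-o(1)$, the winner's cumulative utility for any bid in $\{\hat p,\ldots,v_W\}$ exceeds that of any bid below $\hat p$ by a margin growing linearly in the number of rounds, each such round yielding an advantage $c_W := v_W - \hat p > 0$ (positive by the first non-degeneracy requirement), so the winner's mass below $\hat p$ decays geometrically across epochs to $o(1)$. That argument only needs the runner-up to bid $v_R$ frequently, which holds a fortiori here, so it transfers essentially verbatim. The main obstacle is the quantitative bookkeeping of the first epoch: the gap must be shown to be genuinely $\Omega(1/\eta_T^W)$, since it is multiplied by $\eta_T^R$ and we rely on the product $\Omega(\eta_T^R/\eta_T^W)$ diverging; any slack hiding a vanishing-in-$T$ factor — either in the count of $b^\ast$-rounds or in the near-uniformity of $W$ during the epoch — would destroy the $o(1)$ bound on the off-$v_R$ mass. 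This is exactly the regime where the learning-rate separation is used, and the step that distinguishes this theorem from its non-converging counterpart.
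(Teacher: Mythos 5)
Your proof is correct and follows essentially the same route as the paper's: isolate the first $\Theta(1/\eta_T^W)$ rounds, use non-degeneracy plus stochastic domination and a Chernoff bound to show the winner plays the pivotal bid $\Omega(1/\eta_T^W)$ times, convert this into a cumulative utility gap for the runner-up whose product with $\eta_T^R$ diverges (the only place $\eta_T^R/\eta_T^W = \omega(1)$ is used), and extend to round $T$ by weak dominance, with the winner's concentration on $\{\hat p,\ldots,v_W\}$ handled exactly as in the paper. The only cosmetic differences are that you certify the winner's near-uniformity in the first epoch by noting every weight moves by a factor in $[e^{-c},e^{c}]$ (the paper instead compares each weight to that of $v_W$ and uses monotonicity of $\Pr[v_W]$), and you exploit the deterministic threshold-payment structure to make the per-round gap exactly $\gamma_0$ rather than lower-bounding it by the gap at $v_R - \nicefrac{1}{\Delta}$.
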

\begin{proof}[Proof of \Cref{prop:effect-of-learning-rate-convergence}]
    Consider the first $T_0 = c_\Delta'/ \eta_T^W$
    rounds of the game, for some $c_\Delta'$ discretization-dependent
    constant. Assume without loss of generality
that the weights of all the bids that are at most $v_W$ (resp. $v_R$)
for the winning bidder (resp. runner-up) are initialized
to 1. (Again, the argument works so long as all the weights
are initialized with some constants.) 
Since the auction is non-degenerate with respect
to $v_W, v_R,$ there exists some bid of the winning type $b_W \leq v_W$ so that
the runner-up bidder wins the auction when bidding truthfully and gets positive utility, i.e.,
\[
   v_R \cdot x_R(v_R, b_W) - p_R(v_R, b_W)  > 0 \,.
\]
Moreover, for all bids $b_R < v_R$ it holds
\[
    v_R \cdot x_R(v_R, b_W) - p_R(v_R, b_W) -
    \left(v_R \cdot x_R(b_R, b_W) - p_R(b_R, b_W)\right) > 0\,.
\]
Since the auction is truthful, the difference
above is minimized at $b_R = v_R - \nicefrac{1}{\Delta}.$ Let
\[
    u'_R := v_R \cdot x_R(v_R, b_W) - p_R(v_R, b_W) -
    \left(v_R \cdot x_R(v_R-\nicefrac{1}{\Delta}, b_W) - p_R(v_R-\nicefrac{1}{\Delta}, b_W)\right) \,,
\]
and, by definition, $u'_R > 0.$
    Let us consider the winning type and look at the worst-case ratio
    of the probability that is placed on bids $b_W^{t} = b_W, b_W^{t} = v_W$
    at the end of every round $t \in \{1,\ldots,T_0\}$. We have that
    \begin{align*}
        \frac{\Pr[b_W^{t} =b_W]}{\Pr[b_W^{t} = v_W]} 
        &\geq e^{- \eta_T^W\cdot v_W\cdot t} \\
        &\geq e^{-\eta_T^W\cdot v_W\cdot T_0} \\
        &= e^{- c_\Delta' \cdot v_W} \,,
    \end{align*}
    where the first inequality follows from the fact that bidding $v_W$
    always yields at most $v_W$ utility more than bidding
    any other bid and the second one because $t \leq T_0$. Moreover, since $\Pr[b_W^{1} = v_W] = 1/\Delta$
    and the probability that is placed on $b_W^t = v_W$ is non-decreasing
    across the executions (since it is a weakly-dominant strategy),
    we have that
    \[
        \Pr[b_W^{t} = b_W] \geq  e^{- c_\Delta' \cdot v_W}/\Delta, \forall t \in \{1,\ldots,T_0\} \,.
    \]
    
    Let $Z^{T_0}$ denote the random variable that counts 
    the number of times the winning type bids $b_W$
    within the first $T_0$
    rounds. Let $\tilde Z_{\tau}, \tau \in [T_0]$
    be independent Bernoulli random variables with mean $e^{- c_\Delta' \cdot v_W}/\Delta$.
    Notice that, $\forall \alpha > 0,$ it holds 
    that $\Pr[Z^{T_0} \geq \alpha] \geq \Pr[\sum_{\tau = 1}^{T_0}
    \tilde Z_\tau  \geq \alpha].$ Moreover,
    \[
        \E\left[\sum_{\tau = 1}^{T_0}
    \tilde Z_\tau \right] \geq T_0 \cdot  e^{- c_\Delta' \cdot v_W}/\Delta = c_\Delta'/ \eta_T^W \cdot e^{- c_\Delta' \cdot v_W}/\Delta \,.
    \]
    To simplify the notation, let us denote 
    $\tilde{c}_\Delta = c_\Delta' \cdot e^{- c_\Delta' \cdot v_H}/\Delta.$
    Thus, a multiplicative Chernoff bound shows that, with probability
    at least $1- e^{- \tilde{c}_\Delta/(8\eta_T^W)} = 1 - o(1),$
    we have that $Z^{T_0} \geq \tilde{c}_\Delta/(2\eta_T^W).$
    Let us call this event $E$ and condition on it.
    
    Let us now focus on the bid distribution of the runner-up
    bidder after the first $T_0$ rounds. Notice that
    whenever the winning bidder bids $b_W$ then
    bidding $v_R$ yields utility at least $u'_R$ greater than bidding any other bid to the runner-up type,
    and
    in the rounds where this does not happen, bidding
    $v_R$ is still a weakly dominant strategy so it generates
    as much utility as any other bid. Thus,
    we have that
    \begin{align*}
        \frac{\Pr[b_R^{T_0} = v_R-1/\Delta]}{\Pr[b_R^{T_0} = v_R]} &\leq  e^{-u'_R\cdot \eta_T^R \cdot Z^{T_0}}\\
        &\leq e^{-\eta_T^R\cdot \tilde{c}_\Delta/(2\eta_T^W\Delta)} \\
        & = o(1)
    \end{align*}
    Thus, since bidding $v_R$ is a weakly dominant
    strategy for the runner-up this ratio is non-increasing in $t$
    we can immediately see that
    \[
         \frac{\Pr[b_R^{T_0} = v_R-1/\Delta]}{\Pr[b_R^{T_0} = v_R]} = o(1) \,,
    \]
    which gives that
    \[
        \Pr[b_R^{T} = v_R-1/\Delta] = o(1) \,.
    \]
    The same argument can be applied to all bids in $\{0,1/\Delta,\ldots,v_R - 1/\Delta\}.$
    
    For the winning type, a symmetric argument shows that
    since after $O(\eta_T^W)$ many rounds the runner-up type bids
    $v_R$ with high probability, all the bids in the region
    $\{\hat v_W,\ldots,v_W\}$ will yield utility that is 
    larger than bidding $v_R - 1/\Delta$
    by at least $1/\Delta$ (again with
    high probability), so after another $\omega(\eta_T^W)$ rounds
    its mass will be concentrated on bidding in this region.
\end{proof}

\begin{proof}[Proof of \Cref{thm:deterministic-auctions-not-revenue-optimal}]
Let $\calE = \{r< v_1\} \cap \{r < v_2\} \cap\{ v_1 \neq v_2\}.$
We can decompose $\E_{v_1,v_2 \sim U[B_\Delta]}\left[\mathrm{Rev}(v_1,v_2;r)\right]$ as:
\begin{align*}
    \E_{v_1,v_2 \sim U[B_\Delta]}\left[\mathrm{Rev}(v_1,v_2;r)\right] = \E_{v_1,v_2 \sim U[B_\Delta]}\left[\mathrm{Rev}(v_1,v_2;r)\right| \calE] \cdot \Pr_{v_1,v_2 \sim U[B_\Delta]}[\calE] \\
    + \E_{v_1,v_2 \sim U[B_\Delta]}\left[\mathrm{Rev}(v_1,v_2;r)\right| \calE'] \cdot \Pr_{v_1,v_2 \sim U[B_\Delta]}[\calE']   \,.
\end{align*}

Notice that under $\calE',$ the revenue of the auction in the learning 
setting satisfies
\[
    \E_{v_1,v_2 \sim U[B_\Delta]}\left[\lim_{T \rightarrow \infty}\E_{b_1 \sim b_1^T, b_2 \sim b_2^T}[\mathrm{Rev}(b_1,b_2;r) \mid v_1, v_2 ]\,\bigg\vert\, \calE'\right] \leq \E_{v_1,v_2 \sim U[B_\Delta]}\left[\mathrm{Rev}(v_1,v_2;r)\right| \calE'] \,.
\]
This is because both bidders will be bidding at most their valuation,
so the revenue of the auction cannot increase. Let us now focus
on the first term. Under the event $\calE,$ the revenue of the auction
under rational agents is $\min\{v_1,v_2\} > r.$
However, in the learning setting, the runner-up bidder
will be bidding strictly below their valuation in expectation, by \Cref{theorem:deterministic anonymous auctions do not lead to convergence}. Hence, we have that
\begin{align*}
    \E_{v_1,v_2 \sim U[B_\Delta]}\left[\lim_{T \rightarrow \infty}\E_{b_1 \sim b_1^T, b_2 \sim b_2^T}[\mathrm{Rev}(b_1,b_2;r) \mid v_1, v_2 ] \,\bigg\vert\, \calE \right] &< \E_{v_1,v_2 \sim U[B_\Delta]}\left[ \min\{v_1, v_2\} \,\bigg\vert\, \calE\right] - c' \\
    &= \E_{v_1,v_2 \sim U[B_\Delta]}\left[\mathrm{Rev}(v_1,v_2;r)\right| \calE] - c' \,.
\end{align*}
Since $\Pr[\calE] > 0,$ the result follows by combining the two inequalities.
\end{proof}

\section{Omitted Details from \Cref{sec:randomized-truthful-auctions}}
\label{app:randomized-asymptotic}

\begin{proof}[Proof of \Cref{lem:convergence-strictly-IC-auctions}]
Let
\[
    \gamma_A = \min_{i \in [n], v \in B_\Delta, b_{-i} \in B_\Delta^{n-1}, b \in B_\Delta: b\neq v}\{u_i(v,b_{-i}) - u_i(b, b_{-i} )\} \,,
\]
i.e., the minimum improvement in the utility that
is guaranteed to every player when they 
switch to bidding truthfully from any non-truthful bid,
no matter what their valuation
and the bids of the opponents are.
Notice that for any fixed auction $A$ this quantity
does not depend on $T.$
Moreover, since $A$ is a strictly IC auction
we have that $\gamma_A > 0.$
Consider any round $t \in [T]$ of the interaction.
For any player $i \in [n],$ we have that
\[
    u^t(v_i,b^t_{-i}) - u^t(b',b^t_{-i}) \geq \gamma_A, \forall b' \neq v_i \,,
\]
no matter what the bids $b^{t}_{-i}$ are.
Let $\delta_1,\ldots,\delta_n$ be the mean-based
parameters of the algorithms that the agents
are using. Moreover, let $T_0 = \max_{i \in [n]} \delta_i\cdot T/\gamma_A.$ Notice that since
$\delta_i = o(1), \forall i \in [n],$
by picking $T$ sufficiently large we have
that $T_0 < T.$
We immediately get that, for every
player $i \in [n]$
\[
    \sum_{t=1}^{T_0}\left( u^t(v_i,b^t_{-i}) - u^t(b',b^t_{-i})\right) \geq \gamma_A \cdot T_0 \geq \delta_i \cdot T , \forall b' \neq v_i \,,
\]
no matter what the bid profile $b^t_{-i}$
of the other bidders in every round is.
Thus, for every bidder $i \in [n],$
by taking a union bound over all bids $b \neq v_i$,
we see that 
in round $T_0 + 1$ the probability of not
bidding truthfully is at most $\Delta\cdot \delta_i = o(1).$ Hence, we have shown the result.
\end{proof}
% \begin{theorem}\label{thm:mixture-of-IC-with-strictly-IC-last-iterate-convergence}
% Let $A$ be an IC auction for $n$ agents with
% valuations $v_1, \ldots, v_n$. Let each agent $i \in [n]$
% use a mean-based no-regret learning algorithm to bid in the auction.
% Then, there exists an auction $A'$ such that 
% for each agent $i \in [n]$
% we have that $\lim_{T \rightarrow \infty} b^T_i =  v_i$ and $|x_i(b) - x'_i(b)| = o(1), |p_i(b) - p'_i(b)| = o(1),\forall b\in B^n_\Delta,$ where $x_i(\cdot), x'_i(\cdot)$ (resp. $p_i(\cdot), p'_i(\cdot)$) is the allocation (resp. payment) rule of $A, A'.$
% \end{theorem}

\begin{proof}[Proof of \Cref{thm:mixture-of-IC-with-strictly-IC-last-iterate-convergence}]
    Let $\delta_,\ldots,\delta_n$ be the mean-based
    parameters of the algorithms that the agents
    are using. Recall that these parameters do depend
    on $T.$
    Assume without loss of generality that
    $\delta_1$ is the slowest one, i.e., $\lim_{T \rightarrow \infty} \nicefrac{\delta_i}{\delta_1} \leq C, \forall i \in [n],$ where
    $C$ is some discretization-dependent constant.
    Let $\widetilde{A}$ be a strictly IC
    auction and define
    \[
        \gamma_{\widetilde{A}} = \min_{i \in [n], v \in B_\Delta, b_{-i} \in B_\Delta^{n-1}, b \in B_\Delta :b \neq v}\{\widetilde u_i(v,b_{-i}) - \widetilde u_i(b, b_{-i})\} \,.
    \]
    Similarly as in the previous proof,
    notice that $\gamma_{\widetilde{A}}$ does not depend on $T.$ Consider the $q_T$-mixture of the auctions
    $A, \widetilde{A}$ and let us denote this auction
    by $A'.$ Let $x, \widetilde{x}, x'$ be the allocation
    rules of $A, \widetilde{A}, A',$ respectively, and
    let us define the payment rules in a symmetric way.
    Notice that since $x'(\cdot) = q_T \widetilde{x}(\cdot) + (1-q_T)x(\cdot), p'(\cdot) = q_T \widetilde{p}(\cdot) + (1-q_T)p(\cdot)$, it
    follows immediately that
    \[
        \gamma_{A'} \geq q_T\cdot \gamma_{\widetilde{A}} \,.
    \]
    Moreover, notice that
    \begin{align*}
        |x'(\cdot) - x(\cdot)| &\leq q_T\cdot |\widetilde{x}(\cdot) - x(\cdot)| \leq q_T\\
        |p'(\cdot) - p(\cdot)| &\leq q_T\cdot |\widetilde{p}(\cdot) - p(\cdot)| \leq q_T \,.
    \end{align*}
    
    Let us focus on agent 1 since it is the one
    that has the slowest convergence. After $T_0$
    rounds of the game we have that
    \[
    \sum_{t=1}^{T_0}\left( u^t(v_1,b^t_{-1}) - u^t(b',b^t_{-1})\right) \geq \gamma_{A'} \cdot T_0 \geq q_T\cdot \gamma_{\widetilde{A}} \cdot T_0 , \forall b' \neq v_1 \,,
\]
    no matter what the bid profile of the rest of 
    the bidders in every round is. Thus, in order
    for the mean-based guarantee of the algorithm
    of the first bidder to give us the desired
    convergence
    we see
    that we need $T_0 \geq \nicefrac{\delta_1 \cdot T}{q_T \cdot \gamma_{\widetilde{A}}}.$ Since $T_0 \leq T,$
    this places a constraint on the choice of $q_T$, 
    namely that $q_T \geq \nicefrac{\delta_1}{\gamma_{\widetilde{A}}}.$
    Thus, since this is the only constraint that 
    we have on the choice of $q_T$ we see that
    choosing $q_T = \nicefrac{2 \delta_1}{\gamma_{\widetilde{A}}} = o(1)$ suffices to get
    the result.
\end{proof}

\begin{proof}[Proof of \Cref{cor:revenue-guarantees-randomized}]
    Let $A'$ be the output of \Cref{thm:mixture-of-IC-with-strictly-IC-last-iterate-convergence} when
    the input auction is Myerson's revenue-optimal auction for $F.$
    For any fixed valuation profile $v \in B_\Delta^n$, for sufficiently 
    large $T,$ each bidder $i \in [n]$ will be bidding $v_i$ except
    with probability $o(1).$ Moreover, the payments in these two auctions
    differ by some $o(1)$. Thus, 
    \[
    \E_{b_1\sim b^T_1,\ldots,b_n \sim b^T_n}\left[\lim_{T \rightarrow \infty}\mathrm{Rev}(A;b_1,\ldots,b_n) \right] \geq \mathrm{Rev}(\mathrm{Myerson};v_1,\ldots,v_n) - o(1) \,.
    \]
    The result follows by taking the expectation over the random draw of $v_1,\ldots,v_n.$
\end{proof}

We present the formal result about the equilibria
of the meta-game below.

\begin{corollary}[Equilibria of Meta-Game]\label{cor:equilibria of meta-game}
Let $A$ be an IC, IR auction. Let $T$ be the number of interactions.
    Assume that $n$ agents use mean-based
    no-regret learning 
    algorithms to bid in these repeated auctions. 
    Then, there is an auction $A'$ such that
    \begin{itemize}
        \item $|x_i(b) - x'_i(b)| = o(1), |p_i(b) -  p'_i(b)| = o(1), \forall i \in [n], \forall b \in B^n_\Delta.$
        \item In the meta-game that is induced
        by $A'$ every agent
        can gain at most $o(1)$ utility
        by misreporting its value to the bidding
        algorithm.
    \end{itemize}
\end{corollary}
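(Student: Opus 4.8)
The plan is to take $A'$ to be exactly the auction produced by \Cref{thm:mixture-of-IC-with-strictly-IC-last-iterate-convergence} applied to the given IC, IR auction $A$; recall this is a $q_T$-mixture of $A$ with a fixed strictly IC auction, for some $q_T = o(1)$. The first bullet of the claim is then immediate: \Cref{thm:mixture-of-IC-with-strictly-IC-last-iterate-convergence} already guarantees $|x_i(b) - x'_i(b)| = o(1)$ and $|p_i(b) - p'_i(b)| = o(1)$ for all $i$ and all $b \in B_\Delta^n$, since the per-coordinate deviation in a $q_T$-mixture is at most $q_T$. So the entire burden is the second bullet, about the meta-game.

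The structural fact I would exploit is that $A'$ is strictly IC (by \Cref{clm:mixture-IC-strictly-IC}, a mixture of an IC auction with a strictly IC auction is strictly IC). Hence, by \Cref{lem:convergence-strictly-IC-auctions}, when each agent $i$ submits a report $\hat v_i \in B_\Delta$ to its mean-based no-regret algorithm, the last-iterate bid distribution of that algorithm converges to the point mass on $\hat v_i$, and crucially this holds \emph{regardless} of the reports $\hat v_{-i}$ of the other agents, because the guarantee of \Cref{lem:convergence-strictly-IC-auctions} is uniform over the opponents' bid sequences. This decouples the limiting bids: agent $i$'s limiting bid equals its own report and is unaffected by others' reports, while the opponents' limiting bids are $\hat v_{-i}$ and are unaffected by $\hat v_i$. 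This is precisely the feature that fails for deterministic, non-strictly-IC auctions, where one agent's report can shift the bidding distribution another agent converges to (the manipulation channel described in \Cref{sec:deterministic auctions}).

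Given this decoupling, I would reduce the meta-game to a one-shot game. Fix agent $i$ with true value $v_i$ and fix the opponents' reports $\hat v_{-i}$, which pin their limiting bids at $\hat v_{-i}$. Let $T_0 = o(T)$ be the convergence time from \Cref{lem:convergence-strictly-IC-auctions}, after which every algorithm bids its report except with probability $o(1)$. Splitting the horizon into the first $T_0$ rounds and the remaining $T - T_0$ rounds, the first block contributes $O(T_0/T) = o(1)$ to the normalized (per-round) utility, and in the second block every agent bids its report with probability $1 - o(1)$; hence agent $i$'s average utility when reporting $\hat v_i$ converges to the single-shot quantity $v_i \cdot x'_i(\hat v_i, \hat v_{-i}) - p'_i(\hat v_i, \hat v_{-i})$ up to an additive $o(1)$. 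The strict IC property of $A'$ then gives $v_i \cdot x'_i(v_i, \hat v_{-i}) - p'_i(v_i, \hat v_{-i}) \ge v_i \cdot x'_i(\hat v_i, \hat v_{-i}) - p'_i(\hat v_i, \hat v_{-i})$ for every $\hat v_i$, so truthful reporting maximizes the limiting utility and any misreport can improve it by at most the accumulated $o(1)$ slack, which is the claim.

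I expect the main obstacle to be the bookkeeping that makes the reduction uniform over all reports and all opponents, so the error is genuinely $o(1)$ and does not hide a $\Delta$- or report-dependent blow-up: one must combine the mean-based parameter $\delta_T \to 0$ (cf.\ \Cref{def:mean-based-learner}) with a union bound over the $\Delta$ possible bids to control the mis-convergence probability, and verify that $T_0 = o(T)$ can be chosen simultaneously for the slowest learner. A secondary point needing care is the no-overbidding convention: since the algorithm is fed $\hat v_i$ as its perceived value, it never bids above $\hat v_i$, so a deviating agent effectively selects bids in $\{b \in B_\Delta : b \le \hat v_i\}$; I would check that this does not create profitable deviations beyond those already excluded by strict IC, which it does not, since the strict IC inequality holds against every alternative bid in $B_\Delta$.
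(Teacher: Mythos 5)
Your proposal is correct and takes essentially the same route as the paper: choose $A'$ to be the output of \Cref{thm:mixture-of-IC-with-strictly-IC-last-iterate-convergence} applied to $A$, use the convergence-to-report guarantee (which, as you note, holds uniformly over the opponents' reports since the strict-IC utility gap holds against every opposing bid sequence), and conclude from the truthfulness of $A'$ that any misreport gains at most $o(1)$. The only cosmetic difference is that you account utility as the time-averaged cumulative payoff split at a convergence time $T_0$, whereas the paper simply evaluates the expected last-round utility, $\E\left[u'_i(b_i^T,b_{-i}^T)\right] = u'_i(\hat v_i,\hat v_{-i}) + o(1)$, and maximizes it over $\hat v_i$; both formalizations yield the stated claim.
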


\begin{proof}[Proof of \Cref{cor:equilibria of meta-game}]
    Let $v_1,\ldots,v_n$ be the values of the agents
    and let $\hat v_1, \ldots, \hat v_n$ be the reports
    to the bidding algorithms.
    Let $A'$ be auction obtained by feeding the 
    auction $A$ into the transformation described
    in \Cref{thm:mixture-of-IC-with-strictly-IC-last-iterate-convergence}. The guarantees of this result
    show that 
    \begin{itemize}
        \item $|x_i(b) - x'_i(b)| = o(1), |p_i(b) -  p'_i(b)| = o(1), \forall i \in [n]\forall b \in B_\Delta,$
        \item $\Pr[b_i^T \neq \hat v_i] = o(1),\forall i \in [n],$
    \end{itemize}
    where $b_i^T$ is the bid of the $i$-th agent in round $T.$
    Thus, with high probability after a large enough
    number of rounds, for every agent $i \in [n]$ the 
    algorithm is bidding the reported value
    $\hat v_i$ no matter what the other reports $\hat v_{-i}$ are. Since the auction $A'$ is truthful,
    the utility of each agent is maximized
    when $b_i^T = v_i.$ Hence, the optimal
    strategy, up to $o(1)$, is to report
    $v_i = \hat v_i, \forall i \in [n].$
    To be more formal,
    the expected utility of the $i-$th agent
    in round $T$ is 
    \begin{align*}
        \E\left[u'_i(b_i^T,b_{-i}^T)\right] &= u'_i(\hat v_i, \hat v_{-i}) + o(1) \,,
    \end{align*}
    thus, since $A'$ is truthful, this quantity is maximized
    for $\hat v_i = v_i,$ up to the $o(1)$ term.
\end{proof}

\section{Omitted Details from \Cref{sec:no-regret-learning-auctioneer}}
\label{app:finite-horizon}

\begin{proof}[Proof of \Cref{prop:regret-single-auction-upper-bound}]
Let $A_T = p_T \cdot A + (1-p_T)\cdot \mathrm{SPA}, $ where $A$ is some auction
with $\gamma_A > 0$ and some $p_T$ that will be defined shortly. Notice that
\[
    \gamma_{A_T} \geq p_T \cdot \gamma_A + (1-p_T) \cdot \gamma_{\mathrm{SPA}} \geq p_T \cdot \gamma_A \,. 
\]
Since the bidders are mean-based no-regret learners, we know that when
\[
\sum_{\tau = 1}^{T_0}v_i \cdot x_i(v_i,b_\tau) - p_i(v_i, b_\tau) \geq \sum_{\tau = 1}^{T_0}v_i \cdot x_i(b',b_\tau) - p_i(b',b_\tau) + \delta_T \cdot T,  \forall i \in \{0,1\},\forall b' \in B_\Delta \,, 
\]
they will be bidding truthfully with probability at least $1-\Delta\cdot \eta_T.$
We know that in every round 
\begin{align*}
    v_i \cdot x_i(v_i,b_\tau) - p_i(v_i,b_\tau) &\geq v_i \cdot x_i(b',b_\tau) - p_i(b',b_\tau) + \gamma_{A_T}\\
    &\geq v_i \cdot x_i(b',b_\tau) - p_i(b',b_\tau) + p_T \cdot \gamma_{A}, \forall i \in \{0,1\}, b_\tau, b' \in B_\Delta^2, b'\neq v_i
\end{align*}
Thus, we define $T_0 = \min\{t \in \nats: p_T \cdot \gamma_A \cdot t \geq \delta_T \cdot T   \} = \nicefrac{\delta_T \cdot T}{p_T \cdot \gamma_A}.$ The regret is
\begin{align*}
    \wt{\reg}_T(A_T;v_L,v_H) &= \wt{\reg}_{T_0}(A_T;v_L,v_H) + \left( \sum_{t=1}^T \rev(v_L,v_H;\SP) - \E\left[\sum_{t=T_0+1}^T \rev(b_L^t,b_H^t;A) \right]\right) \\
    &\leq v_L \cdot T_0 + v_L\cdot(T-T_0)\cdot(2\Delta \cdot \delta_T)\cdot(1-p_T) + (T-T_0)\cdot p_T \cdot v_L\\
    &\leq v_L \cdot \left(T_0 + 2\Delta \cdot \delta_T\cdot T\cdot(1-p_T) + T\cdot p_T\right) \\
    &\leq v_L \cdot \left(\frac{\delta_T \cdot T}{p_T\cdot \gamma_A} + 2\Delta\cdot \delta_T \cdot T + p_T \cdot T \right) \\
    &\leq v_L \cdot \left(\frac{2\Delta \cdot \delta_T \cdot T}{p_T\cdot \gamma_A}  + p_T \cdot T \right) \,,
\end{align*}
where the first inequality follows from the fact that
after the first $T_0$ rounds the auctioneer regret is bounded the sum
of the probabilities that the auction is SPA and the bidders do not bid
truthfully, which is at most $(1-p)\cdot 2\Delta \cdot \eta_T,$
and the probability that auction is not SPA, which is $p_T.$
The rest of the inequalities are just algebraic manipulations.
Thus, by setting $p_T = \sqrt{\nicefrac{2\Delta \cdot \delta_T}{\gamma_A}}$
we get that 
\[
     \wt{\reg}_T(A_T;v_L,v_H) \leq v_L \cdot \left(3\cdot \sqrt{\frac{2\Delta \cdot \delta_T}{\gamma_A}} \cdot T \right) \,,
\]
which concludes the proof.
\end{proof}

\begin{proof}[Proof of \Cref{prop:lower-bound-constant-auctions}]
Consider the $v_L, v_H$ pairs of the form $v_H = v_L + \nf1\Delta$, such that both are bounded away from $0$ and $1$. Then, Myerson's payment formula shows that $p_H(v_H, v_L) \le (v_H - \nf1\Delta) \cdot x_H(v_H,v_L) = v_L \cdot x_H(v_H,v_L).$ We first argue that $x_H(v_H,v_L) < 1.$ Indeed, suppose that $x_H(v_H,v_L) = 1.$ Then the low type
gets no signal about their bid and hence bids uniformly 
at random between $[0,v_L]$.
%\footnote{As we showed in \Cref{theorem:deterministic anonymous auctions do not lead to convergence},
%this holds even if the high type
%is a learning agent; now the distribution of the low type is bidding from a different distribution,
%but it is still not truthful.} 
In particular, with some $C_\Delta$ probability
that is independent of $T,$ the low type bids the value $b_L = v_L/2$.
Now the only way
for the auction $A_T$ to generate $(v_L-o(1))$ revenue from such rounds is if $x_H(v_H, v_L/2) - x_H(v_L,v_L/2) = 1-o(1).$ But if this is the case, then consider the valuation pair $(v_L/2, v_L/2+\nf1\Delta)$: the auctioneer allocates at most $x_H(v_L/2+\nf1\Delta, v_L/2) \leq o(1)$ per round, and gets almost no revenue from the high type. Moreover, the low type will generate at most $v_L/2$ revenue, so the the regret of the auctioneer
is at linear in $T$; this gives the desired contradiction.

Since $x_H(v_H,v_L) < 1$, let $q := 1-x_H(v_H,v_L)$. Then, $x_L(v_L,v_H) \leq q$ and so
$u_L(v_L,v_H) - u_L(v_L-\nicefrac{1}{\Delta}, v_H) \leq q \cdot \nf1\Delta \leq q$.
In order to cancel the effect
of the learning rate of $\eta_T$, we need to wait for $T_0 := \nicefrac{\Omega(1)}{(q\cdot \eta_T)}$ rounds. %\anupamnote{Say more.}
For some $C'_\Delta$ fraction of these $T_0$ rounds the agent of low type will bid
$v_L/2$, and an argument similar the previous paragraph shows that %\alert{as we argued before when this happens}
the revenue of the auction will be
at least $1/\Delta - o(1)$ less than $v_L$. Thus, the regret in these $T_0$ rounds
will be $\Omega(T_0),$ where we are hiding  constants depending on $\Delta$.
Let us assume that after $T_0$ rounds the low type starts bidding truthfully.
Then, the total regret in this period due to allocation of the item to the low type is $\Omega\left((T-T_0)\cdot q\right)$.
Summing up the two terms we get a regret of $\Omega\left(\nf{1}{(q\eta_T)} + q\cdot T - \nf{1}{\eta_T} \right).$
Since $\eta_T = \Theta(1/\sqrt{T})$, this is $\Omega(\sqrt{T}/q + qT - \sqrt{T})$, which for any choice of $q$
 is $\Omega\left(T^{3/4}\right).$
\end{proof}

% \section{Missing Proofs}
% \label{app:missing}

\begin{proof}[Proof of \Cref{thm:auction-schedule-regret-upper-bound}]
We will upper bound the auctioneer regret in the two epochs $\{1,\ldots,T_0\},$
and $\{T_0 + 1,\ldots, T\},$ separately, where $T_0 \in [T]$
is a parameter of the design which we will define shortly.
For the first epoch, we will
use the simple upper bound of $v_L \cdot T_0.$ 

Let us consider the bid distribution of the two bidders
after $T_0$ rounds. Since they are mean-based no-regret learners
we know that if 
\[
    \sum_{\tau = 1}^{T_0}v_i \cdot x_i(v_i,b_\tau) - p_i(v_i, b_\tau) \geq \sum_{\tau = 1}^{T_0}v_i \cdot x_i(b',b_\tau) - p_i(b',b_\tau) + \delta_T \cdot T,  \forall i \in \{1,2\},\forall b' \in B_\Delta \,, 
\]
then, by a union bound over the possible bids, they will both be 
bidding truthfully with probability at least $1-2\Delta \cdot \eta_T.$

We know that in every round $\tau \in [T_0]$ we have that
\begin{align*}
    v_i \cdot x_i(v_i,b_\tau) - p_i(v_i,b_\tau) &\geq v_i \cdot x_i(b',b_\tau) - p_i(b',b_\tau) + \gamma_{A}, \forall i \in \{0,1\}, b_\tau, b' \in B_\Delta^2, b'\neq v_i \,.
\end{align*}

Therefore, we set $T_0 = \min\{ t \in \nats: t\cdot \gamma_A \cdot t \geq 
\delta_T \cdot T\} = \nicefrac{\delta_T \cdot T}{\gamma_A}.$
Thus, we can upper bound the cumulative auctioneer regret by
\begin{align*}
    \wt{\reg}(A,\ldots,A,\mathrm{SPA},\ldots,\mathrm{SPA};v_L,v_H) &\leq v_L \cdot T_0 + v_L \cdot (T - T_0)\cdot 2\Delta \cdot \eta_T \\
    &\leq v_L \cdot \frac{\delta_T \cdot T}{\gamma_A} + v_L \cdot T \cdot 2\Delta \cdot \eta_T \\
    &= O\left(  \delta_T \cdot T \cdot \left(\frac{1}{\gamma_A} + \Delta \right)\right) \,,
\end{align*}
where the first inequality follows from the fact that with probability
at most $2\Delta\cdot \eta_T$ one of the two bidders will not be truthful
in the last $(T-T_0)$ rounds, and the other inequalities are just
algebraic manipulations.
\end{proof}

\begin{proof}[Proof of \Cref{prop:root-T-lower-bound-all-environments}]
It is not hard to see that in the setting we are working on 
the auctioneer cannot have negative auctioneer regret in any interval of the interaction. 
For instance, when $v_H = v_L - 1/\Delta,$ the SPA performs optimally. 
Since every $A_t, t \in [T],$ is a truthful auction, Myerson's 
lemma shows that 
\[
    u^t_i(v_i,b_{-i}) - u^t_i(b',b_{-i}) = \int_{z = b'}^{v_i} x^t_i(z,b_{-i}) dz - \left(v_i - b' \right) \cdot x^t_i(b', b_{-i}), \forall i \in \{1,2\}, \forall v_i, b',b_{-i} \in B^3_\Delta \,,
\]
so for $b' = v_i - 1/\Delta$ we get that
\[
    u^t_i(v_i,b_{-i}) - u^t_i(v_i - 1/\Delta,b_{-i}) \leq \frac{1}{\Delta} ,\forall v_i, b',b_{-i} \in B^3_\Delta \,.
\]
Thus, in every iteration the utility gain of bidding $v_i$ is at most $1/\Delta$
greater than bidding $v_i - 1/\Delta.$ Summing up over the first $T_0$ iterations, we get
that 
\[
    \sum_{t=1}^{T_0} \left( u^t_i(v_i,b_{-i}) - u^t_i(v_i - 1/\Delta,b_{-i})\right) \leq \frac{T_0}{\Delta} ,\forall v_i, b',b_{-i} \in B^3_\Delta \,.
\]
Let us now shift our attention to the weights that MWU puts on $v_i-1/\Delta, v_i,$
after $T_0$ iterations.
We have
\begin{align*}
    \frac{\Pr[b^{T_0}_i = v_i]}{\Pr[b^{T_0} = v_i-1/\Delta]} &= e^{\eta_T \sum_{t=1}^{T_0}\left(  u^t_i(v_i,b^t_{-i}) - u^t_i(v_i - 1/\Delta,b^{t}_{-i})\right)}\\
    &\leq e^{\eta_T \cdot \frac{T_0}{\Delta}} \,,
\end{align*}
so for $T_0 = \nicefrac{\Delta}{\eta_T}$ we have that
\[
    \Pr[b^{T_0} = v_i-1/\Delta] \geq \frac{\Pr[b^{T_0}_i = v_i]}{e} \,.
\]
This immediately implies that
\[
    \Pr[b^{t} = v_i-1/\Delta] \geq \frac{\Pr[b^{t}_i = v_i]}{e}, \forall t \in [T_0] \,.
\]
Thus, the probability of bidding truthfully of both algorithms
is bounded by $9/10.$ Thus, when $v_H = v_L + 1/\Delta$ when both bidders
are not bidding truthfully the revenue loss compared to $\mathrm{SPA}$
is at least $1/\Delta.$ Putting it together, we can see that
within the first $T_0$ rounds the total revenue loss compared to $\mathrm{SPA}$ is at least $C\cdot \nicefrac{1}{\Delta} \cdot T_0 = C\cdot \eta_T = C\cdot \sqrt{T},$ for some absolute constant $C>0.$
\end{proof}

Next, we show that the auction we defined in \Cref{def:staircase-auction} 
is optimal, in terms of its parameter $\gamma_A.$ 

\begin{lemma}\label{lem:optimal-gamma}
In the setting with two bidders it holds
that the optimal choice of the parameter $\gamma_A$
is
$\Theta\left(\nicefrac{1}{\Delta^2}\right).$
Moreover, the auction defined in \Cref{def:staircase-auction} achieves
that bound.
\end{lemma}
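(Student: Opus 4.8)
The plan is to prove the two directions separately: an upper bound $\gamma_A = O(1/\Delta^2)$ valid for \emph{every} truthful two-bidder auction, and a matching lower bound obtained by directly computing $\gamma_{\bar A}$ for the staircase auction of \Cref{def:staircase-auction}.

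For the upper bound, the key observation is that $\gamma_A$ is a \emph{minimum} over all single-player deviations, so it suffices to exhibit one deviation whose utility gain is $O(1/\Delta^2)$. I would fix a bidder $i$ and an opponent bid $b_{-i}$, and consider, for each value $v \ge 1/\Delta$, the downward deviation to $b_i = v - 1/\Delta$. The only property I would invoke is truthfulness: the incentive constraint preventing type $v - 1/\Delta$ from reporting $v$ gives $p_i(v,b_{-i}) - p_i(v-1/\Delta,b_{-i}) \ge (v-1/\Delta)\,[x_i(v,b_{-i}) - x_i(v-1/\Delta,b_{-i})]$, and substituting this into the utility gain of type $v$ from bidding truthfully rather than $v-1/\Delta$ bounds that gain by $\tfrac1\Delta\,[x_i(v,b_{-i}) - x_i(v-1/\Delta,b_{-i})]$. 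Crucially, this uses only the IC inequalities and no explicit payment formula, so it applies to any valid payment rule; this sidesteps the fact that the truthful payment is not uniquely pinned down in the discrete-type setting.

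Next I would apply a pigeonhole/averaging step to the allocation increments. Writing $d_k := x_i(k/\Delta, b_{-i}) - x_i((k-1)/\Delta, b_{-i})$ for $k = 1, \ldots, \Delta$, monotonicity (forced by truthfulness) makes each $d_k \ge 0$, and telescoping with $x_i \in [0,1]$ gives $\sum_k d_k \le 1$. Hence $\min_k d_k \le 1/\Delta$, and taking $v = k^\star/\Delta$ at the minimizing index yields a deviation of gain at most $\tfrac1\Delta \cdot \tfrac1\Delta = \tfrac1{\Delta^2}$, so $\gamma_A \le 1/\Delta^2$. For achievability I would simply evaluate $\gamma_{\bar A}$: the staircase allocation is $x_i(b_i,b_{-i}) = b_i/2$, independent of $b_{-i}$ and linear, so Myerson's payment is $p_i(b) = b^2/4$ and the utility gain of type $v$ from bidding truthfully rather than $b$ equals $\tfrac14(v-b)^2$. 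This is strictly positive for $b \ne v$ and is minimized over the grid at $|v-b| = 1/\Delta$, giving $\gamma_{\bar A} = \tfrac1{4\Delta^2} = \Theta(1/\Delta^2)$ (one checks feasibility, $x_1 + x_2 = (b_1+b_2)/2 \le 1$). Combining the two bounds shows the optimal value of $\gamma_A$ is $\Theta(1/\Delta^2)$.

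I expect the main obstacle to be the first direction — specifically, being careful that the upper-bound argument rests only on the incentive-compatibility inequalities rather than on any particular discrete payment formula, since in the discrete-type setting the truthful payment carries slack. Once the gain of the adjacent deviation is controlled by a single allocation increment purely via IC, the averaging argument over the $\Delta$ increments is routine, and the staircase computation is a direct verification.
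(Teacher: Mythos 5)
Your proof is correct and follows essentially the same route as the paper's: bound the gain of the adjacent downward deviation by $\frac{1}{\Delta}$ times the corresponding allocation increment, then pigeonhole over the $\Delta$ increments (which are nonnegative by monotonicity and telescope to at most $1$) to find one of size at most $\frac{1}{\Delta}$, yielding $\gamma_A \le \frac{1}{\Delta^2}$. Two small points in your favor: you derive the per-increment bound directly from the IC constraint of type $v - \frac{1}{\Delta}$, which is cleaner in the discrete-type setting where the truthful payment rule is only pinned down up to slack (the paper simply invokes Myerson's lemma); and you explicitly verify achievability for the staircase auction via $x_i(b) = \frac{b_i}{2}$, $p_i(b) = \frac{b_i^2}{4}$, and gain $\frac{1}{4}(v-b)^2 \ge \frac{1}{4\Delta^2}$ — a step that the lemma statement requires but that the paper's proof omits, relying instead on the informal remark in the main text.
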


\begin{proof}[Proof of \Cref{lem:optimal-gamma}]
Consider some auction $A$ and fix the bid of the second bidder to be $b' \in B_\Delta.$ Then,
$x_1(\cdot, b')$ is a non-decreasing function, 
with $0 \leq x_1(b,b') \leq 1,\forall b \in B_\Delta.$ Notice that
for any consecutive bids, Myerson's lemma shows that
\[
    u_1(b,b') - u_1(b-1/\Delta,b') \leq \nicefrac{1}{\Delta}\cdot \left(x_1(b,b') - x_1(b-1/\Delta,b') \right) \,.
\]
Since there are $1/\Delta$ different $b \in B_\Delta$ and the function 
$x_1(\cdot,b')$ is monotone and bounded between $[0,1]$ we have
\begin{align*}
    \sum_{b > 0} x_1(b,b') - x_1(b-1/\Delta,b') &= x_1(1,b') - x_1(0,b') \\
    &\leq 1\,,
\end{align*}
and since there are $1/\Delta$ terms in the summation, all of which are non-negative at least one of them must be at most $1/\Delta.$ Let
$b^*_1 \in B_\Delta$ be such that $x(b^*_1,b') - x(b^*-1/\Delta,b') \leq \frac{1}{\Delta}.$ Then, picking $v_1 = b^*_1$ witnesses that 
$\gamma_A \leq \frac{1}{\Delta^2}.$

% Going back to the auction from \Cref{def:staircase-auction}, we see that
% for any $i \in \{1,2\}, b_i, b_{-i} \in B^2_\Delta$ it holds that 
% \[
%     x_i(b_i, b_{-i}) - x_i(b_i-1/\Delta,b_{-i}) = 2/\Delta \,,
% \]
% which means that
% \[
%     u_i(b_i, b_{-i}) - u_i(b_i-1/\Delta,b_{-i}) = 2/\Delta^2 \,.
% \]
\end{proof}

\section{Extensions}\label{app:extensions}
In this section we discuss
potential extensions of our model
and adaptations of our results.

\paragraph{Extension to partial feedback setting.}
Our results can be adapted to the partial feedback
setting, with different quantitative bounds. In particular,
there are mean-based no-regret algorithms such as EXP3 \citep{braverman2018selling} with $\eta_T = \widetilde{O}(T^{1/4}).$ Notice that our positive results are stated for mean-based learners, so the guarantees hold in this setting as well.

\paragraph{Extension to multiple bidders.} 
We underline that
        our results in \Cref{sec:randomized-truthful-auctions} are already stated
        and proven for multiple bidders. For our upper
        bounds in \Cref{sec:no-regret-learning-auctioneer} there is a $1/n$ degradation to the auctioneer regret bound.
        When we are dealing with $n$ bidders 
        we can create
        a strictly IC auction $A$ by building upon
        our ``staircase auction''
        approach for two bidders in the following way:
        we select some bidder $i \in [n]$ uniformly at random
        (independently of their bids) and then we allocate to
        bidder $i$ with probability $b_i.$ Thus, for
        each bidder $i \in [n]$ their allocation probability $x_i(b)$
        is a linear function with $x_i(0) = 0, x_i(1) = 1/n.$
        Hence, Myerson's lemma shows that $u_i(v_i) - u_i(v_i - 1/\Delta) = \Theta(1/(n\Delta^2)),$ thus, $\gamma_A = \Theta(1/(n\Delta^2)).$ Recall that in the two-bidder
        case we have shown that this auction gives $\gamma_A = \Theta(1/\Delta^2),$ so the degradation in $\gamma_A$
        by $1/n$ leads to a degradation of the same factor
        in the auctioneer regret compared to the two-bidder setting.

\paragraph{Extension of regret bounds
to the distributional setting.}In \Cref{sec:no-regret-learning-auctioneer} we consider a setting where the auctioneer
        does not have any distributional knowledge
        about the valuation of the bidders. Notice
        that our lower bounds are witnessed
        by valuation pairs of the low type, high type,
        of the form $v_L = v, v_H = v+1/\Delta.$ Let us now 
        consider a distributional setting where $v_1, v_2$ are drawn
        from distributions $\mathcal{D}_1, \mathcal{D}_2,$
        and then the two bidders participate in repeated
        second-price auctions using MWU parametrized by these valuations. Similarly as in the prior-free setting, 
        the goal of the auctioneer is to have
        small expected regret, where the expectation
        is over the random draw of the valuations
        and the random behavior of MWU. Notice that the cumulative
        revenue of SPA when the bidders are truthful is $T \cdot \mathbb{E}_{v_1 \sim \mathcal{D}_1, v_2 \sim \mathcal{D}_2}[\min\{v_1,v_2\}],$ so this is the benchmark the auctioneer
        competes with (in this setting, 
        we can modify the benchmark to be SPA with
        personalized reserves with the same arguments).
        If these distributions $\mathcal{D}_1, \mathcal{D}_2,$ 
        place some constant probability (i.e.,
        independent of $T$)
        on every element of $\{0,1/\Delta,2/\Delta,\ldots,1\}$
        then with some constant probability we will see
        a draw of the form $v_L = v, v_H = v+1/\Delta$, so
        these pairs will be contributing a constant fraction
        of the expected revenue of the second-price auction, i.e.,
        the term $\mathbb{E}_{v_1 \sim \mathcal{D}_1, v_2 \sim \mathcal{D}_2}[\min\{v_1,v_2\}].$ Thus,
        if the auctioneer wants to have expected regret
        at most $O(R_T)$, they need to have regret at most
        $O(R_T)$ for all such valuation pairs, where in the
        notation $O(\cdot)$ we are suppressing all the parameters
        that do not depend on $T.$

% \input{omitted-details-randomized}
%%%%%%%%%%%%%%%%%%%%%%%%%%%%%%%%%%%%%%%%%%%%%%%%%%%%%%%%%%%%

\end{document}